\newcommand{\DTV}[2]{d_{\mathrm{TV}}\left({#1},{#2}\right)}
\newcommand{\E}[1]{\mathbb{E}\left[{#1}\right]}
\newcommand{\I}{\mathcal{I}}
\newcommand{\D}{\mathcal{D}}
\newcommand{\X}{\boldsymbol{X}}
\newcommand{\R}{\mathcal{R}}
\newcommand{\List}{\mathcal{L}}
\newcommand{\ReSample}{\textsf{Fix}}
\newcommand{\dist}{\mathrm{dist}}
\newcommand{\fsb}{w}
\newcommand{\Prob}{\Pr}
\newcommand{\Fss}{\mathcal{F}\text{ succeeds}}
\newcommand{\while}{\textbf{while}}
\newcommand{\abs}[1]{\left\vert#1\right\vert}
\def\*#1{\mathbf{#1}}
\def\+#1{\mathcal{#1}}
\def\-#1{\mathrm{#1}}
\def\Btime{\ensuremath{O(q^{\Delta^{\ell+1}})}}
\def\twoBtime{\ensuremath{O(q^{2\Delta^{\ell+1}})}}
\newtheorem{theorem}{Theorem}[section]
\newtheorem{observation}[theorem]{Observation}
\newtheorem{claim}[theorem]{Claim}
\newtheorem*{claim*}{Claim}
\newtheorem{condition}[theorem]{Condition}
\newtheorem{fact}[theorem]{Fact}
\newtheorem{property}[theorem]{Property}
\newtheorem{lemma}[theorem]{Lemma}
\newtheorem{proposition}[theorem]{Proposition}
\newtheorem{corollary}[theorem]{Corollary}
\theoremstyle{definition}
\newtheorem{definition}[theorem]{Definition}
\newtheorem{remark}[theorem]{Remark}
\newtheorem*{remark*}{Remark}
\crefname{theorem}{Theorem}{Theorems}
\crefname{observation}{Observation}{Observations}
\crefname{claim}{Claim}{Claims}
\crefname{condition}{Condition}{Conditions}
\crefname{algorithm}{Algorithm}{Algorithms}
\crefname{property}{Property}{Properties}
\crefname{example}{Example}{Examples}
\crefname{fact}{Fact}{Facts}
\crefname{lemma}{Lemma}{Lemmas}
\crefname{corollary}{Corollary}{Corollaries}
\crefname{definition}{Definition}{Definitions}
\crefname{remark}{Remark}{Remarks}
\crefname{proposition}{Proposition}{Propositions}
\crefname{equation}{equation}{equations}
\title{Perfect sampling from spatial mixing} 
\author{Weiming Feng}
\author{Heng Guo}
\author{Yitong Yin}
\address[Weiming Feng, Yitong Yin]{State Key Laboratory for Novel Software Technology, Nanjing University, 163 Xianlin Avenue, Nanjing, Jiangsu Province, China. \textnormal{E-mail: \url{fengwm@smail.nju.edu.cn, yinyt@nju.edu.cn}}}
\address[Heng Guo]{School of Informatics, University of Edinburgh, Informatics Forum, Edinburgh, EH8 9AB, United Kingdom. \textnormal{E-mail: \url{hguo@inf.ed.ac.uk}}}
\keywords{perfect sampling, Gibbs distribution, spatial mixing} 
\begin{document}

\begin{abstract}
  We introduce a new perfect sampling technique that can be applied to general Gibbs distributions and runs in linear time if the correlation decays faster than the neighborhood growth.
  In particular, in graphs with sub-exponential neighborhood growth like $\mathbb{Z}^d$, our algorithm achieves linear running time as long as Gibbs sampling is rapidly mixing.
  As concrete applications, we obtain the currently best perfect samplers for colorings and for monomer-dimer models in such graphs.

\end{abstract}

\maketitle

\section{Introduction}

Spin systems model nearest neighbor interactions of complex systems.
These models originated from statistical physics, and have found a wide range of applications in probability theory, machine learning, and theoretical computer science,
often under different names such as \emph{Markov random fields}.
Given an underlying graph $G=(V,E)$,
a \emph{configuration} $\sigma$ is an assignment from vertices to a finite set of spins, usually denoted by $[q]$.
The \emph{weight} of a configuration is specified by the $q$-dimensional vector $b_v$ assigned to each vertex $v \in V$  and the $q$-by-$q$ symmetric interaction matrix $A_e$ assigned to each edge $e \in E$, namely,
\begin{align}\label{eqn:spin-system}
  w(\sigma) = \prod_{v \in V}b_v(\sigma_v)\prod_{e = \{u,v\}\in E} A_e(\sigma_u, \sigma_v).
\end{align}
The equilibrium state of the system is described by the \emph{Gibbs distribution} $\mu$, 
where the probability of a configuration is proportional to its weight.

A central algorithmic problem related to spin systems is to sample from the Gibbs distribution.
A canonical Markov chain for sampling approximately from the Gibbs distribution is the \emph{Gibbs sampler} (a.k.a.~\emph{heat bath} or \emph{Glauber dynamics}).
One (conjectured) general criterion for the rapid mixing of this chain is the \emph{spatial mixing} property,
which roughly states that correlation decays rapidly in the system as distance increases.
It is widely believed that spatial mixing (in some form) implies the rapid mixing of the Gibbs sampler.
However, rigorous implications have only been established for special classes of graphs or systems,
such as for lattice graphs \cite{Mar99,DSVW04}, neighborhood amenable graphs \cite{GMP05}, or for monotone systems \cite{mossel2013exact}.

One main drawback for Gibbs samplers or Markov chains in general,
is that one needs to know the mixing time in advance to be able to implement them to make the error provably small.
The so-called \emph{perfect} samplers are thus more desirable,
which run in Las Vegas fashion and return exact samples upon halting.
There have been a number of techniques available to design perfect samplers,
such as Coupling From The Past (CFTP) \cite{propp1996exact} including the bounding chains \cite{huber2004perfect,bhandari2019perfect}, 
Randomness Recycler (RR) \cite{fill2000randomness}, and Partial Rejection Sampling (PRS) \cite{GJL19,FVY19}.
Nevertheless, none of these previous techniques addresses general spin systems or relates to the important spatial mixing properties of the system.

In this paper, we introduce a new technique to perfectly sample from Gibbs distributions of spin systems.
The correctness of our algorithm relies on only the conditional independence property of Gibbs distributions.
Moreover, the expected running time is linear in the size of the system, when the correlation decays more rapidly than the growth of the neighborhood.

\begin{theorem}[{informal}]  \label{thm:main-informal}
For any spin system with bounded maximum degree,\footnote{We remark that our algorithm remains in polynomial-time (but not in linear time) for graphs with unbounded degrees,
as long as the degree does not grow too quickly. 
This requirement on the degree comes from the cost of updating a block of certain radius,
similar to the cost of block dynamics,
and the exact upper bound needed varies from problem to problem.
See \Cref{theorem-general} and the discussion thereafter.
For the most part, we state our results for bounded degree cases to keep the statements clean.}
if strong spatial mixing holds with a rate faster than the neighborhood growth of the underlying graph,
then there exists a perfect sampler with running time $O(n)$ in expectation, where $n$ represents the number of vertices of the graph.
\end{theorem}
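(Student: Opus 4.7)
The plan is to design a vertex-by-vertex perfect sampler whose per-vertex work is controlled by the rate of strong spatial mixing relative to the growth of balls in $G$. At a high level, I would process the vertices of $V$ in an arbitrary order; at each step, I maintain a partial configuration on a set of already-sampled vertices, and I want to assign a value to the next vertex $v$ whose marginal is exactly the conditional marginal of $\mu$ given the current partial configuration. The key primitive is a \emph{local resolution procedure}: to sample $v$, I look at the sub-system restricted to a ball $B_r(v)$ of radius $r$, draw a trial value for $v$ from the Gibbs marginal of this truncated system, and then attempt to \emph{certify} via a coupling-style check at the boundary $\partial B_r(v)$ that this trial value would also have been produced under the full Gibbs measure. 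If certification fails, I grow $r$ and retry. The design ensures that the output at $v$ is an exact sample from its true conditional using nothing more than the conditional independence (Markov) property of Gibbs distributions.

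For correctness, I would exploit the Markov property: given any configuration on $\partial B_r(v)$, the marginal at $v$ depends only on this boundary. The certification step is set up as a coupling between the local marginal computed on $B_r(v)$ and the true marginal under $\mu$, such that coalescence implies the two marginals coincide on the value of $v$. A Metropolis-style accept/reject on the ratio of these local and global marginals — in the spirit of the \LLM\ primitive hinted by the macros \Pin, \ReSample, and \ResolveColoringUpdate — then yields an unbiased sample. Because the check examines only $B_r(v)$ and its boundary, it can be executed before the rest of the configuration is known, which is what makes the procedure a genuine perfect sampler.

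To bound the running time, I would analyze the probability $p_r$ that the certification at radius $r$ fails. Under strong spatial mixing with decay rate $\rho(r)$, two local samples on $B_r(v)$ conditioned on different boundary configurations disagree at $v$ with probability at most $\rho(r)$, so $p_r \le \rho(r)$. A single attempt at radius $r$ costs $O(|B_r(v)|) = O(g(r))$, where $g(r)$ is the neighborhood-growth function of $G$. The expected work per vertex is therefore bounded by
\[
  \sum_{r \ge 0} p_r \cdot g(r) \;\le\; \sum_{r \ge 0} \rho(r)\, g(r),
\]
and the hypothesis that the correlation decay is faster than neighborhood growth makes this sum $O(1)$. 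Summing over the $n$ vertices yields the claimed $O(n)$ expected running time, and the bounded-degree assumption keeps each per-block computation constant-time.

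The main obstacle is designing the certification/resampling step so that it is simultaneously (i) implementable with purely local information, (ii) provably unbiased for the true conditional marginal at $v$, and (iii) analyzable in terms of the SSM rate $\rho(r)$. One must ensure that the failure event at radius $r$ genuinely corresponds to a discrepancy that has propagated from beyond $B_r(v)$, so that SSM can be invoked cleanly, and one must handle correlations between consecutive failures so that enlarging $r$ does not waste the information already gathered. I expect the algorithm to be structured as a nested pin-and-resample scheme maintaining an \activeset\ of as-yet undetermined vertices, which is grown by $\ReSample$ operations until the SSM-based certificate guarantees its boundary is stabilized; the clean linear-time bound will then follow from this recursion together with the absolute convergence of $\sum_r \rho(r)g(r)$.
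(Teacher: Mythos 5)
Your proposal has a genuine gap at exactly the point you flag as the ``main obstacle'': the certification/accept--reject step that makes the vertex-by-vertex scheme unbiased is never constructed, and it is not a routine detail. The true conditional marginal at $v$ given the already-assigned vertices is a global quantity (all unassigned vertices are summed out over the whole remaining graph), so a ``Metropolis-style accept/reject on the ratio of the local and global marginals'' is not implementable with local information as stated; moreover, you do not say what happens to the distribution upon rejection, and conditioning on a failure event biases everything that has been revealed, which is precisely the correlation issue you defer. The paper resolves this with a different mechanism: it runs a repair process on a full configuration, maintaining a set $\R$ of ``incorrect'' vertices and the invariant that $X_{\overline{\R}}\sim\mu_{\overline{\R}}^{X_{\R}}$; the Bayes filter succeeds with probability $\mu_{\min}/\mu_u^{X_{\partial B}}(X_u)$, which is computable from the boundary alone, and the unknown global marginal $\mu_u^{X_{\R\setminus\{u\}}}(X_u)$ never needs to be computed because it enters only as a constant of proportionality that cancels (by Bayes' rule plus conditional independence). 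Upon failure the algorithm does \emph{not} retry with a larger radius; it keeps $\X$ unchanged and adds the revealed boundary $\partial B$ to $\R$, which is what preserves the conditional-Gibbs invariant. Your growing-radius retry scheme has no analogous argument, and making it unbiased is essentially the whole difficulty.

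There is also a quantitative error in your running-time analysis. A single attempt at radius $r$ does not cost $O(|B_r(v)|)$: drawing a trial value from the Gibbs marginal of the truncated system on $B_r(v)$ (and doing any worst-case-over-boundary coupling check) requires enumeration of order $q^{|B_r(v)|}$, not time linear in the ball. With the corrected cost, the sum $\sum_r \rho(r)\,q^{O(g(r))}$ does not converge under the hypothesis ``decay faster than neighborhood growth,'' so an unboundedly growing radius destroys the claimed $O(n)$ bound. The paper avoids this by fixing the block radius $\ell$ to a constant chosen once from the SSM rate versus the sphere size $|S_\ell(v)|$ (so each block update costs a constant $q^{O(\Delta^{\ell+1})}$), and by measuring progress not per vertex but through a drift argument: each iteration decreases $|\R|$ by at least $1/5$ in expectation (failure probability at most $\tfrac{1}{5|S_{\ell}(u)|}$ against at most $|S_{\ell}(u)|$ added vertices), and the optional stopping theorem gives an expected $O(n)$ number of iterations. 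Your outline would need both the unbiased local certification primitive and a cost model that accounts for exact computation on blocks before it could yield the theorem.
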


More details and undefined terms are explained in \Cref{section-results}.
Formal statements of our results are given in \Cref{theorem-sub-graph} for spin systems on sub-exponential neighborhood growth graphs, 
and in \Cref{theorem-general} for spin systems on general graphs.
Applications on list colorings and monomer-dimer models are given in \Cref{corollary-main-coloring} and \Cref{corollary-main-matching}.

Lattice graphs, such as $\mathbb{Z}^d$, are of special interests in statistical physics and combinatorics. 
These graphs have sub-exponential neighborhood growth,
which implies that temporal mixing is equivalent to spatial mixing on them~\cite{DSVW04}.
Therefore our sampler runs in linear time as long as the standard Glauber dynamics has $O(n\log n)$ mixing time.
This is a direct strengthening of aforementioned results~\cite{Mar99,DSVW04,GMP05} from approximate to perfect sampling, with an improved running time.

\begin{corollary}[{informal}]  \label{cor:main-informal}
For spin systems on graphs with sub-exponential neighborhood growth,
if the Gibbs sampler has $O(n\log n)$ mixing time, where $n$ is the number of vertices, or the system shows strong spatial mixing,
then there exists a perfect sampler with running time $O(n)$ in expectation.
\end{corollary}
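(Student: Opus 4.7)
The plan is to reduce \Cref{cor:main-informal} directly to \Cref{thm:main-informal}. Both hypotheses in the corollary need to be converted into a single condition: strong spatial mixing with a decay rate that strictly beats the neighborhood growth of the graph. Once that is in place, the main theorem produces the claimed $O(n)$ perfect sampler.

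First I would handle the easier branch: assume that the spin system already exhibits strong spatial mixing. By convention, SSM means the conditional correlation between a vertex $v$ and a pinning on a set $S$ decays at least exponentially in $\dist(v, S)$. Since the graph has sub-exponential neighborhood growth, the number of vertices at distance $r$ from any fixed vertex grows as $\exp(o(r))$, which is dominated by the exponential decay rate of SSM for all sufficiently large $r$. Hence the hypothesis of \Cref{thm:main-informal} is satisfied verbatim, and the conclusion follows.

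The more substantial branch is to convert the assumption ``Gibbs sampler has $O(n\log n)$ mixing time'' into SSM of exponential rate. On graphs of sub-exponential neighborhood growth, this equivalence is classical: it is precisely the Dyer--Sinclair--Vigoda--Weitz bridge \cite{DSVW04} (building on Martinelli's arguments for lattices \cite{Mar99}) between temporal mixing of the single-site heat-bath chain and strong spatial mixing. I would simply cite the appropriate direction of that equivalence to deduce exponential SSM from the $O(n\log n)$ mixing-time hypothesis, and then apply the previous paragraph to finish. One should also record that the constants produced by the DSVW reduction are uniform in $n$, since \Cref{thm:main-informal} needs a fixed decay rate to compare against the fixed neighborhood growth rate of the host graph.

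The main technical obstacle, should one want to make the argument fully quantitative rather than merely invoking \cite{DSVW04}, is to verify that the quantitative form of SSM extracted from rapid mixing is strong enough to beat the precise notion of neighborhood growth used in \Cref{theorem-general}. In the proof body I would therefore cross-reference the formal version \Cref{theorem-sub-graph} for sub-exponential graphs, where the matching of constants between the mixing-time assumption, the SSM decay, and the neighborhood growth bound is already carried out, and then observe that \Cref{cor:main-informal} is the verbal rendering of that statement.
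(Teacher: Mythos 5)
Your proposal matches the paper's own (informal) justification: the paper likewise invokes the equivalence of temporal and spatial mixing on sub-exponential neighborhood growth graphs from \cite{DSVW04} to reduce the $O(n\log n)$ mixing-time hypothesis to strong spatial mixing with exponential decay, and then applies the formal result \Cref{theorem-sub-graph} to obtain the linear-time perfect sampler. Your two-branch structure and the caveat about uniform constants are consistent with that argument, so this is essentially the same approach.
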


It is worth noting that many traditional perfect sampling algorithms,
especially those rely on CFTP \cite{propp1996exact,huber2004perfect,bhandari2019perfect},
suffer from ``non-interruptibility''. 
That is, early termination of the algorithm induces a bias on the sample.
In contrast, our algorithm is interruptible in the following sense: 
conditioned on its termination at any particular step, the algorithm guarantees to return a correct sample.
Therefore, had the algorithm been running for too long, one can simply stop it and restart.
One can also run many independent copies  in parallel and output the earliest returned sample without biasing the output distribution. 

In addition, our algorithm can be used to solve the recently introduced dynamic sampling problem~\cite{FVY19,feng2019dynamicMCMC},
where the Gibbs distribution itself changes dynamically and the algorithm needs to efficiently maintain a sample from the current Gibbs distribution.
The detail of this part is given in Section~\ref{section-dynamic}.
Our perfect sampler also generalizes straightforwardly to Gibbs distributions with multi-body interactions 
(namely spin systems on hypergraphs / constraint satisfaction problems),
and similar efficiency can be achieved when some appropriate variant of spatial mixing holds.

Morally, we believe that efficient perfect sampling algorithms exist whenever efficient approximate samplers exist.
Our result verifies this belief for spin systems on lattice graphs.
However, in general the gap between approximate and perfect sampling persists (e.g.~for sampling proper colorings \cite{bhandari2019perfect}),
and designing efficient perfect sampling algorithms matching their approximate counterparts remain an interesting research direction.

\subsection{Algorithm overview}\label{sec:alg-overview}
In this section, we give a simplified single-site version of our algorithm to illustrate the main ideas.

Recall that the Gibbs sampler is a Markov chain on state space $\Omega=[q]^V$. 
The chain starts from an arbitrary initial configuration $\X\in[q]^V$. At each step, a vertex $u\in V$ is picked uniformly at random and $\X$ is updated as following:
\begin{itemize}
\item the spin $X_u$ is redrawn according to the marginal distribution $\mu_u^{X_{\Gamma(u)}}$ induced at vertex $u$ by Gibbs distribution $\mu$ conditioned on the current spins of the neighborhood $\Gamma(u)$,
\end{itemize}  
where $\Gamma(u)\triangleq\{v\in V\mid \{u,v\}\in E\}$ denotes the neighborhood of $u$ in graph $G=(V,E)$.
It is a basic fact that this chain converges to the desired Gibbs distribution $\mu$.

Our perfect sampler makes use of the same update rule.
For ease of exposition, here we consider only soft constraints (where all $A_e$ and $b_v$ are positive).
The single-site version of our perfect sampler is described in Algorithm~\ref{alg:perfect-sampler-ss}.

{
\let\oldnl\nl
\newcommand{\nonl}{\renewcommand{\nl}{\let\nl\oldnl}}

\begin{algorithm}[htbp]
\SetKwInOut{Input}{Input}
\SetKwComment{Comment}{\quad$\triangleright$\ }{}
\SetKwIF{withprob}{}{}{with probability}{do}{}{}{}
Start from an arbitrary initial configuration $\boldsymbol{X} \in [q]^V$ and $\R \gets V$\;
\While{$\R \neq \emptyset$}{
	pick a $u\in\R$ uniformly at random\; 
let $\mu_{\min}$ be the minimum value of $\mu_{u}^{\sigma}(X_u)$ over all $\sigma\in[q]^{\Gamma(u)}$ that $\sigma_{\R\cap\Gamma(u)}=X_{\R\cap\Gamma(u)}$\;
%
\withprob{\,\,${\mu_{\min}}/{\mu_{u}^{X_{\Gamma(u)}}(X_u)}$\,\,}{
\nonl\vspace{-10pt}\hspace{206.6pt}\texttt{$\triangleright$ Bayes filter}

update $\X$ by redrawing $X_u\sim\mu_{u}^{{X_{\Gamma(u)}}}$;
\hspace{44.3pt}\texttt{$\triangleright$ Gibbs sampler update}\\
	 $\mathcal{R} \gets\mathcal{R} \setminus \{u\}$\;
}
\Else{
	$\mathcal{R} \gets\mathcal{R} \cup \Gamma(u)$\;
}
}
\Return{$\X$}\;
\caption{Perfect Gibbs sampler (single-site version)}\label{alg:perfect-sampler-ss}
\end{algorithm}
}

The algorithm starts from an arbitrary initial configuration $\boldsymbol{X}\in [q]^V$, 
and gradually ``repairs'' $\boldsymbol{X}$ to a perfect sample of the Gibbs distribution $\mu$.
We maintain a set $\mathcal{R}\subseteq V$ of vertices that currently ``incorrect'', initially set as $\R=V$.
At each step, a random vertex $u$ is picked from $\R$, and we try to remove $u$ from $\R$ while maintaining the following invariant, 
known as the \emph{conditional Gibbs property}:
\begin{align}
\begin{array}{l}
X_{\overline{\mathcal{R}}}\text{ always follows the law } \mu_{\overline{\mathcal{R}}}^{X_\mathcal{R}},\\
\text{which is the marginal distribution induced by $\mu$ on $\overline{\mathcal{R}}\triangleq V\setminus\mathcal{R}$ conditioned on $X_\mathcal{R}$}.
\end{array}\label{eq:invariant-conditional-gibbs}
\end{align}
This property ensures that the configuration on $\overline{\+R}$ follows the correct distribution conditioned on the configuration on $\+R$.
In particular, when $\mathcal{R}=\emptyset$, $\overline{\+R}=V$, $\mu_{\overline{\mathcal{R}}}^{X_\mathcal{R}}=\mu$,
and the sample $\boldsymbol{X}$ follows precisely the distribution $\mu$. 
This is the goal of our algorithm: to reduce $\+R$ to the empty set.

Our first attempt is simply to update the spin $X_u$ according to its marginal distribution $\mu_u^{{X_{\Gamma(u)}}}$ 
as in the Gibbs sampler and then remove $u$ from $\R$. 
This gives the transition $(\X,\R)\to(\X',\R')$,
where $\R'=\R\setminus\{u\}$, and $\X=\X'$ except at vertex $u$, where  $X'_u\sim\mu_u^{{X_{\Gamma(u)}}}$.

Ideally, if we had $X_{\overline{\R}}\sim\mu_{\overline{\R}}^{X_{\R'}}$, since $\R'=\R\setminus\{u\}$,
after  the spin of $u$ being updated as $X'_u\sim\mu_u^{{X_{\Gamma(u)}}}$, the partial configuration $X_{\overline{\R}}$ would be extended to a {$X'_{\overline{\R'}}\sim \mu_{\overline{\R'}}^{X_{\R'}}\equiv\mu_{\overline{\R'}}^{X'_{\R'}}$} as $X_{\R'}=X'_{\R'}$, giving us the invariant~\eqref{eq:invariant-conditional-gibbs} on the new pair $(\X',\R')$.

However, the invariant~\eqref{eq:invariant-conditional-gibbs} on the original $(\X,\R)$ only guarantees $X_{\overline{\R}}\sim\mu_{\overline{\R}}^{X_{\R}}$ rather than $X_{\overline{\R}}\sim\mu_{\overline{\R}}^{X_{\R'}}$.
%
To remedy this, we construct a filter $\mathcal{F}=\mathcal{F}(u,\boldsymbol{X})$ that corrects  $\mu_{\overline{\R}}^{X_{\R}}$ to $\mu_{\overline{\R}}^{X_{\R'}}$.
We call $\+F$ the \emph{Bayes filter}.
Specifically, $\mathcal{F}$ 
is determined by a biased coin depending on only part of $\X$ so that
\begin{align}
\Pr[\text{\,$\mathcal{F}$ succeeds\,}]
&\propto
\frac{\mu_{\overline{\R}}^{X_{\R'}}(X_{\overline{\R}})}{\mu_{\overline{\R}}^{X_{\R}}(X_{\overline{\R}})}
=
\frac{\mu_u^{X_{\R'}}(X_u)}{\mu_{u}^{X_{\Gamma(u)}}(X_u)},\label{eq:filter-property-1}
\end{align}
where $\propto$ is taken over all ${X_{\overline{\R}}}$ and
the equality is due to Bayes' theorem:
\begin{align*}
\mu_{\overline{\R}}^{X_{\R}}(X_{\overline{\R}})
&=
\mu_{\overline{\R}}^{X_{\R'}\land X_u}(X_{\overline{\R}})
=
{\mu_{u}^{X_{\overline{\R}}\land X_{\R'}}(X_u)\cdot\mu_{\overline{\R}}^{X_{\R'}}(X_{\overline{\R}})}/{\mu_u^{X_{\R'}}(X_u)},
\end{align*}
together with the \emph{conditional independence property} (formally, \Cref{property-cond-ind}) of Gibbs distribution which guarantees that  
$\mu_{u}^{X_{\overline{\R}}\land X_{\R'}}(X_u)=\mu_{u}^{X_{\Gamma(u)}}(X_u)$.

We observe that despite that the exact value of the marginal probability $\mu_{u}^{{X_{\R'}}}(X_u)$ in~\eqref{eq:filter-property-1} is hard to compute, it does not depend on $X_{\overline{\R}}$. 
Therefore, $\mu_{u}^{X_{\R'}}(X_u)$ can be treated as a constant and \eqref{eq:filter-property-1} holds as long as 
$\Pr[\text{\,$\mathcal{F}$ succeeds\,}]\propto{1}/{\mu_{u}^{X_{\Gamma(u)}}(X_u)}$, 
which is satisfied by the Bayes filter in Algorithm~\ref{alg:perfect-sampler-ss}.

Now by~\eqref{eq:filter-property-1}, conditioned on the success of the filter $\mathcal{F}$, the new $(\X',\R')$ satisfies the invariant~\eqref{eq:invariant-conditional-gibbs}.
Meanwhile, since the filter only reveals the neighborhood spins $X_{\Gamma(u)}$, upon failure of $\mathcal{F}$, the invariant \eqref{eq:invariant-conditional-gibbs} remains to hold as long as the revealed sites $\Gamma(u)$ are included into $\R$ and $\X$ is unchanged.

This sampler is valid for general Gibbs distributions,
since the only property we require is conditional independence. 
However for efficiency purposes our general algorithm, \Cref{alg:perfect-sampler-gen}, 
uses \emph{block} updates rather than the single-site updates in \Cref{alg:perfect-sampler-ss}. 
Block updates guarantee that the filter $\mathcal{F}$ always has a positive chance to succeed for Gibbs distributions with hard constraints,
and a suitable block length (chosen according to spatial mixing) is the key to efficiency of our analysis.
Moreover, to make sure that marginal distributions are well-defined,
we restrict our attention to \emph{permissive} systems, which contain all soft constraint systems as well as all hard constraint systems of interest.
The details are given in Section~\ref{section-results} and~\ref{sec:algorithm}.

The algorithm is efficient as long as the size of $\R$ shrinks in expectation in every step. 
For the more general \Cref{alg:perfect-sampler-gen}, this is implied by correlation decay faster than neighborhood growth. 
The details are in Section~\ref{section-running-time}.
%

Let us remark that it is surprising to us that the Gibbs sampler, studied for decades as the go-to approximate sampling algorithm, 
can be turned into a perfect sampler by simply adding a filter that accesses only local information.

\subsection{Related work and open problems}
The conditional Gibbs property have been used implicitly in previous works such as partial rejection sampling~\cite{GJL19,GJ18,GJ19a,GJ19b}, dynamic sampling~\cite{FVY19}, and randomness recycler~\cite{fill2000randomness}. Furthermore, the invariant of the conditional Gibbs property ensures that the sampling algorithm is correct even when the input spin system is dynamically changing over time~\cite{FVY19}.

Before our work, all previous resampling algorithms~\cite{fill2000randomness,GJL19,GJ18,GJ19a,GJ19b,FVY19} fall into the paradigm of rejection sampling: a new sample is generated, usually from modifying the old sample, and (part of) the new sample is rejected independently with some probability determined by the new sample. In our algorithm, the filtration is executed \emph{before} the generation of the new sample, with a bias independent of the new sample.

%


While our algorithm works for spin systems in general,
it still requires a rather strong notion of spatial mixing to be efficient.
Weaker forms of spatial mixing are known to imply rapid mixing of the Gibbs sampler for some particular systems. See for example \cite{MS13}.
If by ``efficient'' we allow algorithms whose running times are high degree polynomials and may depend on the degree of the graph,
then indeed even weak spatial mixing corresponds to the optimal threshold for efficient samplability in anti-ferromagnetic 2-spin systems \cite{Wei06,LLY13,SST14,SS14,GSV16}.
It remains to be an interesting open problem whether spatial mixing implies the existence of efficient samplers in general,
and whether these samplers can be perfect.
Our algorithm handles the failure of the filter in a very pessimistic way --- all revealed variables are considered ``incorrect'' and are added to $\R$.
Perhaps better algorithms can be obtained by handling the failure case more efficiently.

\section{Our results}
\label{section-results}
\subsection{Model and definitions}
Let $G=(V, E)$ be an undirected graph, and $[q]=\{1,2,\ldots,q\}$ a finite domain of $q\ge2$ spins. 
%
%
An instance of $q$-state spin system is specified by a tuple $\I=(G, [q], \boldsymbol{b}, \boldsymbol{A})$, where $\boldsymbol{b} = (b_v)_{v \in V}$ assigns every vertex $v\in V$ a  vector $b_v \in \mathbb{R}_{\geq 0}^q$ and 
$\boldsymbol{A} = (A_e)_{e \in E}$ assigns every edge $e\in E$ a symmetric matrix $A_e \in \mathbb{R}_{\geq 0}^{q\times q}$. 
The \emph{Gibbs distribution} $\mu_{\I}$ over $[q]^V$ is defined as
\begin{align}
\label{eq-def-Gibbs}
\forall \sigma \in [q]^V: \quad \mu_{\I}(\sigma) \triangleq \frac{w_{\I}(\sigma)}{Z_{\I}} = \frac{1}{Z_{\I}}\prod_{v \in V}b_v(\sigma_v)\prod_{e = \{u,v\}\in E} A_e(\sigma_u, \sigma_v),	
\end{align}
where $w_{\I}(\sigma)$ is the \emph{weight} defined in~\eqref{eqn:spin-system} and $Z_{\I} \triangleq \sum_{\sigma \in [q]^V} w_{\I}(\sigma)$ is the \emph{partition function}.

We restrict our attention to the so-called ``{permissive}'' spin systems, where the marginal distributions are always well-defined.  
%
 Let $\I=(G, [q], \boldsymbol{b}, \boldsymbol{A})$ be an instance of spin system. 
A configuration on $V$ is called \emph{feasible} if its weight is positive, and a partial configuration  is \emph{feasible} if it can be extended to a feasible configuration.
For any (possibly empty) subset $\Lambda \subseteq V$ and any (not necessarily feasible) partial configuration $\sigma \in [q]^\Lambda$, we use $\fsb_{\I}^{\sigma}\left(\tau\right)$ to denote the weight of $\tau\in[q]^{V \setminus \Lambda}$ conditional on $\sigma$:
\begin{align}
\label{eq-def-proper}
\fsb_{\I}^{\sigma}\left(\tau\right) 
  = \prod_{v \in V \setminus \Lambda}b_v(\tau_v)
  \prod_{\substack{e=\{u,v\} \in E \\ u, v \in V \setminus \Lambda}}A_e(\tau_u, \tau_v)
  \prod_{\substack{e=\{u,v\} \in E \\ u \in \Lambda, v\in V  \setminus \Lambda}}A_e(\sigma_u, \tau_v).
\end{align}
Define the partition function  $Z_{\I}^{\sigma}$ conditional on $\sigma$ as
$Z_{\I}^{\sigma} \triangleq \sum_{\tau \in [q]^{V \setminus \Lambda}}\fsb_{\I}^{\sigma}\left( \tau\right)$.	

\begin{definition}[permissive]\label{definition-locally-admissible}
A spin system $\I=(G, [q], \boldsymbol{b}, \boldsymbol{A})$, where $G=(V,E)$, is called \emph{permissive} if $Z_{\I}^{\sigma} > 0$ for any partial configuration $\sigma \in [q]^{\Lambda}$ specified on any subset $\Lambda \subseteq V$. 
\end{definition}

%
%
%
Permissive systems are very common, 
including, for examples, 
uniform proper $q$-coloring when $q \geq \Delta + 1$, where $\Delta$ is the maximum degree,
and spin systems with soft constraints, e.g.~the Ising model, or with a ``permissive'' state that is compatible with all other states, e.g.~the hardcore model.

For permissive systems, a feasible configuration is always easy to construct by greedy algorithm.
More importantly, with permissiveness, marginal probabilities are always well defined, which is crucial for Gibbs sampler and spatial mixing property.

Formally, we use $\mu^\sigma_{\I}$ to denote the conditional distribution over $[q]^{V \setminus \Lambda}$ given $\sigma \in [q]^{\Lambda}$, that is,
\begin{align}
\label{eq-def-conditional-Gibbs}
\forall \tau \in [q]^{V \setminus \Lambda},\quad 	\mu^\sigma_{\I}(\tau) \triangleq \frac{\fsb_{\I}^{\sigma}\left( \tau \right)}{Z_{\I}^{\sigma}}.
\end{align}
And for any $v \in V \setminus \Lambda$, we use $\mu_{v,\I}^\sigma$ to denote the marginal distribution at $v$ projected from $\mu^\sigma_{\I}$. 


For any $u,v \in V$, we use $\dist_G(u,v)$ to denote the shortest-path distance between $v$ and $u$ in $G$. 

\begin{definition}[strong spatial mixing~\cite{weitz2006counting,weitz2004mixing}]
\label{definition-standard-SSM}
Let $\delta:\mathbb{N}\to\mathbb{R}^+$.
A class $\mathfrak{I}$ of permissive spin systems is said to exhibit \emph{strong spatial mixing} with rate $\delta(\cdot)$ if for every instance $\I=(G, [q], \boldsymbol{b}, \boldsymbol{A})\in\mathfrak{I}$, where $G=(V,E)$, for every $v \in V$, $\Lambda \subseteq V$, and any two partial configurations $\sigma, \tau \in [q]^\Lambda$, 
\begin{align}
\label{eq:condition-standard-SSM}
\DTV{\mu_{v, \I}^\sigma}{\mu_{v, \I}^\tau} \le \delta(\ell),
\end{align}
where $\ell = \min\{\dist_G(v, u) \mid u \in \Lambda,\ \sigma_u\neq\tau_u\}$, and $\DTV{\mu_{v, \I}^\sigma}{\mu_{v, \I}^\tau}\triangleq \frac{1}{2}\sum_{a \in [q]}\left\vert\mu_{v, \I}^\sigma(a)-\mu_{v, \I}^\tau(a) \right\vert$
denotes the total variation distance between $\mu_{v, \I}^\sigma$ and $\mu_{v, \I}^\tau$.
In particular, $\mathfrak{I}$ exhibits \emph{strong spatial mixing with exponential decay} if \eqref{eq:condition-standard-SSM} is satisfied for $\delta(\ell)= \alpha\exp(-\beta\ell)$ for some constants $\alpha,\beta>0$.
\end{definition}



Our first result holds for spin systems on graphs with bounded neighborhood growth.

\begin{definition}[sub-exponential neighborhood growth]\label{definition-sub-exp-graph}
A class $\mathfrak{G}$ of graphs is said to have \emph{sub-exponential neighborhood growth} if there is a function $s:\mathbb{N}\to\mathbb{N}$ such that $s(\ell)=\exp(o(\ell))$ and for every graph $G=(V,E)\in \mathfrak{G}$,
\begin{align*}
\forall v \in V, \forall \ell \geq 0, \quad  |S_\ell(v)| \leq s(\ell),
\end{align*}
where $S_{\ell}(v)\triangleq \{u \in V \mid \dist_G(v, u) = \ell\}$ denotes the {sphere} of radius $\ell$ centered at $v$ in $G$.
\end{definition}
Note that graphs with sub-exponential neighborhood growth necessarily have bounded maximum degree because we can set $\ell=1$ and get $s(1)=O(1)$. 

%

\subsection{Main results}
Our first result shows that for spin systems on graphs with sub-exponential neighborhood growth, strong spatial mixing implies the existence of linear-time perfect sampler.

\begin{theorem}[main theorem: bounded-growth graphs]
\label{theorem-sub-graph}
Let $q>1$ be a finite integer and $\mathfrak{I}$ a class of permissive $q$-state spin systems on graphs with sub-exponential neighborhood growth.
If $\mathfrak{I}$ exhibits strong spatial mixing with exponential decay, then there exists an algorithm such that given any instance $\I=(G,[q],\boldsymbol{b},\boldsymbol{A})\in \mathfrak{I}$, the algorithm outputs a perfect sample from $\mu_{\I}$ within $O\left( n \right)$ time in expectation, where $n$ is the number of vertices in $G$. 
\end{theorem}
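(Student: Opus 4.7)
The plan is to invoke the general block-update perfect sampler of \Cref{theorem-general} (the block analogue of \Cref{alg:perfect-sampler-ss}), instantiating its block radius $\ell$ as a sufficiently large constant depending on the strong spatial mixing rate $\delta(\cdot)=\alpha e^{-\beta\cdot}$ and the neighborhood-growth function $s(\cdot)$. Two things need to be established: correctness, and an expected $O(n)$ total running time.

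Correctness is the block analogue of the single-site argument in \Cref{sec:alg-overview}. Writing $B$ for the $\ell$-ball around the chosen vertex $u\in\+R$, I would maintain the conditional Gibbs invariant \eqref{eq:invariant-conditional-gibbs} $X_{\overline{\+R}}\sim\mu_{\overline{\+R}}^{X_{\+R}}$ throughout the execution. On filter success, the Bayes filter together with the conditional independence property of $\mu_{\I}$ guarantees that the updated pair $(\X',\+R')$ with $\+R'=\+R\setminus B$ still satisfies \eqref{eq:invariant-conditional-gibbs}; on failure, the filter only \emph{inspects} variables in a bounded neighborhood of $u$ without modifying $\X$, so including those revealed vertices into $\+R$ preserves the invariant. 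Permissiveness ensures that the filter's acceptance probability is strictly positive for every block configuration encountered. Once $\+R=\emptyset$, $\X$ is distributed exactly as $\mu_{\I}$, and the algorithm is interruptible.

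For the expected running time, the key estimate is a lower bound on the filter's per-step success probability at block radius $\ell$. Because the boundary of $B$ lies at distance $\ell$ from $u$, strong spatial mixing implies that the conditional marginal of $X_B$ under $\mu_{\I}$ given the unknown values on $\Gamma(B)\cap\overline{\+R}$ differs by at most $O(\delta(\ell))$ in total variation from the marginal computed using only the information in $\+R$. Combined with a permissiveness-based uniform lower bound on the block marginals, this yields a per-step failure probability $p(\ell)=e^{-\Omega(\ell)}$. On success, a constant-size chunk around $u$ is removed from $\+R$; on failure, at most $|B|+\sum_{i=\ell+1}^{2\ell+1}s(i)=e^{o(\ell)}$ vertices are added. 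Since $s(\ell)=e^{o(\ell)}$ grows strictly slower than $p(\ell)^{-1}$ decays, I fix $\ell$ so that the expected change in $|\+R|$ per step is a negative constant. A standard supermartingale / stopping-time argument starting from $|\+R|\le n$ then gives $O(n)$ expected steps, and since each step does $O(1)$ work (depending only on $q$ and $\ell$), the total expected running time is $O(n)$.

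The main obstacle I anticipate is converting the \emph{total-variation} form of strong spatial mixing into a \emph{multiplicative} bound on the ratio of marginal point probabilities that controls the Bayes filter. TV bounds alone yield additive estimates, whereas the filter's acceptance probability is a ratio of two marginals on the block. To pass to multiplicative control one needs a uniform positive lower bound on the relevant block marginals, which permissiveness supplies but must be quantified jointly with the spatial mixing rate so that the resulting $p(\ell)$ still decays as $e^{-\Omega(\ell)}$ rather than being offset by a factor blowing up with $\ell$. Once this step is carried out, the sub-exponential growth hypothesis does the rest, guaranteeing a constant negative drift and hence linear-time termination in expectation.
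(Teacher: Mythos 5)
Your overall architecture is the same as the paper's: run the block version of the sampler with a constant block radius $\ell$ chosen from the decay rate and the growth function, prove correctness via the conditional Gibbs invariant~\eqref{eq:invariant-conditional-gibbs} plus conditional independence (\Cref{property-cond-ind}), and prove efficiency via a constant negative drift on $|\+R|$ and optional stopping. The correctness sketch and the drift bookkeeping are essentially right (minor inaccuracies: on success only $u$ leaves $\+R$, which coincides with your $\+R\setminus B$ since $B\cap\+R=\{u\}$; on failure only $\partial B\setminus\+R\subseteq S_{\ell+1}(u)$ is added, not all of $B$ --- neither affects the argument).

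The genuine gap is exactly the step you flag as the ``main obstacle'' and then assume away: converting the total-variation form of strong spatial mixing into multiplicative control of the filter ratio $\mu_{\min}/\mu_{u,\I}^{X_{\partial B}}(X_u)$. You propose to do this via a ``permissiveness-based uniform lower bound on the block marginals,'' but permissiveness (\Cref{definition-locally-admissible}) is purely qualitative: it gives $Z_{\I}^{\sigma}>0$, hence well-defined positive marginals, and supplies \emph{no} quantitative bound uniform over the class $\mathfrak{I}$ --- the entries of $\boldsymbol{b},\boldsymbol{A}$ are arbitrary nonnegative reals, so conditional marginals can be arbitrarily small even with $q$ and $\Delta$ fixed. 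Assuming such a lower bound $\gamma$ is precisely what distinguishes the paper's general-graph result (\Cref{theorem-general} via \Cref{condition-lower-bound}); \Cref{theorem-sub-graph} deliberately avoids it. The paper closes the gap with \Cref{proposition-SSM} (and \Cref{claim-exist-rho}): to compare $\mu_{v,\I}^{\sigma}(a)$ with $\mu_{v,\I}^{\tau}(a)$ when $\sigma,\tau$ differ only at distance $\ell$, condition on the half-distance sphere $S_{\lfloor\ell/2\rfloor}(v)$, build a configuration $\rho$ there greedily vertex-by-vertex so that each conditional marginal of the chosen spin is at least $1/q$ by pigeonhole, and apply the additive SSM bound $\delta(\lfloor\ell/2\rfloor)$ term-by-term in a telescoping product; this yields $\bigl\vert \mu_{v,\I}^{\sigma}(a)/\mu_{v,\I}^{\tau}(a)-1\bigr\vert \le 10q\,\vert S_{\lfloor\ell/2\rfloor}(v)\vert\,\delta(\lfloor\ell/2\rfloor)$ with no marginal lower bound at all. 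The price, a factor $\vert S_{\lfloor\ell/2\rfloor}(v)\vert$, is then absorbed by the exponential decay exactly because of sub-exponential growth (see~\eqref{eq-set-ell}), which lets one fix a constant $\ell_0$ satisfying \Cref{condition-SSM-ratio} and invoke \Cref{theorem-general-ratio}. Without this conversion (or some substitute for it), your claimed per-step failure probability $p(\ell)=e^{-\Omega(\ell)}$ is unsubstantiated, so the efficiency half of your proof is incomplete.
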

\noindent
The factor in $O(\cdot)$ is the cost for a block update in (block) Gibbs sampler with $\ell_0$-radius blocks, where $\ell_0=O(1)$ is determined by both rates of correlation decay and neighborhood growth (by~\eqref{eq-set-ell}).

%
It is already known that for spin systems on sub-exponential neighborhood growth graphs,  the strong spatial mixing with exponential decay implies $O(n \log n)$ mixing time for block Gibbs sampler \cite{DSVW04}, which only generates approximate samples. We give a perfect sampler with $O(n)$ expected running time under the same condition.
The notion of sub-exponential neighborhood growth is related to, but should be distinguished from neighborhood-amenability (see e.g.~\cite{goldberg2005strong}), 
which says that in an infinite graph, for any constant real $c > 0$, there is an $\ell$ such that $\frac{|S_{\ell+1}(v)|}{|B_{\ell}(v)|} \leq c$ holds everywhere.
%

Our main result on general graphs assumes the following strong spatial mixing condition.

\begin{condition}
\label{condition-lower-bound}
%
%
%
Let $\I=(G, [q], \boldsymbol{b}, \boldsymbol{A})$ be a permissive spin system where $G=(V,E)$.
There is an integer $\ell=\ell(q)\ge 2$ such that the following holds:
for every $v \in V$, $\Lambda \subseteq V$, 
%
%
for any two partial configurations  $\sigma, \tau \in [q]^\Lambda$ satisfying $\min\left\{\dist_G(v, u) \mid u \in \Lambda,\ \sigma(u)\neq\tau(u)\right\} = \ell$,
\begin{align}
\label{eq:condition-stronger-SSM-general}
\DTV{\mu_{v, \I}^\sigma}{\mu_{v, \I}^\tau} 
\leq \frac{\gamma}{ 5|S_{\ell}(v)|},
\end{align}
where 
$S_{\ell}(v)$ denotes the sphere of radius $\ell$ centered at $v$ in $G$, and 
\begin{align}
\label{eq:condition-stronger-lower}
\gamma=\gamma(v,\Lambda)
\triangleq
\min\left\{\mu_{v,\I}^{\rho}(a)\mid \rho \in [q]^{\Lambda}, 
 a\in[q]\text{ that }\mu_{v,\I}^{\rho}(a)>0\right\}
\end{align}
denotes the lower bound of positive marginal probabilities at $v$.
\end{condition}



The above condition basically says that the spin systems exhibit strong spatial mixing with a decay rate faster than that of neighborhood growth, given that the marginal probabilities are appropriately lower bounded (which holds with $\gamma=\Omega(1)$ when entries of $\boldsymbol{A}$ and $\boldsymbol{b}$ are of finite precision and the maximum degree $\Delta$ is finitely bounded).
%
%
Our result on general graphs is stated as the following theorem. 


\begin{theorem}[main theorem: general graphs]
\label{theorem-general}
Let $\mathfrak{I}$ be a class of permissive spin systems satisfying Condition~\ref{condition-lower-bound}.
There exists an algorithm which given any instance $\I=(G,[q],\boldsymbol{b},\boldsymbol{A})\in\mathfrak{I}$, outputs a perfect sample from $\mu_{\I}$ within $ n\cdot q^{O\left(\Delta^\ell \right)} $ time in expectation, 
where $n$ is the number of vertices in $G$, $\Delta$ is the maximum degree of $G$, and $\ell=\ell(q)$ is determined by Condition~\ref{condition-lower-bound}. 
\end{theorem}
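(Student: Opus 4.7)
The plan is to prove \Cref{theorem-general} by designing a block-update generalisation of the single-site \Cref{alg:perfect-sampler-ss}, with block radius equal to the integer $\ell=\ell(q)$ supplied by \Cref{condition-lower-bound}. At each step, I would pick $u\in\R$ uniformly, expose the spins on the sphere $S\triangleq S_\ell(u)$, and define $\mu_{\min}(u)$ as the minimum of $\mu_{u,\I}^{\sigma}(X_u)$ over all feasible $\sigma\in[q]^{S}$ that agree with $\X$ on $\R\cap S$. The Bayes filter then succeeds with probability $\mu_{\min}(u)/\mu_{u,\I}^{X_S}(X_u)$; on success I draw $X_u\sim\mu_{u,\I}^{X_S}$ and remove $u$ from $\R$, and on failure I add $S\setminus\R$ to $\R$ and leave $\X$ unchanged. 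Permissiveness makes $\mu_{u,\I}^{X_S}$ well-defined, and since $S$ separates $u$ from $V\setminus B_\ell(u)$, the marginal depends only on spins inside $B_\ell(u)$ and is computable in $q^{O(\Delta^\ell)}$ time per step by brute enumeration of the interior.

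For correctness I would verify, by induction on steps, the invariant $X_{\overline{\R}}\sim\mu_{\overline{\R},\I}^{X_\R}$. The key identity is the block-level conditional independence $\mu_{u,\I}^{X_{\overline{\R}}\cup X_{\R'}}(X_u)=\mu_{u,\I}^{X_S}(X_u)$, valid because every path from $u$ to $V\setminus B_\ell(u)$ passes through $S$. A Bayes calculation identical in form to \eqref{eq:filter-property-1} then shows that the coin's bias, being proportional to $1/\mu_{u,\I}^{X_S}(X_u)$, corrects $\mu_{\overline{\R},\I}^{X_\R}$ to $\mu_{\overline{\R},\I}^{X_{\R'}}$, where $\R'=\R\setminus\{u\}$; the subsequent draw $X_u\sim\mu_{u,\I}^{X_S}$ extends this to the required $\mu_{\overline{\R'},\I}^{X_{\R'}}$. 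On failure the coin has revealed only information about $X_S$ and $X_u$, so replacing $\R$ by $\R\cup S$ restores the invariant without altering $\X$.

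For efficiency I would perform a one-step drift analysis on $|\R|$. Applying \Cref{condition-lower-bound} to two extensions $\sigma,\tau\in[q]^S$ of $X_{\R\cap S}$ yields
\begin{equation*}
\abs{\mu_{u,\I}^{\sigma}(X_u)-\mu_{u,\I}^{\tau}(X_u)}\le \DTV{\mu_{u,\I}^{\sigma}}{\mu_{u,\I}^{\tau}}\le\frac{\gamma}{5|S_\ell(u)|},
\end{equation*}
while feasibility of $\X$ forces $\mu_{u,\I}^{X_S}(X_u)\ge\gamma$. Hence the filter succeeds with probability $p\ge 1-\tfrac{1}{5|S_\ell(u)|}$, and the expected increment of $|\R|$ in one step is at most $-p+(1-p)|S_\ell(u)|\le -\tfrac{4}{5}$. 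Starting from $|\R|=n$, the expected number of steps is therefore $O(n)$, and the total expected running time is $n\cdot q^{O(\Delta^\ell)}$ as claimed.

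The main obstacle is the block-level correctness argument: one must pin down exactly which bits of $\X$ the filter coin depends on, and couple the algorithm with a hypothetical sample from $\mu_{\I}$ conditioned on the current $X_\R$, so as to justify that the failure branch \emph{reveals} only $X_S$ and $X_u$. Once this coupling is set up, the block-level Bayes identity generalises verbatim from the single-site argument of \Cref{sec:alg-overview}, and the drift bound above closes the theorem.
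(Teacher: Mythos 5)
Your efficiency part is in the spirit of the paper (Condition~\ref{condition-lower-bound} plus feasibility gives a multiplicative ratio bound, hence a failure probability at most $\tfrac{1}{5|S_\ell(u)|}$, hence a constant negative drift on $|\R|$ and $O(n)$ iterations by optional stopping), but the algorithm you build it on is not correct, and the flaw is exactly at the step you flag as ``the main obstacle''. Your scheme redraws only $X_u$ from $\mu_{u,\I}^{X_S}$ with $S=S_\ell(u)$, and the claimed identity $\mu_{u,\I}^{X_{\overline{\R}}\uplus X_{\R'}}(X_u)=\mu_{u,\I}^{X_S}(X_u)$ is false: conditional independence lets you drop the conditioning \emph{outside} the separator $S$, i.e.\ $\mu_{u,\I}^{X_{V\setminus\{u\}}}=\mu_{u,\I}^{X_{B_\ell(u)\setminus\{u\}}}=\mu_{u,\I}^{X_{\Gamma(u)}}$, but it does not let you drop the vertices \emph{inside} the ball; $\mu_{u,\I}^{X_S}$ marginalizes over the annulus $B_{\ell-1}(u)\setminus\{u\}$ (including vertices of $\R$ there, whose values are fixed and must be conditioned on, not averaged out). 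Consequently the Bayes cancellation needs a coin with bias $\propto 1/\mu_{u,\I}^{X_{\Gamma(u)}}(X_u)$, not $\propto 1/\mu_{u,\I}^{X_{S_\ell(u)}}(X_u)$, and no filter that looks only at $X_S$ and $X_u$ can repair the single-site redraw. Concretely, take a path $v_1\,\text{--}\,v_2\,\text{--}\,v_3$ with a soft Ising-type interaction, $\R=\{v_1\}$, $u=v_1$, $\ell=2$, so $S=\{v_3\}$ and the invariant gives $(X_{v_2},X_{v_3})\sim\mu^{X_{v_1}}_{\{v_2,v_3\}}$. Your filter weight and your redraw of $X_{v_1}$ both depend only on $X_{v_3}$, so in the output $X_{v_2}$ is conditionally independent of the new $X_{v_1}$ given $X_{v_3}$ and its conditional law is still the one tied to the \emph{old} value of $X_{v_1}$; under $\mu_\I$, however, $X_{v_2}$ genuinely depends on $X_{v_1}$ given $X_{v_3}$. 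Hence the output is biased no matter how the coin is chosen.

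The paper's \Cref{alg:perfect-sampler-gen} avoids this by resampling the whole unrevealed block $B=(B_\ell(u)\setminus\R)\cup\{u\}$ from $\mu_{B,\I}^{X_{\partial B}}$, not just the single site $u$: then the joint law of $u$ together with the untouched interior is regenerated consistently with the boundary, and the filter denominator $\mu_{u,\I}^{X_{\partial B}}(X_u)$ really does equal $\mu_{u,\I}^{X_{V\setminus B}}(X_u)$ (the interior being integrated out is legitimate there, because under the invariant those vertices are random and none of them lies in $\R$). This is exactly what makes the analogue of \eqref{eq:filter-property-1} go through in \Cref{claim-correctness}, with $\partial B\setminus\R\subseteq S_{\ell+1}(u)$ (\Cref{fact-0}) so that the block radius is taken as $\ell-1$ to invoke \Cref{condition-lower-bound} (via \Cref{condition-SSM-ratio}) at distance exactly $\ell$. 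Your drift computation and the per-step cost $q^{O(\Delta^\ell)}$ would then carry over essentially as in \Cref{theorem-general-ratio}, but as written the correctness argument does not close, and fixing it forces you back to the paper's block update.
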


The $q^{O(\Delta^\ell)}$ factor in the time cost is contributed by the block Gibbs sampler update on $\ell$-radius blocks.
This extra cost could remain polynomial if $q=\omega(1)$, but the upper bound on $q$ for that will vary from problem to problem.
See e.g.~the discussion after \Cref{corollary-main-coloring}.
The conditions of both Theorem~\ref{theorem-sub-graph}
and Theorem~\ref{theorem-general} 
are special cases of a more technical condition (\Cref{condition-SSM-ratio}),
formally stated in Section~\ref{section-running-time}.

\subsection{Applications on specific systems}
Our results can be applied on various spin systems.
We consider two important examples:

\begin{itemize}
\item{\bf Uniform list coloring:}
A list coloring instance is specified by $\I=(G, [q],\List)$, where $\List \triangleq \{L_v\subseteq [q] \mid v\in V \}$ assigns each vertex $v\in V$ a list of colors $L_v\subseteq [q]$.
A proper list coloring of instance $\I$ is a $\sigma \in [q]^V$ that $\sigma_v \in L_v$ for all $v \in V$ and $\sigma_u \neq \sigma_v$ for all $\{u,v\} \in E$.
Let $\mu_{\I}$ denote the uniform distribution over all proper list colorings of $\I$.

\item{\bf The monomer-dimer model:} 
The monomer-dimer model defines a distribution over graph matchings.
An instance is specified by $\I = (G, \lambda)$, where $G = (V, E)$ is a graph and $\lambda> 0$. 
Each matching $M\subseteq E$ in $G$ is assigned a weight $w_{\I}(M) = \lambda^{|M|}$. 
Let $\mu_{\I}$ be the distribution over all matchings in $G$ such that $\mu_{\I}(M) \propto w_{\I}(M)$.
\end{itemize}
We use $\deg_G(v)$ to denote the degree of $v$ in $G$ and $\Delta=\max_{v\in V}\deg_G(v)$ the maximum degree.

First, for the list coloring problem, we define the following two conditions for instance $\I=(G,[q],\+L)$.
Let $\deg_G(v)$ denote the degree of $v \in V$ in graph $G$, and $\Delta\triangleq\max_{v\in V}\deg_G(v)$ the maximum degree.
%


\begin{condition}
\label{condition-subexp}
For every $v \in V$, 
$|L(v)|	\geq \alpha \deg_G(v) + \beta$,
where either one of the followings holds:
\begin{itemize}
\item $\alpha = 2$ and $\beta = 0$;
\item $G$ is triangle-free,  $\alpha > \alpha^*$ where $\alpha^* = 1.763\cdots$ is the positive root of $x^x = \mathrm{e}$,
and $\beta \geq \frac{\sqrt{2}}{\sqrt{2}-1}$ satisfies $(1-1/\beta) \alpha \mathrm{e}^{\frac{1}{\alpha}(1-1/\beta)} > 1$.
\end{itemize}

\end{condition}

\begin{condition}
\label{condition-list-coloring}
For every $v \in V$, 
$|L(v)| \geq \Delta^2 - \Delta + 2$.
\end{condition}

Our perfect sampler for list coloring runs in linear time in either above condition.

\begin{theorem}
\label{corollary-main-coloring}
Let $\mathfrak{L}$ be a class of list coloring instances with at most $q$ colors for a finite $q>0$. 
If either of the two followings holds for all instances $\I=(G,[q],\List)\in\mathfrak{L}$:
\begin{itemize}
\item \Cref{condition-subexp}  
and $G$ has sub-exponential neighborhood growth; or
\item \Cref{condition-list-coloring},
\end{itemize}
then there exists a perfect sampler for $\mu_{\I}$ that runs in expected $O\left( n \right)$ time, where $n$ is the number of vertices.
\end{theorem}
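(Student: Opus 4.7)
The plan is to reduce both cases of the theorem to the already-established main theorems of the paper: Case~1 (\Cref{condition-subexp} plus sub-exponential neighborhood growth) to \Cref{theorem-sub-graph}, and Case~2 (\Cref{condition-list-coloring}) to \Cref{theorem-general} with $\ell=2$. In either case the work splits cleanly into (i) checking that the list coloring instance is permissive in the sense of \Cref{definition-locally-admissible}, and (ii) verifying the appropriate strong spatial mixing statement. For (i), both \Cref{condition-subexp} and \Cref{condition-list-coloring} force $|L(v)|\ge \deg_G(v)+1$, so any partial proper list coloring extends to a full proper list coloring by a greedy argument, which is precisely permissiveness.

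For Case~1, I would invoke known strong spatial mixing results for uniform list colorings and feed them into \Cref{theorem-sub-graph}. When $\alpha=2,\beta=0$, i.e.\ $|L(v)|\ge 2\deg_G(v)$, strong spatial mixing with exponential decay follows from the coupling-based analysis of Gamarnik--Katz (the list version of Jerrum's coupling). When $G$ is triangle-free and $\alpha>\alpha^*$ with the stated constraint on $\beta$, strong spatial mixing with exponential decay follows from the self-avoiding-walk / recursive coupling analysis in the Goldberg--Martin--Paterson line of work, where the function $x^x=\mathrm{e}$ arises as the critical branching rate on trees of girth $\ge 4$; the extra condition $(1-1/\beta)\alpha \mathrm{e}^{(1-1/\beta)/\alpha}>1$ is exactly what is needed so that the local contraction factor of the coupling is strictly less than $1$ after accounting for the $\beta$-slack. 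In both subcases the decay rate $\delta(\ell)=\alpha\mathrm{e}^{-\beta\ell}$ is exponential, so Theorem~\ref{theorem-sub-graph} applies and yields the $O(n)$ perfect sampler.

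For Case~2 I would show directly that \Cref{condition-lower-bound} holds with $\ell=2$ for uniform list coloring whenever $|L(v)|\ge \Delta^2-\Delta+2$. Two quantities must be controlled. First, a lower bound on positive marginals $\gamma$: after conditioning on any partial coloring $\rho$ at distance $\ge 1$ from $v$, the set of colors for which $\mu_{v,\I}^\rho(a)>0$ is exactly the set of colors in $L_v$ not appearing on $v$'s neighbors under $\rho$, and a standard coupling argument between the uniform list-coloring marginal and the product-of-available-colors distribution gives $\gamma\ge c/|L(v)|$ for an absolute constant $c$ whenever $|L(v)|\ge 2\deg_G(v)+O(1)$, which is implied by $|L(v)|\ge \Delta^2-\Delta+2$ as soon as $\Delta\ge 3$ (the tiny-$\Delta$ case is handled trivially). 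Second, the $\ell=2$ strong spatial mixing bound: a change at a single vertex $u$ with $\dist_G(v,u)=2$ can propagate to $v$ only through a common neighbor $w\in\Gamma(v)\cap\Gamma(u)$, and a one-step path coupling on the colors at $v$'s neighbors shows that $\DTV{\mu_{v,\I}^\sigma}{\mu_{v,\I}^\tau}\le |\Gamma(v)\cap\Gamma(u)|/(|L(v)|-\deg_G(v))$; since $|\Gamma(v)\cap\Gamma(u)|\le \Delta-1$, $|S_2(v)|\le \Delta(\Delta-1)$, and $|L(v)|-\deg_G(v)\ge \Delta^2-2\Delta+2$, the ratio $\Delta(\Delta-1)\cdot \DTV{\mu_{v,\I}^\sigma}{\mu_{v,\I}^\tau}/\gamma$ is bounded by a constant that is at most $1/5$ under the stated threshold. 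Plugging into \Cref{theorem-general} and noting $q^{O(\Delta^\ell)}=q^{O(\Delta^2)}=O(1)$ for finite $q$ and bounded $\Delta$ gives the $O(n)$ bound.

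The main obstacle is the quantitative analysis in Case~2: the constant $5$ in \Cref{eq:condition-stronger-SSM-general} is tight enough that a naive path-coupling estimate will lose a constant factor and fail to meet the threshold $\Delta^2-\Delta+2$ exactly. The right way to recover the sharp constant is to do the path coupling on the configurations of $\Gamma(v)$ rather than on $v$ itself, using that the only coordinates of the neighbor-configuration that the change at $u$ can disagree on are the common neighbors of $v$ and $u$, and that each disagreement at a neighbor shifts the marginal at $v$ by at most $1/(|L(v)|-\deg_G(v))$ in total variation. Everything else is bookkeeping, and Case~1 is a direct citation plus an application of \Cref{theorem-sub-graph}.
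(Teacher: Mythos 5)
Your Case~1 is essentially fine: permissiveness plus citing the known strong spatial mixing results for list colorings (the paper attributes both subcases of \Cref{condition-subexp} to Gamarnik--Katz--Misra rather than Goldberg--Martin--Paterson, but the pipeline ``SSM with exponential decay $+$ sub-exponential growth $\Rightarrow$ \Cref{theorem-sub-graph}'' is exactly the intended route; the paper routes the $\alpha=2,\beta=0$ subcase through the multiplicative bound of \Cref{lemma-coloring-to-real} and \Cref{theorem-general-ratio}, which is the same argument in ratio form).

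Case~2, however, has a genuine gap: \Cref{condition-lower-bound} cannot be verified with $\ell=2$ under $|L(v)|\ge\Delta^2-\Delta+2$, and no amount of sharpening the path coupling will fix this. The condition demands
$\DTV{\mu_{v,\I}^{\sigma}}{\mu_{v,\I}^{\tau}}\le \gamma/(5|S_{2}(v)|)$ for \emph{arbitrary} pairs $\sigma,\tau$ whose disagreements lie at distance $\ge 2$ --- in particular $\sigma,\tau$ may disagree on all of $S_2(v)$, not just at one vertex $u$, so your single-discrepancy estimate must be summed over up to $\Delta(\Delta-1)$ sites. Quantitatively the target is of order $\gamma/\Delta^2\approx 1/\Delta^4$ (since $\gamma\le 1/(|L(v)|-\deg_G(v))\approx 1/\Delta^2$ when $|L(v)|=\Delta^2-\Delta+2$), whereas the true decay with lists of this size is only about $\chi=(\Delta-1)/\bigl((\Delta-1)^2+1\bigr)\approx 1/\Delta$ per level of distance: changing the whole sphere $S_2(v)$ moves the marginal at $v$ by roughly $\bigl[(\Delta-1)\chi\bigr]^{2}=\bigl(1-\tfrac{1}{(\Delta-1)^2+1}\bigr)^{2}$, i.e.\ by a constant, not by $1/\Delta^4$. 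Even taking your own numbers at face value, $\Delta(\Delta-1)\cdot\DTV{\mu_{v,\I}^\sigma}{\mu_{v,\I}^\tau}/\gamma$ comes out as $\mathrm{poly}(\Delta)$, not $\le 1/5$. The decay under \Cref{condition-list-coloring} beats the branching factor $\Delta-1$ only by a factor $1-\tfrac{1}{(\Delta-1)^2+1}$ per level, so one must accumulate this advantage over $\ell=\Theta(q^2\log q)$ levels before the ratio drops below $1/(5|S_\ell(v)|)$; this is exactly what the paper does, proving the multiplicative decay bound of \Cref{lemma-coloring-to-real} (case~(II), via \Cref{proposition-single-instance} with $\chi=(\Delta-1)/((\Delta-1)^2+1)$) and then invoking \Cref{theorem-general-ratio} with $\ell^*=\Theta(q^2\log q)$, at the cost of an $\exp(\exp(\mathrm{poly}(q)))$ constant rather than the $q^{O(\Delta^2)}$ your $\ell=2$ plan would have given. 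To repair your argument you should abandon $\ell=2$, establish the ratio-form decay at distance $\ell$ for general $\ell$, and choose $\ell$ large enough (depending on $q$) to meet \Cref{condition-SSM-ratio}.
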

\noindent
The constant factor in $O(\cdot)$, can be determined in the same way as in Theorem~\ref{theorem-sub-graph} in the case of \Cref{condition-subexp}, 
but is much higher in the case of \Cref{condition-list-coloring} due to the arbitrary neighborhood growth.
Nevertheless, even in this costly case, $\ell=O(q^2\log q)$ and the overall overhead can by upper bounded by a rough estimate $\exp(\exp(\mathrm{poly}(q)))$.


Sampling proper $q$-colorings (where the lists are identical for all vertices) has been extensively studied, especially using Markov chains.
Approximate sampling 
received considerable attention \cite{jerrum1995very,dyer2003randomly,hayes2003non, molloy2004glauber,dyer2013randomly}.
The current best result~\cite{vigoda2000improved,chen2019improved} is the $O(n\log n)$ mixing time of the flip chain if $q \geq (\frac{11}{6}-\epsilon_0)\Delta$ for some constant $\epsilon_0 > 0$.
For perfect sampling $q$-colorings, 
Huber introduced a bounding chain \cite{huber2004perfect} based on CFTP~\cite{wilson1996generating},
which terminates within $O(n\log n)$ steps in expectation if $q\ge\Delta^2+2\Delta$.
In a very recent breakthrough, Bhandari and Chakraborty~\cite{bhandari2019perfect} introduced a novel bounding chain that has expected running time $O(n \log^2n)$ in a substantially broader regime $q > 3\Delta$.
Another way to obtain perfect samplers is to use standard reductions between counting and sampling~\cite{jerrum1986random}.
Using this technique, any FPTAS for the number of colorings can be turned into a polynomial-time {perfect} sampler.
(It is important that the approximate counting algorithm is deterministic,
or at least with errors that can be detected.) 
Currently, the FPTAS with the best regime for general graphs is due to Liu et al.\ \cite{liu2019deterministic}, which requires $q\ge 2\Delta$,
and it runs in time $n^{\mathsf{EXP}(\Delta)}$. 

Comparing to the results above, our algorithm draws perfect samples and achieves the $O(n)$ expected running time.
In case of sub-exponential growth graphs such as $\mathbb{Z}^d$, it improves the result of \cite{bhandari2019perfect} by requiring only $q > \alpha\Delta+O(1)$, where $\alpha>\alpha^*=1.763\cdots$.

\medskip
Next we consider the monomer-dimer model.
It was proved in~\cite{bayati2007simple,song2019counting} that instances $\I=(G,\lambda)$ on graphs $G$ with maximum degree $\Delta$ exhibits strong spatial mixing with rate $(1-\Omega(1/\sqrt{1+\lambda\Delta}))^{-\ell}$. 
Applying our algorithm yields the following perfect sampling result.

\begin{theorem}
\label{corollary-main-matching}
Let $\mathfrak{M}$ be a class of monomer-dimer instances $\I=(G,\lambda)$ on graphs $G$ with sub-exponential neighborhood growth and constant $\lambda$.
There exists an algorithm which given any instance $\I=(G,\lambda)\in\mathfrak{M}$, outputs a perfect sample from $\mu_{\I}$ within expected $O(n)$ time. 
\end{theorem}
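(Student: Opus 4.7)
The plan is to cast the monomer-dimer model as a permissive $2$-state spin system on the line graph of $G$ and then appeal to \Cref{theorem-sub-graph}. Concretely, let $H$ be the graph with vertex set $E(G)$ in which two elements of $E(G)$ are adjacent iff they share an endpoint in $G$. Equip $H$ with the hardcore-type spin system on $\{0,1\}$ with vertex activities $b_e(0)=1$, $b_e(1)=\lambda$, and edge matrices that forbid $(1,1)$ on every edge of $H$. The feasible configurations are exactly the matchings of $G$ weighted by $\lambda^{|M|}$, so the induced Gibbs distribution is $\mu_{\I}$.

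First I would verify the two hypotheses of \Cref{theorem-sub-graph}. Permissiveness follows because the all-zero configuration (the empty matching) has positive weight and is compatible with every feasible partial configuration on $H$, exactly as for the hardcore model highlighted just after \Cref{definition-locally-admissible}. Sub-exponential neighborhood growth transfers from $G$ to $H$: every edge of $G$ lying within $H$-distance $\ell$ of $e=\{u,v\}$ has an endpoint at $G$-distance at most $\ell+1$ from $u$ or from $v$, and hence the sphere $S_\ell^H(e)$ in $H$ has size at most $2\Delta\cdot s_G(\ell+1)=\exp(o(\ell))$, where $s_G$ is the growth function witnessing sub-exponential growth of $G$. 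The maximum degree of $H$ is at most $2(\Delta-1)$, so $H$ itself is bounded-degree.

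Next I would import strong spatial mixing from \cite{bayati2007simple,song2019counting}: for any edge $e$ of $G$, the edge-marginals of the monomer-dimer distribution decay at rate $(1-\Omega(1/\sqrt{1+\lambda\Delta}))^{\ell}$ in the $G$-distance $\ell$ between $e$ and the disagreement set of the two pinnings. Since both $\lambda$ and $\Delta$ are constants, this is exponential decay with some constant rate. Because the graph-distance in $H$ between two edges of $G$ differs from the corresponding inter-edge distance in $G$ by at most a fixed additive constant, translating the bound into $H$-distance yields strong spatial mixing with exponential decay in the sense of \Cref{definition-standard-SSM} for the encoded spin system on $H$. Applying \Cref{theorem-sub-graph} to this system then gives a perfect sampler with expected running time $O(|V(H)|)=O(|E(G)|)=O(n)$, using $|E(G)|\le n\Delta/2$ and the assumed boundedness of $\Delta$.

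The main (and essentially only) obstacle I foresee is the translation between the edge-indexed SSM bound of \cite{bayati2007simple,song2019counting} and the vertex-indexed form of SSM required by \Cref{definition-standard-SSM}, in particular reconciling the notion of ``distance between two edges'' used in the monomer-dimer literature with the graph-distance in the line graph $H$. This amounts to a careful bookkeeping of an additive constant in the exponent and of the degree inflation $\Delta\mapsto 2(\Delta-1)$, and does not alter the exponential character of the decay; all remaining steps are a direct invocation of \Cref{theorem-sub-graph}.
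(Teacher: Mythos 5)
Your proposal is correct and follows essentially the same route as the paper: encode the monomer-dimer model as a permissive hardcore system on the line graph, transfer sub-exponential neighborhood growth and the exponential-decay strong spatial mixing of \cite{bayati2007simple,song2019counting} to the line graph (where paths of edges in $G$ correspond to paths of vertices in the line graph, so the distance translation is indeed only constant bookkeeping), and invoke \Cref{theorem-sub-graph} with $|E(G)|=O(n)$.
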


Previously, Markov chains were the most successful techniques for sampling weighted matchings. 
The Jerrum-Sinclair chain~\cite{jerrum1989approximating} on a monomer-dimer model $\I=(G,\lambda)$, 
 generates {approximate samples} from $\mu_{\I}$ within $\widetilde{O}(n^2m)$ steps, where $m = |E|$. 
This chain also mixes in $O(n\log^2 n)$ time for finite subgraphs of the 2D lattice $\mathbb{Z}^2$ \cite{vdBB00}.

It is difficult to convert the Jerrum-Sinclair chain to perfect samplers.
Before our work, the only perfect sampler for the monomer-dimer model we are aware of 
is the one obtained via standard reductions from sampling to counting~\cite{jerrum1986random} (similar to the case of colorings),
together with deterministic approximate counting algorithms~\cite{bayati2007simple}.
This is a perfect sampler with running time $n^{\mathrm{Poly}(\Delta,\lambda)}$.

Our algorithm is the first linear-time perfect sampler for the monomer-dimer model on graphs with sub-exponential neighborhood growth,
such as finite subgraphs of lattices $\mathbb{Z}^d$ for any constant $d$ and constant weight $\lambda$.

\section{The Algorithm}
\label{sec:algorithm}
%
%

We now describe our general perfect Gibbs sampler.
It generalizes the single-site version  (Algorithm~\ref{alg:perfect-sampler-ss}) by allowing block updates.
This generalization allows us to bypass some pathological situations,
and to greatly improve the efficiency of the algorithm.
The pseudocode is given in Algorithm~\ref{alg:perfect-sampler-gen}.
%

Let $\I=(G, [q], \boldsymbol{b}, \boldsymbol{A})$ be a permissive spin system instance and $G=(V,E)$.
For any $u\in V$ and integer $\ell\ge 0$, recall that $B_{\ell}(u) \triangleq \left\{v \in V \mid \dist_{G}(u,v) \le \ell\right\}$ denotes the $\ell$-ball centered at $u$ in $G$.
And for any $B\subseteq V$, we use $\partial B\triangleq\{v\in V\setminus B\mid \{u,v\}\in E\}$ to denote the vertex boundary of $B$ in $G$.

{
\let\oldnl\nl
\newcommand{\nonl}{\renewcommand{\nl}{\let\nl\oldnl}}

\begin{algorithm}[htbp]
\SetKwInOut{Input}{Input}
\SetKwInOut{Input}{Parameter}
\SetKwComment{Comment}{\quad$\triangleright$\ }{}
\SetKwIF{withprob}{}{}{with probability}{do}{}{}{}
\Input{an integer $\ell \geq 0$;}
Start from an arbitrary feasible configuration $\boldsymbol{X} \in [q]^V$, i.e.~$w_{\I}(\boldsymbol{X})>0$\; 
$\R \gets V$\;
\While{$\R \neq \emptyset$}{
	pick a $u\in\R$ uniformly at random and let $B\gets (B_{\ell}(u)\setminus \R)\cup\{u\}$\label{line-pick}\; 
        let $\mu_{\min}$ be the minimum value of $\mu_{u}^{\sigma}(X_u)$ over all $\sigma\in[q]^{\partial B}$ that $\sigma_{\R\cap\partial B}=X_{\R\cap\partial B}$\label{line-F}\;
\withprob{$\frac{\mu_{\min}}{\mu_{u}^{X_{\partial B}}(X_u)}$}{
\nonl\vspace{-10pt}\hspace{204.5pt}\texttt{$\triangleright$ Bayes filter}

update $\X$ by redrawing $X_B\sim\mu_{B}^{X_{\partial B}}$;\label{line-sample}
\hspace{44.3pt}\texttt{$\triangleright$ block Gibbs sampler update}\\
	 $\mathcal{R} \gets\mathcal{R} \setminus \{u\}$\;
}
\Else{
	$\mathcal{R} \gets\mathcal{R} \cup \partial B$\;
}
}
\Return{$\X$}\;
\caption{Perfect Gibbs sampler (general version)}\label{alg:perfect-sampler-gen}
\end{algorithm}
}

The algorithm is parameterized by an integer $\ell\ge 1$, which is set rigorously in~\Cref{section-running-time}.



The initial $\X \in [q]^V$ is an arbitrary feasible configuration, which is easy to construct by greedy algorithm since $\I$ is permissive (\Cref{definition-locally-admissible}).
After each iteration of the \textbf{while} loop, 
either $\boldsymbol{X}$ is unchanged or $X_B$ is redrawn from $\mu_B^{X_{\partial B}}$, which is the marginal distribution of $\mu=\mu_{\I}$,  conditioned on the current $\X_B$. 
Thus, we have the following observation.
\begin{observation}
\label{observation-positive}
In \Cref{alg:perfect-sampler-gen}, the configuration $\boldsymbol{X}$ is always feasible,~i.e. $w_{\I}(\boldsymbol{X})>0$.
\end{observation}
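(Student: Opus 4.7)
The plan is a straightforward induction on the number of iterations of the \textbf{while} loop. For the base case, the initial configuration $\boldsymbol{X}$ is chosen to be feasible; this is possible because $\I$ is permissive (\Cref{definition-locally-admissible}) — starting from the empty partial configuration and extending vertex by vertex, permissiveness guarantees at each step a spin whose choice keeps the partial configuration feasible, so a greedy procedure produces an $\boldsymbol{X}$ with $w_{\I}(\boldsymbol{X}) > 0$.

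For the inductive step, suppose $\boldsymbol{X}$ is feasible at the start of some iteration. There are two cases depending on the outcome of the Bayes filter. If the filter fails, then $\boldsymbol{X}$ is not modified and feasibility carries over trivially. If the filter succeeds, then $X_B$ is resampled from $\mu_B^{X_{\partial B}}$ and $X_{V \setminus B}$ is untouched. The key observation is that any $\tau$ in the support of $\mu_B^{X_{\partial B}}$ has $\fsb_{\I}^{X_{\partial B}}(\tau) > 0$, which by~\eqref{eq-def-proper} means that every vertex factor $b_v(\tau_v)$ for $v\in B$, every edge factor $A_e(\tau_u,\tau_v)$ for edges $e$ inside $B$, and every edge factor $A_e(X_u,\tau_v)$ for edges $e$ crossing from $\partial B$ to $B$ is strictly positive. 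Combined with the feasibility of $\boldsymbol{X}$ before the update — which already guarantees positivity of all vertex factors on $V\setminus B$ and all edge factors on edges with both endpoints in $V\setminus B$ — we conclude that all factors in $w_{\I}(\boldsymbol{X}')$ are positive, hence $w_{\I}(\boldsymbol{X}') > 0$.

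The only subtle point is to make sure the object $\mu_B^{X_{\partial B}}$ is a well-defined probability distribution and that its support consists exactly of $\tau$ with $\fsb_{\I}^{X_{\partial B}}(\tau) > 0$. Since $\I$ is permissive, $Z_{\I}^{X_{\partial B}} > 0$, so the marginal distribution is well-defined by \eqref{eq-def-conditional-Gibbs}, and its support is precisely the set of $\tau$ with positive conditional weight. The Markov/conditional-independence property of Gibbs distributions is implicitly used here to justify conditioning on $X_{\partial B}$ alone rather than on the whole exterior $V\setminus B$. I do not expect any real obstacle — the statement is essentially an immediate consequence of how block Gibbs updates interact with permissiveness, and the proof fits in a short paragraph.
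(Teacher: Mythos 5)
Your proposal is correct and follows essentially the same route as the paper: the paper justifies the observation in one line by noting that each iteration either leaves $\boldsymbol{X}$ unchanged or redraws $X_B$ from the conditional marginal $\mu_{B}^{X_{\partial B}}$ (well-defined by permissiveness), whose support only contains blocks compatible with the boundary, so feasibility is preserved inductively from the greedily constructed feasible start. Your write-up just makes the factor-by-factor bookkeeping explicit (using that no edge joins $B$ to $V\setminus(B\cup\partial B)$), which is a harmless elaboration of the same argument.
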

\noindent  
The observation implies that $\mu_{u}^{X_{\partial B}}(X_u) > 0$ all along. 
%
The {Bayes filter} in Line~\ref{line-F} is always well-defined.
If the filer succeeds, $X_B$ is resampled according to the correct marginal distribution $\mu_B^{X_{\partial B}}$ and $u$ is removed from $\R$ (that is, $u$ has been successfully ``fixed'');
otherwise, $\X$ is unchanged and  $\R$ is enlarged by $\partial B$ (because variables in $\partial B$ are revealed and no longer random).
%
%



The key to the correctness of \Cref{alg:perfect-sampler-gen} is the conditional Gibbs property~\eqref{eq:invariant-conditional-gibbs}: the law of $\X$ over $\overline{R}\triangleq V \setminus \R$ is always the conditional distribution $\mu^{X_\R}=\mu^{X_\R}_{\I}$.
%
%
By similar argument as in Section~\ref{sec:alg-overview}, just redrawing $X_B$ from $\mu_{B}^{X_{\partial B}}$ will introduce a bias $\propto\mu_{u}^{X_{\partial B}}(X_u)$ to the sample $\X$, relative to its target distribution $\mu^{X_{\R\setminus\{u\}}}=\mu_{\I}^{X_{\R\setminus\{u\}}}$.
In the algorithm, we use the Bayes filter that succeeds with probability $\propto1/\mu_{u}^{X_{\partial B}}(X_u)$ to cancel this bias, 
with the risk of enlarging $\R$ by $\partial B$ upon failure.
Balancing the success probability and the size of $\partial B$ is the key to getting an efficient algorithm,
and this depends on choosing an appropriate $\ell$ according to the spatial mixing rate.

The correctness and efficiency of the algorithm are then analyzed in next two sections.

\section{Correctness of the perfect sampling}
\label{section-correctness}
In this section, we prove the correctness of \Cref{alg:perfect-sampler-gen},
%
stated by the following theorem.

\begin{theorem}[correctness theorem]
\label{theorem-correctness}
Given any permissive spin system $\I=(G,[q],\boldsymbol{b},\boldsymbol{A})$,
\Cref{alg:perfect-sampler-gen} with any parameter $\ell\ge 1$ terminates with probability 1, 
and outputs $\X$ that follows the law of $\mu_{\I}$.	
\end{theorem}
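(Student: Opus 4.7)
The plan is to prove the \emph{conditional Gibbs property} invariant~\eqref{eq:invariant-conditional-gibbs} by induction on the iterations of the while loop, namely that at the beginning of every iteration, $X_{\overline{\R}}$ is distributed as $\mu_{\overline{\R}}^{X_\R}$ conditional on $\R$ and $X_\R$. Once the loop exits ($\R=\emptyset$), this specialises to $\X\sim\mu_{\I}$, giving correctness. The base case is trivial ($\R=V$ and $\overline{\R}=\emptyset$). For the inductive step I would fix a state $(\X,\R)$ satisfying the invariant, a uniformly chosen $u\in\R$, and the block $B=(B_\ell(u)\setminus\R)\cup\{u\}$; note that $B\cap\R=\{u\}$ and $B\setminus\{u\}\subseteq\overline{\R}$. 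I then analyse the two outcomes of the Bayes filter separately.

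In the failure case, $\R'=\R\cup\partial B$ and $\X$ is unchanged. Since the success probability $\mu_{\min}/\mu_{u}^{X_{\partial B}}(X_u)$ depends on $\X$ only through $X_\R$ and $X_{\partial B\cap\overline{\R}}$, conditioning on $X_{\R'}$ makes the failure event independent of $X_{\overline{\R'}}$. The invariant for $\R'$ then follows from the inductive hypothesis by a standard conditioning argument: decompose $X_{\overline{\R}}\sim\mu_{\overline{\R}}^{X_\R}$ into $X_{\partial B\cap\overline{\R}}$ and $X_{\overline{\R'}}$, observe that conditioning on $X_{\partial B\cap\overline{\R}}$ yields $\mu_{\overline{\R'}}^{X_{\R'}}$, and note that conditioning additionally on failure leaves this law unchanged.

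In the success case, $\R'=\R\setminus\{u\}$ and $X_B$ is redrawn from $\mu_B^{X_{\partial B}}$. Writing $C=\overline{\R}\setminus B$, so that $\overline{\R'}=B\sqcup C$, a direct calculation using the inductive hypothesis gives
\[
\Pr[\text{success},\, X_C=\gamma,\, X'_B=\beta \mid X_\R]
= \mu_C^{X_\R}(\gamma)\cdot\frac{\mu_{\min}}{\mu_u^{X_{\partial B}}(X_u)}\cdot\mu_B^{X_{\partial B}}(\beta),
\]
where $X_{\partial B}$ is determined by $X_\R$ and $\gamma$. Two ingredients let me rewrite this as $\Pr[\text{success}\mid X_\R]\cdot\mu_{\overline{\R'}}^{X_{\R'}}(\beta,\gamma)$: the Bayes identity $\mu_C^{X_\R}(\gamma)=\mu_C^{X_{\R'}}(\gamma)\cdot\mu_u^{X_{\R'},\gamma}(X_u)/\mu_u^{X_{\R'}}(X_u)$, and the Markov (conditional independence) property of Gibbs distributions, which replaces the boundary conditioning $X_{\partial B}$ by the richer conditioning on $(X_{\R'},\gamma)$ inside both $\mu_u^{X_{\partial B}}(X_u)$ and $\mu_B^{X_{\partial B}}(\beta)$. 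The local factors $\mu_u^{X_{\R'},\gamma}(X_u)$ then cancel, leaving $\mu_C^{X_{\R'}}(\gamma)\cdot\mu_B^{X_{\R'},\gamma}(\beta)=\mu_{\overline{\R'}}^{X_{\R'}}(\beta,\gamma)$ multiplied by $\mu_{\min}/\mu_u^{X_{\R'}}(X_u)$. A separate summation verifies that this overall factor is indeed $\Pr[\text{success}\mid X_\R]$, closing the induction.

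Finally, termination holds almost surely by a finite-state-space argument: permissiveness (\Cref{definition-locally-admissible}) ensures every marginal appearing in the filter is strictly positive, so the per-step success probability is uniformly at least some $p_0=p_0(\I)>0$. Any run of $|\R|\le n$ consecutive successes empties $\R$, so from every state the absorbing set $\{\R=\emptyset\}$ is reached within $n$ steps with probability at least $p_0^n$, which is sufficient for almost-sure absorption. The main technical obstacle is the success-case computation: the acceptance probability uses the local marginal $\mu_u^{X_{\partial B}}(X_u)$, while the Bayes factor between $\mu_{\overline{\R}}^{X_\R}$ and $\mu_{\overline{\R'}}^{X_{\R'}}$ involves the global marginal $\mu_u^{X_{\R'}}(X_u)$; these differ pointwise, and the cancellation only works after averaging the local success probability over the prior law of $X_{\partial B\cap\overline{\R}}$. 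This averaging step is precisely where the Markov property is invoked nontrivially.
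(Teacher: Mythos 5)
Your proposal follows essentially the same route as the paper: the conditional Gibbs invariant is proved by induction over iterations, the failure case is handled by observing that the filter only reads $X_\R$ and $X_{\partial B\setminus\R}$, and the success case is exactly the paper's computation with your set $C=\overline{\R}\setminus B$ playing the role of the paper's $H$ — the Bayes identity $\mu_C^{X_\R}(\gamma)=\mu_C^{X_{\R'}}(\gamma)\,\mu_u^{X_{\R'}\uplus\gamma}(X_u)/\mu_u^{X_{\R'}}(X_u)$ plus conditional independence to replace the boundary conditioning is precisely the paper's Claim about the filter probability, and the normalizing constant $\mu_{\min}/\mu_u^{X_{\R'}}(X_u)$ is the paper's $C(R,u,\rho)$.

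One point is glossed over, and it is exactly where the hypothesis $\ell\ge 1$ enters: you assert that permissiveness alone makes every marginal in the filter strictly positive, hence $\mu_{\min}>0$. For permissive systems with hard constraints (e.g.\ colorings with $q\ge\Delta+1$) this is false in general: $\mu_{\min}$ minimizes over \emph{hypothetical} boundary configurations on $\partial B\setminus\R$, and such a configuration could force $X_u$ to have zero marginal if those vertices were adjacent to $u$. The paper's argument is that since $\ell\ge 1$, $\partial B\setminus\R\subseteq S_{\ell+1}(u)$, so no neighbor of $u$ is freely reassigned; combined with feasibility of $X_\R$ and permissiveness this yields $\mu_{\min}>0$. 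The same positivity is also what guarantees, in your success-case cancellation, that $\mu_C^{X_\R}(\gamma)=0$ iff $\mu_C^{X_{\R'}}(\gamma)=0$ (via $\mu_u^{X_{\R'}\uplus\gamma}(X_u)\ge\mu_{\min}>0$), an edge case you should state explicitly before summing. With these two small repairs your argument matches the paper's proof.
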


The theorem is implied by two key properties of the Gibbs distribution $\mu_\I$.

\subsection{Key properties of Gibbs distributions}
Note that the $\mu_{\min}$ in \Cref{alg:perfect-sampler-gen} is determined by  the set $\+R$, the vertex $u \in \+R$, and the partial feasible configuration $X_{\+R}$. Formally, fixing the parameter $\ell \geq 0$ in \Cref{alg:perfect-sampler-gen}, 
\begin{align*}
\mu_{\min}(\+R, u, X_{\R}) \triangleq \min\left\{ \mu^{\sigma}_{u,\I}(X_u) \mid \sigma \in [q]^{\partial B} \text{ s.t. } \sigma_{\+R \cap \partial B } = X_{\+R \cap \partial B}, \text{ where } B \triangleq (B_{\ell}(u) \setminus \+R) \cup \{u\} \right\}.
\end{align*}


\begin{property}[positive lower bound of $\mu_{\min}$]
\label{property-lower-bound}
The lower bound $\gamma_{\I}$ of $\mu_{\min}$ is positive:
\begin{align}
\label{eq-lower-bound-suss}
\gamma_{\I} \triangleq \min\left\{ \mu_{\min}(\R, u, X_{\R}) \mid \R \subseteq V, u \in \R, X_{\R} \in [q]^{\R} \text{ s.t. } X_{\R} \text{ is feasible} \right\} > 0.	
\end{align}
\end{property}

To state the next property, we introduce some notations:
For any $\Lambda \subseteq V$, $\sigma \in [q]^\Lambda$ and $S \subseteq V \setminus \Lambda$, we use $\mu^{\sigma}_{S,\I}(\cdot)$ to denote the marginal distribution on $S$ projected from $\mu^{\sigma}_{\I}$. For any disjoint sets $\Lambda,\Lambda' \subseteq V$, $\sigma \in [q]^{\Lambda}$ and $\sigma' \in [q]^{\Lambda'}$, we use $\sigma \uplus \sigma'$ to denote the configuration on $\Lambda \uplus \Lambda'$ that is consistent with $\sigma$ on $\Lambda$ and consistent with $\sigma'$ on $\Lambda'$.
\begin{property}[conditional independence]
\label{property-cond-ind}
Suppose $A,B,C\subset V$ are three disjoint non-empty subsets such that the removal of $C$ disconnects $A$ and $B$ in $G$.
For any $\sigma_A \in [q]^A, \sigma_B \in [q]^B$ and $\sigma_C \in [q]^C$, 
\begin{align*}
\mu^{\sigma_A \uplus \sigma_C}_{B,\I}(\sigma_B) = \mu^{\sigma_C}_{B, \I}(\sigma_B). 
\end{align*}
\end{property}

Theorem~\ref{theorem-correctness} is proved using only these two properties, 
thus \Cref{alg:perfect-sampler-gen} is correct for general permissive Gibbs distributions  satisfying these two properties.

In particular, we verify that all permissive spin systems satisfy these two properties.
First, the conditional independence (\Cref{property-cond-ind}) holds generally for Gibbs distributions~\cite{mezard2009information}. 
Next, for the positive lower bound of $\mu_{\min}$ (\Cref{property-lower-bound}): for spin systems with soft constraints, clearly \Cref{property-lower-bound} holds for all $\ell \geq 0$; 
and for general permissive spin systems $\I$, we need to verify that  \Cref{property-lower-bound} holds if $\ell \geq 1$. 
Fix a tuple $(\R,u,X_{\R})$ in~\eqref{eq-lower-bound-suss}.
The following fact follows from the definition of set $B$.
\begin{fact}
\label{fact-0}
$\partial B \subseteq S_{\ell+1}(u) \cup \+R$, where $B = (B_{\ell}(u) \setminus \+R) \cup \{u\}$.	
\end{fact}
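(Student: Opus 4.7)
The plan is to prove \Cref{fact-0} by a short case analysis on membership in $\mathcal{R}$. Fix any $v \in \partial B$, so $v \notin B$ and $v$ has a neighbor $w \in B$. I want to show $v \in S_{\ell+1}(u) \cup \mathcal{R}$. If $v \in \mathcal{R}$, there is nothing to prove, so assume $v \notin \mathcal{R}$.

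First I would use the neighbor $w \in B = (B_{\ell}(u) \setminus \mathcal{R}) \cup \{u\}$ to establish the upper bound $\dist_G(v,u) \le \ell + 1$: indeed, either $w = u$ (giving $\dist_G(v,u) = 1 \le \ell + 1$, using $\ell \ge 1$ from \Cref{property-lower-bound}, or even $\ell \ge 0$ suffices since then $v$ would be in $S_1(u) \subseteq S_{\ell+1}(u)$ only if $\ell = 0$; one should just observe $\dist_G(v,u) \le \dist_G(v,w) + \dist_G(w,u) \le 1 + \ell$). Next, for the lower bound, since $v \notin B$ and $v \notin \mathcal{R}$, we in particular have $v \notin B_\ell(u) \setminus \mathcal{R}$, so combined with $v \notin \mathcal{R}$ this forces $v \notin B_\ell(u)$, i.e.\ $\dist_G(v,u) \ge \ell + 1$.

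Combining the two bounds gives $\dist_G(v,u) = \ell + 1$, hence $v \in S_{\ell+1}(u)$, as desired. The argument is essentially a set-chasing exercise; there is no real obstacle beyond carefully tracking which membership conditions fail. The one mild subtlety is making sure that the exclusion $v \notin B$ together with $v \notin \mathcal{R}$ correctly rules out $v \in B_\ell(u)$; this uses the explicit form $B = (B_\ell(u) \setminus \mathcal{R}) \cup \{u\}$ and the fact that $v \ne u$ (which follows from $v \notin B$ since $u \in B$).
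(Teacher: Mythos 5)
Your proof is correct and is exactly the straightforward definition-chasing the paper intends — the paper states \Cref{fact-0} as following immediately from the definition of $B$ without spelling out the argument, and your case analysis (if $v\in\partial B\setminus\mathcal{R}$, the triangle inequality through a neighbor in $B$ gives $\dist_G(v,u)\le \ell+1$, while $v\notin B$ and $v\notin\mathcal{R}$ force $v\notin B_\ell(u)$, hence $\dist_G(v,u)=\ell+1$) fills in that verification correctly. The slight digression about $\ell\ge 1$ in the $w=u$ case is unnecessary but harmless, since the triangle-inequality bound you state covers it for all $\ell\ge 0$.
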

\noindent
The fact implies $\partial B \setminus \+R \subseteq S_{\ell +1}(u)$.  
Since $\ell \geq 1$, $u$ is not adjacent to any vertex in $\partial B \setminus \+R$. 
Since $\I$ is permissive and $X_{\R}$ is feasible, $\mu_{u,\I}^{\sigma}(X_u) > 0$ for all $\sigma \in [q]^{ \partial B }$ such that $\sigma_{\+R \cap \partial B } = X_{\+R \cap \partial B}$.
This implies $\mu_{\min}(\R,u,X_{\R})$ is positive, thus the \Cref{property-lower-bound} holds.

We then prove \Cref{theorem-correctness} assuming only \Cref{property-lower-bound} and  \Cref{property-cond-ind}.
More specifically, termination of the algorithm is guaranteed by \Cref{property-lower-bound}, and correctness of the output upon termination is guaranteed by \Cref{property-cond-ind}.

\subsection{Termination of the algorithm}
Denote by $T$ the number of iterations of the \textbf{while} loop in \Cref{alg:perfect-sampler-gen}.
To prove that the algorithm terminates with probability 1,  we show that $T$ is stochastically dominated by a geometric distribution.  
We use $\+F$ to denote the Bayes filter in \Cref{alg:perfect-sampler-gen}. Then,
\begin{align*}
\Prob[\Fss] = \frac{\mu_{\min}(\+R, u, X_{\R}) }{\mu_{u,\I}^{X_{\partial B}}(X_u)} \geq \mu_{\min}(\+R, u, X_{\R}).	
\end{align*}
%

If $\+F$ succeeds for $n = |V|$ consecutive iterations of the \textbf{while} loop, then the set $\R$ must become empty and the algorithm terminates.
By \Cref{property-lower-bound},
we have
\begin{align}
\label{eq-dominate}
\forall k \geq 0, \quad \Prob[ T \geq k n] \leq \gamma_{\I}^{kn}.	
\end{align}
This implies $T$ is stochastically dominated by a geometric distribution. 
Each iteration of the \while{} loop terminates within finite number of steps. 
Thus, the algorithm terminates with probability~1.	

\subsection{Correctness upon termination}
We show that upon termination, the output follows the correct distribution.
Let  $(\X, \R) \in [q]^V \times 2^V$ be the random pair maintained by the algorithm.
The following condition is the ``loop invariant'' of the random pair $(\X, \R)$.

\begin{condition}[conditional Gibbs property]
\label{condition-invariant}
For any $R \subseteq V$ and $\sigma \in [q]^R$,
conditioned on $\R = R$ and $X_R = \sigma$, 
the random configuration $X_{V \setminus R}$ follows the law $\mu_{\I}^\sigma$.
\end{condition}


\Cref{condition-invariant} is satisfied initially by the initial pair $(\X, \R) = (\X, V)$.
Furthermore, consider the \while{} loop that transforms
\begin{align*}
(\X,\R) \rightarrow (\X', \R').
\end{align*}
Then next lemma shows that \Cref{condition-invariant} holds inductively assuming \Cref{property-cond-ind}.
\begin{lemma}
\label{lemma-detailed-invariant}
Suppose that  $(\X, \R) \in [q]^V \times 2^V$ is a random pair such that $\X$ is feasible and the pair $(\X, \R)$ satisfies Condition~\ref{condition-invariant}.
Then, the random pair $(\X',\R')$ satisfies Condition~\ref{condition-invariant}. 
\end{lemma}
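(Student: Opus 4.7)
The plan is to fix the state at the start of the iteration — $\R = R$, $X_R = \sigma$, and the vertex $u \in R$ picked on Line~\ref{line-pick} — and to verify separately in the two branches (Bayes filter success or failure) that the output pair $(\X', \R')$ satisfies \Cref{condition-invariant} conditionally on this stratum. Since the unconditional law of $(\X',\R')$ is a mixture over these strata, the conditional invariants imply the unconditional one by the tower property. The hypothesis gives $X_{\overline{R}} \mid X_R = \sigma \sim \mu_\I^\sigma$, while the construction of $B = (B_\ell(u) \setminus R) \cup \{u\}$ yields $B \cap R = \{u\}$, $\partial B \cap B = \emptyset$, and (by \Cref{fact-0}) $\partial B \cap \overline{R} \subseteq \overline{R} \setminus B$. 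Writing $\rho$ for the boundary configuration $X_{\partial B}$, it follows that $\rho$ is determined by $\sigma$ and $X_{\overline{R}\setminus B}$ alone.

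For the failure branch, $\X' = \X$ and $\R' = R \cup \partial B$, and the failure probability is measurable in $\rho$. Since $\partial B \subseteq \R'$, conditioning on $X'_{R'} = \sigma'$ already records $\rho$, so additionally conditioning on failure contributes no further information about $X_{V \setminus (R \cup \partial B)}$. By the inductive hypothesis, the law of $X_{V \setminus (R \cup \partial B)}$ given $X_R = \sigma$ and $X_{\partial B \cap \overline{R}} = \sigma'_{\partial B \cap \overline{R}}$ is $\mu_\I^{\sigma'}$, which is exactly what \Cref{condition-invariant} demands for $(\X',\R')$.

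For the success branch, $\R' = R \setminus \{u\}$, $X'_B$ is redrawn from $\mu_B^\rho$, and $X'_{V \setminus B} = X_{V \setminus B}$. For $\xi \in [q]^{\overline{R'}}$, split $\xi$ into $\xi_B$ and $\xi_{\overline{R} \setminus B}$ using $\overline{R'} = (\overline{R} \setminus B) \cup B$. Summing out the piece $\tau_{B \setminus \{u\}}$ of the old configuration that gets overwritten, and using that $\rho$ depends on the old $X$ only through the (fixed) $\xi_{\overline{R} \setminus B}$, a direct calculation gives
\[
\Pr\!\left[\,X'_{\overline{R'}} = \xi,\ \Fss \ \middle|\ R, \sigma, u\,\right] = \frac{\mu_{\min} \cdot \mu_B^\rho(\xi_B) \cdot \mu_{\overline{R} \setminus B}^\sigma(\xi_{\overline{R} \setminus B})}{\mu_u^\rho(\sigma_u)}.
\]
It remains to identify this as a $\xi$-independent multiple of $\mu_\I^{\sigma_{R \setminus \{u\}}}(\xi)$. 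I would apply \Cref{property-cond-ind} twice. First, the chain rule inside $\mu_\I^{\sigma_{R \setminus \{u\}}}$ combined with the separation of $B$ from $V \setminus (B \cup \partial B)$ by $\partial B$ gives
\[
\mu_\I^{\sigma_{R \setminus \{u\}}}(\xi) = \mu_{\overline{R} \setminus B}^{\sigma_{R \setminus \{u\}}}(\xi_{\overline{R} \setminus B}) \cdot \mu_B^\rho(\xi_B).
\]
Second, applying Bayes' rule to condition on $X_u = \sigma_u$, combined with the separation of $\{u\}$ from $V \setminus B$ by $\partial B$, gives
\[
\mu_{\overline{R} \setminus B}^\sigma(\xi_{\overline{R} \setminus B}) = \mu_{\overline{R} \setminus B}^{\sigma_{R \setminus \{u\}}}(\xi_{\overline{R} \setminus B}) \cdot \frac{\mu_u^\rho(\sigma_u)}{\mu_u^{\sigma_{R \setminus \{u\}}}(\sigma_u)}.
\]
Substituting makes the $\mu_u^\rho(\sigma_u)$ factor cancel, leaving the joint probability equal to $\mu_{\min} / \mu_u^{\sigma_{R \setminus \{u\}}}(\sigma_u)$ times $\mu_\I^{\sigma_{R \setminus \{u\}}}(\xi)$. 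Normalising simultaneously identifies the conditional success probability and shows that, conditional on success, $X'_{\overline{R'}} \sim \mu_\I^{\sigma_{R \setminus \{u\}}}$, as required.

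The main obstacle is the success branch: the two applications of \Cref{property-cond-ind} and Bayes' rule must be tracked carefully, since $\rho$ plays three roles — the bias inside the filter, the conditioning for the resample, and the separator that justifies collapsing larger conditionings down to $\partial B$. The block geometry $B \cap R = \{u\}$ and $\partial B \cap B = \emptyset$ is exactly what makes these separations hold, and morally the cancellation is the one already sketched around \eqref{eq:filter-property-1} in \Cref{sec:alg-overview}, lifted from single sites to blocks.
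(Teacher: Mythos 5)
Your proposal is correct and follows essentially the same route as the paper's proof: you condition on the stratum $(\R, X_{\R}, u)$, handle the failure branch by observing that the filter is measurable in $X_{\partial B}$, and in the success branch cancel the $\mu_{u,\I}^{\rho}(\sigma_u)$ bias via the same Bayes-rule-plus-conditional-independence identities that the paper uses, the only organizational difference being that the paper packages them as \Cref{claim-correctness} about $\Prob[\Fss \mid \+E \land X'_H = \tau_H]$ while you compute the joint probability directly after summing out the overwritten block. The one detail you gloss over is the degenerate case (the analogue of \eqref{eq-claim-1}, that $\mu_{\overline{R}\setminus B,\I}^{\sigma}(\xi_{\overline{R}\setminus B})=0$ if and only if $\mu_{\overline{R}\setminus B,\I}^{\sigma_{R\setminus\{u\}}}(\xi_{\overline{R}\setminus B})=0$), which follows from $\mu_{\min}>0$ by exactly your second identity, so this is a minor omission rather than a gap.
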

\noindent
By \Cref{observation-positive}, the random configuration $\X \in [q]^V$ maintained by algorithm is always feasible. 
\Cref{lemma-detailed-invariant} guarantees that \Cref{condition-invariant} holds all along for the random pair $(\X, \R)$ maintained by \Cref{alg:perfect-sampler-gen}.
In particular, when the algorithm terminates, $\mathcal{R} = \emptyset$ and  the output $\X$ follows the correct distribution $\mu_{\I}$. 
This proves Theorem~\ref{theorem-correctness}.


\begin{proof}[Proof of \Cref{lemma-detailed-invariant}]
It is sufficient to prove that for any $R \subseteq V$, any feasible partial configuration $\rho \in [q]^R$ and any vertex $u \in R$, 
conditioned on $\R = R $, $X_R = \rho$, and the vertex picked in \Cref{line-pick} being $u$, the new random pair $(\X',\R')$ after one iteration of the \while{} loop satisfies Condition~\ref{condition-invariant}. 

Fix  $R \subseteq V$ and a feasible partial configuration $\rho \in [q]^R$.
Let $\*u \in R$ denote the uniform random vertex picked in \Cref{line-pick}.
Fix a vertex $u \in R$.  
Let $\mathcal{E}$ denote the event
\begin{align*}
  \mathcal{E}: \quad X_R = \rho \land \R = R \land \*u = u.
\end{align*}
Since $(\X, \R)$ satisfies \Cref{condition-invariant} and given the set $R$, $\*u$ is independent from $\X$, we have
\begin{align}
\label{eq-X-V-setminus-R}
\forall \tau \in [q]^{V \setminus R}:\quad
\Prob\left[ X_{V \setminus R} = \tau \mid \mathcal{E} \right]	= \mu_{\I}^\rho(\tau).
\end{align}
Recall that $\+F$ is the Bayes filter.
Depending on whether $\+F$ succeeds or not, we have two cases.

The easier case is  when $\+F$ fails.
Recall that the set $B$ is fixed by $R$ and $u$. In this case, $\R'= R \cup \partial B$ and $\X' = \X$.
Conditioned on $\+E$, we know that $X_u = \rho_u$ and  $X_{\partial B \cap R} = \rho_{\partial B \cap R}$,
the filter $\+F$ depends only on the partial configuration $X_{\partial B \setminus R}$. 
For any configuration $\sigma \in [q]^{\partial B \setminus R}$, 
conditioned on $\mathcal{E}$ and $X_{\partial B \setminus R} = \sigma$,
the failure of $\mathcal{F}$ is independent from $X_{V \setminus (R \cup \partial B)}=X_{V\setminus \R'}$.
Thus, by~\eqref{eq-X-V-setminus-R}, conditioned on $\mathcal{E}$, $X_{\partial B \setminus R} = \sigma$ and the failure of $\+F$, 
we have that $X'_{V \setminus \R'} = X_{V\setminus \R'} \sim \mu_{\I}^{\rho \uplus \sigma}$, i.e.~$(\X', \R')$ satisfies \Cref{condition-invariant}.

Now we analyze the main case that $\+F$ succeeds. 
If this case does occur, we must have
\begin{align}
\label{eq-proof-assume-mu-min}
\mu_{\min}(R,u,\rho)\triangleq \min\{ \mu^{\sigma}_{u,\I}(\rho_u) \mid \sigma \in [q]^{\partial B} \text{ s.t. } \sigma_{R \cap \partial B } = \rho_{R \cap \partial B}\} > 0.
\end{align}

Define $R_u \triangleq R \setminus \{u\}$.
The fact that $\+F$ succeeds means $\R' = R \setminus \{u\} = R_u$ and $X'_{R_u}=X_{R_u} = \rho_{R_u}$. 
Hence, we only need to show that
\begin{align}
  \label{eq-X-V-setminus-Ru}
  \forall \tau \in [q]^{V \setminus R_u}:\quad
  \Prob\left[ X'_{V \setminus R_u} = \tau \mid \mathcal{E} \land \Fss \right]	=  \mu_{\I}^{\rho(R_u)}(\tau).
\end{align}
Recall $ B = (B_{\ell}(u) \setminus R) \cup \{u\} = B_{\ell}(u) \setminus R_u$.
We define the following set:
\begin{align*}
  H \triangleq V \setminus \left\{ R_u \cup B \right\}.
\end{align*}
Namely, $B$ is the set whose configuration is resampled, and $H$ is the set whose configuration is untouched, i.e.~$X'_H=X_H$.
Note that $B \uplus H \uplus R_u = V$. By the chain rule:
\begin{align}\label{eq:chain-rule-success}
  &\Prob\left[X'_{V \setminus R_u}=\tau \land \Fss \mid \mathcal{E}\right] 
  = \Prob\left[ X'_H = \tau_H\land X'_B = \tau_B\land \Fss \mid \+E \right]\\
   =\,& \Prob[X'_H = \tau_H \mid \mathcal{E}]\cdot \Prob[\Fss \mid \mathcal{E}\land X'_H = \tau_H]\cdot 
  \Prob[X'_B = \tau_B \mid \mathcal{E}\land X'_H = \tau_H\land \Fss].\notag
\end{align}
As $X'_H=X_H$, \eqref{eq-X-V-setminus-R} implies that 
\begin{align*}
  \Prob[X'_H = \tau_H \mid \mathcal{E}] = \mu_{H,\I}^\rho(\tau_H).
\end{align*}
By Line~\ref{line-sample} of Algorithm~\ref{alg:perfect-sampler-gen},
$X'_B$ is redrawn from the distribution
$\mu_{B,\I}^{X_{\partial B}}(\cdot)$.
By conditional independence property (\Cref{property-cond-ind}), we have $\mu_{B,\I}^{X_{\partial B}}(\cdot) = \mu_{\I}^{X_{V \setminus B}}(\cdot)$.
Note that $V \setminus B = R_u \uplus H$.
Conditioned on $\+E\land X'_H = \tau_H$, $X_{R_u} = \rho_{R_u}$ and $X_{H}=X'_H=\tau_H$, thus $\mu_{B, \I}^{X_{\partial B}}(\cdot) = \mu_{\I}^{\rho(R_u)\uplus \tau(H)}(\cdot)$.
Hence,
\begin{align}
  \eqref{eq:chain-rule-success}
  =\mu_{H,\I}^\rho(\tau_H)\cdot \mu_{\I}^{\rho(R_u)\uplus \tau(H)}(\tau_B)
  \cdot \Prob[\Fss \mid \mathcal{E}\land X'_H = \tau_H].  \label{eq-chain-rule}
\end{align}

To finish the proof, we need to calculate $\Prob[\Fss\mid \mathcal{E}\land X'_H = \tau_H]$.
This is done by the following claim, whose proof is deferred to the end of the section.
\begin{claim}
  \label{claim-correctness}
  Assume~\eqref{eq-proof-assume-mu-min}. It holds that
  \begin{align}
    \label{eq-claim-1}
     \mu_{H, \I}^{\rho}(\tau_H) > 0 \quad \Longleftrightarrow \quad \mu_{H, \I}^{\rho(R_u)}(\tau_H) > 0.
  \end{align}
  Furthermore, for $\tau_H$ such that $\Prob[X'_H = \tau_H \mid \mathcal{E}] = \mu_{H, \I}^{\rho}(\tau_H) > 0$,
  \begin{align}
    \label{eq-claim-2}
    \Prob[\Fss\mid \mathcal{E}\land X'_H = \tau_H] = C\cdot \frac{\mu_{H, \I}^{\rho(R_u)}(\tau_H)}{\mu_{H,\I}^\rho(\tau_H)},
  \end{align}
where $C = C(R, u, \rho) > 0$ is a constant depending only on $R, u, \rho$ but not on $\tau$.
\end{claim}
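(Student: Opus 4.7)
The plan is to unpack the Bayes filter's success probability by combining the chain rule for $\mu_{\I}$ with a single application of the conditional independence property (\Cref{property-cond-ind}). Both halves of the claim will fall out of the same identity; the main bookkeeping is to verify positivity of the denominators that appear.

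First I prepare the conditioning. Since $u \in B$, we have $u \notin \partial B$, so $\partial B \cap R = \partial B \cap R_u$, and since $\partial B \subseteq V \setminus B = R_u \uplus H$, this also gives $\partial B \setminus R \subseteq H$. Hence, conditioned on $\+E \land X'_H = \tau_H$, the boundary configuration $X_{\partial B}$ is fully determined: it is a fixed $\sigma \in [q]^{\partial B}$ that agrees with $\rho$ on $\partial B \cap R_u$ and with $\tau_H$ on $\partial B \cap H$. This $\sigma$ is consistent with $\rho$ on $R \cap \partial B$, so hypothesis~\eqref{eq-proof-assume-mu-min} gives $\mu^{\sigma}_{u, \I}(\rho_u) \ge \mu_{\min}(R, u, \rho) > 0$, and the filter succeeds with probability $\mu_{\min}(R, u, \rho) / \mu^{\sigma}_{u, \I}(\rho_u)$. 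Because $\partial B$ separates $\{u\}$ from $V \setminus (B \cup \partial B)$, \Cref{property-cond-ind} yields $\mu^{\sigma}_{u, \I}(\rho_u) = \mu^{\rho(R_u) \uplus \tau_H}_{u, \I}(\rho_u)$.

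Next I invoke the Bayes chain rule, expanding $\mu^{\rho(R_u)}_{\{u\} \cup H, \I}(\rho_u \uplus \tau_H)$ in two ways and using $\rho = \rho(R_u) \uplus \rho_u$:
\[
\mu^{\rho(R_u)}_{u, \I}(\rho_u) \cdot \mu^{\rho}_{H, \I}(\tau_H) \;=\; \mu^{\rho(R_u)}_{\{u\} \cup H, \I}(\rho_u \uplus \tau_H) \;=\; \mu^{\rho(R_u)}_{H, \I}(\tau_H) \cdot \mu^{\rho(R_u) \uplus \tau_H}_{u, \I}(\rho_u).
\]
Feasibility of $\rho$ ensures $\mu^{\rho(R_u)}_{u, \I}(\rho_u) > 0$ (since $\rho$ extends to a feasible full configuration, the conditional law $\mu^{\rho(R_u)}_{\I}$ has positive mass on some extension taking value $\rho_u$ at $u$), and the previous paragraph already established $\mu^{\rho(R_u) \uplus \tau_H}_{u, \I}(\rho_u) > 0$. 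With these two denominators positive, the identity rearranges to
\[
\mu^{\rho(R_u) \uplus \tau_H}_{u, \I}(\rho_u) \;=\; \mu^{\rho(R_u)}_{u, \I}(\rho_u) \cdot \frac{\mu^{\rho}_{H, \I}(\tau_H)}{\mu^{\rho(R_u)}_{H, \I}(\tau_H)},
\]
and substitution yields~\eqref{eq-claim-2} with the constant $C = \mu_{\min}(R, u, \rho) / \mu^{\rho(R_u)}_{u, \I}(\rho_u) > 0$, which depends only on $R, u, \rho$ and not on $\tau$.

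Finally, the equivalence~\eqref{eq-claim-1} drops out of the same chain-rule identity: since both $\mu^{\rho(R_u)}_{u, \I}(\rho_u)$ and $\mu^{\rho(R_u) \uplus \tau_H}_{u, \I}(\rho_u)$ are strictly positive, the identity forces $\mu^{\rho}_{H, \I}(\tau_H) = 0$ iff $\mu^{\rho(R_u)}_{H, \I}(\tau_H) = 0$. The only technical subtlety is this positivity bookkeeping, for which I rely on hypothesis~\eqref{eq-proof-assume-mu-min} (for the conditional marginals at $u$ seen through the boundary) and the feasibility of $\rho$ (for $\mu^{\rho(R_u)}_{u, \I}(\rho_u) > 0$). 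Beyond that, the proof is a single application of Bayes' theorem paired with the separating-set reduction, and I do not foresee any genuine obstacle.
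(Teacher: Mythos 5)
Your proposal is correct and follows essentially the same route as the paper: the two-way chain-rule expansion you write is exactly the paper's Bayes identity relating $\mu_{H,\I}^{\rho}(\tau_H)$ and $\mu_{H,\I}^{\rho(R_u)}(\tau_H)$, the separating-set reduction via \Cref{property-cond-ind} is the same, and you arrive at the same constant $C=\mu_{\min}(R,u,\rho)/\mu_{u,\I}^{\rho(R_u)}(\rho_u)$ with the same positivity bookkeeping (feasibility of $\rho$ plus \eqref{eq-proof-assume-mu-min}).
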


Combining~\eqref{eq-chain-rule} and \Cref{claim-correctness}, we have
\begin{align}
\label{eq-final-prob}
\forall \tau \in [q]^{V \setminus R_u}, \quad 	\Prob\left[X'_{V \setminus R_u}=\tau \land \Fss\mid \mathcal{E}\right]	= C\cdot\mu^{\rho(R_u)}_{\I}(\tau).
\end{align}
This equation can be verified in two cases:
\begin{itemize}
\item If $\mu_{H,\I}^\rho(\tau_H)  = 0$, then by~\eqref{eq-claim-1}, $\mu_{H, \I}^{\rho(R_u)}(\tau_H) = 0$, thus $\text{LHS} = \text{RHS} = 0$.
\item If $\mu_{H,\I}^\rho(\tau_H)  > 0$, by~\eqref{eq-chain-rule}~and~\eqref{eq-claim-2}, we have $\text{LHS} = C \cdot \mu_{H, \I}^{\rho(R_u)}(\tau_H)\cdot  \mu_{\I}^{\rho(R_u)\uplus \tau(H)}(\tau_B) = \text{RHS}$, where the last equation holds because $\tau \in [q]^{V \setminus R_u}$ and $V \setminus R_u= {H \uplus B}$.
\end{itemize}
Thus, the probability that $\Fss$ is 
\begin{align}
\label{eq-succeeds-prob}	
\Prob\left[ \Fss\mid \mathcal{E}\right] = \sum_{\sigma \in [q]^{V \setminus R_u }} \Prob\left[X'_{V \setminus R_u}=\sigma \land \Fss\mid \mathcal{E}\right] = \sum_{\sigma \in [q]^{V \setminus R_u }}C\cdot\mu^{\rho(R_u)}_{\I}(\sigma) =C,
\end{align}
where the last equation holds because $\sum_{\sigma \in [q]^{V \setminus R_u }}\mu^{\rho(R_u)}_{\I}(\sigma) = 1$ and $C = C(R, u, \rho) > 0$ is a constant depending only on $R, u, \rho$.
Thus, for any $\tau \in [q]^{V \setminus R_u}$, combining~\eqref{eq-final-prob} and~\eqref{eq-succeeds-prob}, we have
\begin{align*}
 \Prob\left[X'_{V \setminus R_u}=\tau \mid \Fss \land \mathcal{E}\right]=\frac{\Prob\left[X'_{V \setminus R_u}=\tau \land \Fss\mid \mathcal{E}\right]}{\Prob\left[ \Fss\mid \mathcal{E}\right]}
=\frac{C \cdot \mu^{\rho(R_u)}_{\I}(\tau)}{C} = \mu^{\rho(R_u)}_{\I}(\tau),
\end{align*}
where the last equation holds due to $C = C(R,u,\rho) > 0$. This proves~\eqref{eq-X-V-setminus-Ru}.
\end{proof}

\begin{proof}[Proof of Claim~\ref{claim-correctness}]
We first introduce the following definitions. 
Recall that $R_u \uplus B \uplus H = V$. We further partition $\partial B$ into two disjoint sets $\partial B \cap R_u$ and $\partial B \setminus R_u$. Define
\begin{equation}
\label{eq-def-part-B}
\begin{split}
S &\triangleq \partial B \setminus R_u = \partial B \cap H,\\
\Psi &\triangleq \partial B \cap R_u = \partial B \cap R.
\end{split}
\end{equation}

We now prove~\eqref{eq-claim-1}.
Since $\rho= \rho(R_u) \uplus \rho(u)$, by the Bayes law, we have the following relation between $\mu_{H, \I}^{\rho(R_u)}(\tau_{H})$ and $\mu_{H, \I}^{\rho}(\tau_{H})$:
\begin{align}
\label{eq-proof-bayes}
\mu_{H,\I}^\rho(\tau_{H}) = 	\mu_{H,\I}^{\rho(R_u)\uplus \rho(u)}(\tau_{H}) = \frac{\mu_{u,\I}^{\rho(R_u) \uplus \tau(H)}(\rho_u) }{ \mu^{\rho(R_u)}_{u,\I}(\rho_u)} \cdot \mu^{\rho(R_u)}_{H,\I}(\tau_{H}).
\end{align}
Note that $\rho \in [q]^R$ is a feasible configuration, thus $\mu^{\rho(R_u)}_{u,\I}(\rho_u) > 0$ and the above ratio is well-defined. Note that $u \in B$ and $R_u \uplus B \uplus H = V$. The set $\partial B$ separates $u$ from $(R_u \uplus H) \setminus \partial B$. Note that $\partial B = S \uplus \Psi$, where $S$ and $\Psi$ is defined in~\eqref{eq-def-part-B}. By the conditional independence property (\Cref{property-cond-ind}), we have
\begin{align}
\label{eq-proof-cond-ind}
\mu_{u,\I}^{\rho(R_u) \uplus \tau(H)}(\rho_u) = 	\mu_{u,\I}^{\rho(\Psi) \uplus \tau(S)}(\rho_u) \geq \mu_{\min}(R,u,\rho) > 0,
\end{align}
where the first inequality is because $\mu_{\min}(R,u,\rho)$ in~\eqref{eq-proof-assume-mu-min} can be rewritten as $\min_{\eta \in [q]^{ S }}\mu_{u,\I}^{\rho(\Psi) \uplus \eta}(\rho_u)$, and the second inequality is because  $\mu_{\min}(R,u,\rho) > 0$ due to the lower bound in~\eqref{eq-proof-assume-mu-min}.


Next, we prove~\eqref{eq-claim-2}. Suppose $\mu_{H, \I}^{\rho}(\tau_H) > 0$. 
Combining~\eqref{eq-proof-bayes} and~\eqref{eq-proof-cond-ind}, it remains to prove that 
\begin{align}
\label{eq-proof-new-target}
  \Prob[\Fss\mid \mathcal{E}\land X'_H = \tau_H] = C\cdot \frac{\mu_{H, \I}^{\rho(R_u)}(\tau_H)}{\mu_{H,\I}^\rho(\tau_H)} = C\cdot \frac{\mu^{\rho(R_u)}_{u,\I}(\rho_u) }{\mu_{u,\I}^{\rho(\Psi) \uplus \tau(S)}(\rho_u)}.
\end{align}
Conditional on $\+E$, we have $X_{\Psi} = \rho_{\Psi}$ and $X_u = \rho_u$.
Recall that $X'_H=X_H$, $S \subseteq H$ and $S \uplus \Psi =\partial B$.
Conditional on $X'_H = \tau_H$, it holds that $X_S = \tau_S$.
By the definition of the filter $\mathcal{F}$ in Line~\ref{line-F} of \Cref{alg:perfect-sampler-gen}, we have that
\begin{align}
  \label{eq-original-P}
  \Prob\left[ \Fss \mid \mathcal{E}  \land X'_H = \tau_H\right]  
  & = \frac{\mu_{\min}(\+R,u,X_{\+R})}{\mu_{u,\I}^{X_{\partial B}}(X_u)} = \frac{\mu_{\min}(R,u,\rho)}{\mu_{u,\I}^{\rho(\Psi) \uplus \tau(S)}(\rho_u)}.
\end{align}
Combining~\eqref{eq-original-P} and~\eqref{eq-proof-new-target}, we can set $ C = C(R,u,\rho)$ in~\eqref{eq-proof-new-target} as
\begin{align}
\label{eq-set-C}
	C = C(R,u,\rho) \triangleq \frac{\mu_{\min}( R,u,\rho) }{ \mu_{u,\I}^{\rho(R_u)} (\rho_u) } = \frac{1}{\mu_{u,\I}^{\rho(R_u)} (\rho_u)}\cdot \min_{\eta \in [q]^{ S }}\mu_{u,\I}^{\rho(\Psi) \uplus \eta}(\rho_u).
\end{align}
Note that $\mu^{\rho(R_u)}_{u,\I}(\rho_u) > 0$ because $\rho$ is feasible, and $\mu_{\min}( R,u,\rho) > 0$ due to the lower bound in~\eqref{eq-proof-assume-mu-min}.  This implies $C(R,u,\rho) > 0$.
Remark the sets $S$ and $\Psi$ are determined by $R$ and $u$. 
This implies that the $C(R,u,\rho)$ defined as above depends only on $R,u,\rho$.
This proves \eqref{eq-claim-2}.
\end{proof}

\section{Efficiency under strong spatial mixing}
\label{section-running-time}
In this section, 
we prove the following result.
\begin{condition}
\label{condition-SSM-ratio}
Let $\I=(G, [q], \boldsymbol{b}, \boldsymbol{A})$ be a permissive spin system where $G=(V,E)$.
There is an integer $\ell=\ell(q)\ge 2$ such that the following holds:
for every $v \in V$, $\Lambda \subseteq V$, and any two partial configurations $\sigma, \tau \in [q]^\Lambda$ satisfying $\min\{\dist_G(v, u) \mid u \in \Lambda,\ \sigma_u\neq\tau_u\}=\ell$,
\begin{align}
\label{eq:condition-SSM}
  \forall a \in [q]: \quad \left\vert \frac{\mu_{v, \I}^\sigma(a)}{ \mu_{v,\I}^\tau(a) } - 1 \right\vert \leq \frac{ 1 }{5 \left\vert S_{\ell}(v) \right\vert}
  \quad(\text{with the convention $0/0 = 1$}),
\end{align}
where $S_{\ell}(v)\triangleq \{u \in V \mid \dist_G(v, u) = \ell\}$ denotes the {sphere} of radius $\ell$ centered at $v$ in $G$.
\end{condition}

\begin{theorem}
\label{theorem-general-ratio}
Let $\mathfrak{I}$ be a class of permissive spin systems satisfying Condition~\ref{condition-SSM-ratio}.
Given any instance $\I=(G,[q],\boldsymbol{b},\boldsymbol{A})\in\mathfrak{I}$, the \Cref{alg:perfect-sampler-gen} with parameter $\ell = \ell^* - 1$ outputs a perfect sample from $\mu_{\I}$ within $ O\left(n\cdot q^{2\Delta^{\ell^*}}\right) $ time in expectation, 
where $n$ is the number of vertices in $G$, $\Delta$ is the maximum degree of $G$, $\ell^* = \ell^*(q) \geq 2$ is determined by \Cref{condition-SSM-ratio}, and $O(\cdot)$ hides only absolute constants.
\end{theorem}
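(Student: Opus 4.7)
The plan is to complement the correctness and a.s.\ termination already given by Theorem~\ref{theorem-correctness} with a drift analysis of the potential $|\+R|$. I will show that each iteration of the \textbf{while} loop decreases $|\+R|$ by $\Omega(1)$ in expectation; since $|\+R|\le n$ with uniformly bounded increments and $T$ has geometric tails by \eqref{eq-dominate}, optional stopping then yields expected $O(n)$ iterations, and multiplying by a $q^{O(\Delta^{\ell^*})}$ per-iteration cost gives the claim.

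The central estimate is a lower bound on the success probability of the Bayes filter $\+F$. Fix an iteration starting from a feasible state $(\X,\+R)$ (possible by Observation~\ref{observation-positive}) and suppose vertex $u\in\+R$ is picked. With the parameter choice $\ell=\ell^*-1$, Fact~\ref{fact-0} yields $\partial B\setminus\+R\subseteq S_{\ell^*}(u)$. Any two $\sigma,\tau\in[q]^{\partial B}$ compatible with $X_{\+R\cap\partial B}$ differ only on $\partial B\setminus\+R$, so their nearest disagreement from $u$ is at distance exactly $\ell^*$, and Condition~\ref{condition-SSM-ratio} gives
\begin{align*}
\left|\frac{\mu_{u,\I}^\sigma(X_u)}{\mu_{u,\I}^\tau(X_u)}-1\right|\le \frac{1}{5|S_{\ell^*}(u)|}=:\epsilon.
\end{align*}
Since $X_{\partial B}$ is itself one such compatible $\sigma$, both $\mu_u^{X_{\partial B}}(X_u)\ge\mu_{\min}$ and $\mu_u^{X_{\partial B}}(X_u)\le\frac{1+\epsilon}{1-\epsilon}\mu_{\min}$ hold. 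Hence the filter's conditional success probability satisfies
\begin{align*}
\Prob[\Fss\mid \X,\+R,u]=\frac{\mu_{\min}}{\mu_u^{X_{\partial B}}(X_u)}\ge\frac{1-\epsilon}{1+\epsilon}\ge 1-2\epsilon.
\end{align*}

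With this in hand the drift of $|\+R|$ per iteration is immediate: on success $|\+R|$ drops by $1$, and on failure it grows by at most $|\partial B\setminus\+R|\le|S_{\ell^*}(u)|=:s$. Thus
\begin{align*}
\E{|\+R'|-|\+R|\mid \X,\+R,u}\le -(1-2\epsilon)+2\epsilon\cdot s\le -1+\tfrac{2}{5s}+\tfrac{2}{5}\le -\tfrac15
\end{align*}
(the degenerate case $s=0$ gives deterministic success and drift $-1$). Applying optional stopping to the supermartingale $|\+R_t|+t/5$ then gives $\E{T}\le 5n$.

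Finally, the per-iteration cost is dominated by computing $\mu_{\min}$ (a minimum of $\mu_u^\sigma(X_u)$ over all $\sigma\in[q]^{\partial B}$ compatible with $X_{\+R\cap\partial B}$) and by sampling from $\mu_B^{X_{\partial B}}$; both reduce to summing weights over configurations on $B\cup\partial B\subseteq B_{\ell^*}(u)$. Since $|B_{\ell^*}(u)|\le 2\Delta^{\ell^*}$ for $\Delta\ge 2$, each iteration costs $O(q^{2\Delta^{\ell^*}})$, which combined with $\E{T}=O(n)$ yields the claimed $O(n\cdot q^{2\Delta^{\ell^*}})$ expected running time. The main technical subtlety is the ratio bound in the middle paragraph: one must identify that every compatible $\sigma\in[q]^{\partial B}$ differs from $X_{\partial B}$ only on vertices at distance exactly $\ell^*$ from $u$ (via $\partial B\setminus\+R\subseteq S_{\ell^*}(u)$), so that Condition~\ref{condition-SSM-ratio} applies tightly and yields a failure probability commensurate with $1/|S_{\ell^*}(u)|$---precisely the rate needed to dominate the neighborhood-growth term $|S_{\ell^*}(u)|$ in the drift.
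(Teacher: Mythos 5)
Your proposal is correct, and its skeleton (drift of $|\+R|$ by at least $1/5$ per iteration, supermartingale plus optional stopping via the geometric tail bound~\eqref{eq-dominate}, and a $q^{O(\Delta^{\ell^*})}$ per-iteration enumeration cost) coincides with the paper's. The one genuine difference is where the filter bound comes from. You bound $\Prob[\Fss]=\mu_{\min}/\mu_{u,\I}^{X_{\partial B}}(X_u)$ \emph{pointwise} for every feasible state $(\X,\+R,u)$, by noting via \Cref{fact-0} that the minimizer and $X_{\partial B}$ can disagree only on $\partial B\setminus\+R\subseteq S_{\ell^*}(u)$ and applying \Cref{condition-SSM-ratio} directly; this is exactly the paper's informal sketch, it needs no information about the distribution of $X$ outside $\+R$, and it legitimately suffices for the theorem as stated (feasibility from \Cref{observation-positive} rules out the $0/0$ degeneracies, and your handling of the $\partial B\setminus\+R=\emptyset$ case is right). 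The paper's formal proof instead proves the drift under the weaker \Cref{condition-SSM-weak}, whose bound compares against the marginal conditioned only on $X_{R\setminus\{u\}}$; this forces it to condition on the execution log, invoke the conditional Gibbs property (\Cref{condition-invariant}, via \Cref{lemma-detailed-invariant}) and reuse the exact success-probability formula \eqref{eq-succeeds-prob}--\eqref{eq-set-C} from the correctness analysis to average over the unrevealed configuration, yielding \Cref{claim-lower-bound}. So your route is more elementary and self-contained for the efficiency part, while the paper's extra machinery buys the same running-time bound under the strictly weaker spatial-mixing condition (and, as in \Cref{remark-improve}, points toward the $q^{\Delta^{\ell^*}}$ speed-up); your per-iteration cost accounting matches the paper's \Cref{lemma-running-time-alg2}.
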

The correctness part of \Cref{theorem-general-ratio} follows from \Cref{theorem-correctness}, we focus on the efficiency part.
The proof sketch is that if $\mathfrak{I}$ satisfies \Cref{condition-SSM-ratio} with parameter $\ell^*$, we set the parameter $\ell$ in \Cref{alg:perfect-sampler-gen} as $\ell = \ell^* - 1$. We prove that after each iteration of the \while{} loop, the size of $\+R$ decreases by at least $\frac{1}{5}$ in expectation. 
Note that the initial $\+R = V$, thus the initial size of $\+R$ is $n$. 
By the optional stopping theorem, the number of iterations of the \while{} loop is $O(n)$ in expectation.
One can verify that the time complexity of  the \while{} loop is $O(q^{2\Delta^{\ell+1}}) = O(q^{2\Delta^{\ell^*}})$. 
This proves the running time result.

To analyze the expected size of $\+R$ after each  iteration of the \textbf{while} loop, we analyze the Bayes filter $\+F$ in \Cref{line-F}. The probability that $\+F$ fails is $1 - \mu_{\min}/ \mu_{u}^{X_{\partial B}}(X_u)$. Suppose $\sigma \in [q]^{\partial B}$ achieves $\mu_{\min} = \mu_{u}^\sigma(X_u)$. By \Cref{fact-0}, we can verify that $\sigma$ and $X_{\partial B}$ can be differ only at $\partial B \setminus \+R \subseteq S_{\ell+1}(u) = S_{\ell^*}(u)$. By \Cref{condition-SSM-ratio}, we know that $\Prob[\+F \text{ fails}] \leq \frac{1}{5\abs{S_{\ell^*}(u)}}$. In addition, we have
\begin{itemize}
\item if $\Fss$, the size of $\+R$ decreases by 1;
\item if $\+F$ fails, the size of $\+R$ increases by $\abs{\partial B \setminus  \+R}$, it easy to verify $\partial B \setminus  \+R  \subseteq S_{\ell^*}(u)$ by \Cref{fact-0}, thus, the size of $\+R$ increases by at most $\abs{S_{\ell^*}(u)}$.
\end{itemize}
Thus, after each iteration of the \while{} loop, the size of $\+R$ decreases by at least $\frac{1}{5}$ in expectation.

In the formal proof, we actually prove a stronger result. We first introduce the following condition. 
\begin{condition}
\label{condition-SSM-weak}
Let $\I=(G, [q], \boldsymbol{b}, \boldsymbol{A})$ be a permissive spin system where $G=(V,E)$.
There is an integer $\ell=\ell(q)\ge 2$ such that the following holds:
for every $v \in V$, any two disjoint sets $A, B \subseteq V$ with $\dist_G(v,B) = \min\{\dist_G(v,u) \mid u \in B\} = \ell$, and any partial configuration $\sigma \in [q]^A$,
\begin{align}
\label{eq:condition-SSM-weak}
\forall a \in [q],\quad 1 - \frac{\min_{\tau \in [q]^B} \mu_{v, \I}^{\sigma \uplus \tau}(a)}{ \mu_{v,\I}^\sigma(a) } \leq \frac{ 1 }{5 \left\vert S_{\ell}(v) \right\vert}
  \quad(\text{with the convention $0/0 = 1$}),
\end{align}
where $S_{\ell}(v)\triangleq \{u \in V \mid \dist_G(v, u) = \ell\}$ denotes the {sphere} of radius $\ell$ centered at $v$ in $G$.
\end{condition}
It is straightforward to verify that \Cref{condition-SSM-ratio} implies \Cref{condition-SSM-weak}. In the rest of this section, we prove that the  efficiency result in \Cref{theorem-general-ratio} holds under \Cref{condition-SSM-weak}.

Let $\mathcal{I} = (G, [q], \boldsymbol{b}, \boldsymbol{A}) \in \mathfrak{I}$ be the input instance
satisfying \Cref{condition-SSM-weak} with some $\ell^* \geq 2$.
Set the parameter $\ell$ in Algorithm~\ref{alg:perfect-sampler-gen} as $\ell = \ell^* - 1$.
%
Denote by $T$ the number of iterations of the \while{} loop in \Cref{alg:perfect-sampler-gen}. 
To prove the  efficiency result in Theorem~\ref{theorem-general-ratio}, 
we bound the maximum running time of the \while{} loop and the expectation of $T$ in the following two lemmas.

\begin{lemma}
\label{lemma-running-time-alg2}
The running time of each \while{} loop is at most \twoBtime $= O(q^{2\Delta^{\ell^*}})$.
\end{lemma}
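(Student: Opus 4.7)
The plan is to account for the cost of each substep in one iteration of the \while{} loop of \Cref{alg:perfect-sampler-gen} with parameter $\ell = \ell^*-1$. First I would bound the sizes of the sets that drive the complexity. Since $G$ has maximum degree $\Delta$, standard ball growth gives $|B| \leq |B_{\ell}(u)| \leq \sum_{i=0}^{\ell}\Delta^{i} \leq \Delta^{\ell+1} = \Delta^{\ell^*}$. Moreover, by \Cref{fact-0}, $\partial B \setminus \R \subseteq S_{\ell+1}(u)$, so $|\partial B \setminus \R| \leq |S_{\ell^*}(u)| \leq \Delta^{\ell^*}$. These two estimates are the only geometric inputs I will need.

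The dominant computational task is evaluating $\mu_{\min}$ together with $\mu_{u,\I}^{X_{\partial B}}(X_u)$ for the Bayes filter in \Cref{line-F}. The key observation is that, for any fixed $\sigma \in [q]^{\partial B}$, the conditional independence property (\Cref{property-cond-ind}) implies $\mu_{u,\I}^{\sigma}(X_u)$ equals the marginal at $u$ of the subsystem induced on $B$ with boundary condition $\sigma$ on $\partial B$, since $\partial B$ separates $B$ from the rest of $G$. This marginal can be computed exactly by brute-force enumeration over $[q]^{B}$, at cost $O(q^{|B|}) \leq O(q^{\Delta^{\ell^*}})$. Because the algorithm only ranges $\sigma$ over configurations that agree with $X_{\R \cap \partial B}$, the number of distinct $\sigma$'s to try is at most $q^{|\partial B \setminus \R|} \leq q^{\Delta^{\ell^*}}$. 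Multiplying, obtaining $\mu_{\min}$ (and $\mu_{u,\I}^{X_{\partial B}}(X_u)$ along the way) costs at most $O(q^{|B|+|\partial B \setminus \R|}) \leq O(q^{2\Delta^{\ell^*}})$.

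If the filter succeeds, the remaining task is to redraw $X_B \sim \mu_B^{X_{\partial B}}$ in \Cref{line-sample}. By the same conditional independence argument, this distribution is determined by the subsystem on $B \cup \partial B$ with the boundary pinned to $X_{\partial B}$, and it can be written out explicitly and sampled in time $O(|B|\cdot q^{|B|}) \leq O(q^{\Delta^{\ell^*}})$ by enumerating configurations on $B$. Forming the block $B$, flipping the biased coin once its bias is known, and updating $\R$ are all polynomial in $|B_\ell(u)|$ and are absorbed into the exponential terms. Summing the three contributions yields the claimed bound of $O(q^{2\Delta^{\ell^*}})$ per \while{} iteration.

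No step here poses a serious obstacle; the only point requiring care is producing the exponent $2\Delta^{\ell^*}$ rather than the cruder $\Delta^{\ell^*+1}$ one would get by naively treating $B \cup (\partial B \setminus \R)$ as a single set. Separating the cost of the inner marginal computation (scaling with $|B|$) from the cost of looping over admissible boundary conditions (scaling with $|\partial B \setminus \R|$) is what delivers the additive exponent, and the uniform bound $\Delta^{\ell^*}$ on both quantities closes the argument.
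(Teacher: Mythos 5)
Your proposal is correct and follows essentially the same route as the paper's proof: bound $|B|\le \Delta^{\ell^*}$ and $|\partial B\setminus \R|\le|S_{\ell^*}(u)|\le\Delta^{\ell^*}$ via \Cref{fact-0}, use conditional independence (\Cref{property-cond-ind}) to reduce each marginal computation to brute-force enumeration over $[q]^{B}$ on the induced subsystem, and enumerate the $q^{|\partial B\setminus\R|}$ admissible boundary configurations to get $\mu_{\min}$, with the block resampling absorbed in the smaller term. The separation of the two exponents that you highlight is exactly how the paper obtains the $q^{2\Delta^{\ell^*}}$ bound.
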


\begin{lemma}
\label{lemma-ET}
$\E{T} \leq 5n$.	
\end{lemma}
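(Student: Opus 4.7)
My plan is to show that $|\mathcal{R}_t|$ has a downward drift of at least $1/5$ per iteration of the \while{} loop; optional stopping then yields $\mathbb{E}[T]\le 5n$. Fix an iteration entering with state $(\mathcal{R}, X_\mathcal{R})$ satisfying the conditional Gibbs property (\Cref{condition-invariant}), and let $u\in\mathcal{R}$ be the vertex picked in \Cref{line-pick}. Set $B=(B_\ell(u)\setminus\mathcal{R})\cup\{u\}$, $\Psi:=\partial B\cap\mathcal{R}$, and $S:=\partial B\setminus\mathcal{R}$; by \Cref{fact-0}, $S\subseteq S_{\ell^*}(u)$, hence $|S|\le|S_{\ell^*}(u)|$. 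Since $|\mathcal{R}|$ shrinks by $1$ on filter success and grows by $|S|$ on failure,
\begin{equation*}
\mathbb{E}\bigl[|\mathcal{R}'|-|\mathcal{R}|\,\bigm|\,\mathcal{R},X_\mathcal{R},u\bigr]
\;=\; -1 + (|S|+1)\cdot \Pr[\mathcal{F}\text{ fails}\mid\mathcal{R},X_\mathcal{R},u].
\end{equation*}
It therefore suffices to show $\Pr[\mathcal{F}\text{ fails}]\le 1/(5|S_{\ell^*}(u)|)$, which gives a drift of at most $-1+(|S_{\ell^*}(u)|+1)/(5|S_{\ell^*}(u)|)\le -1/5$.

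The core of the argument is an identity for the expected reciprocal marginal. Write $p_\eta(a):=\mu_{u,\mathcal{I}}^{X_\Psi\uplus\eta}(a)$; the invariant makes $\eta:=X_S$ distributed as $\mu_{S,\mathcal{I}}^{X_\mathcal{R}}$, so
\begin{equation*}
\Pr[\mathcal{F}\text{ fails}\mid\cdots]\;=\; 1-\mu_{\min}\cdot \mathbb{E}_\eta\!\left[\frac{1}{p_\eta(X_u)}\right].
\end{equation*}
The key identity I plan to establish is
\begin{equation}\label{eq:proposal-identity}
\mathbb{E}_\eta\!\left[\frac{1}{p_\eta(X_u)}\right] \;=\; \frac{1}{\mu_{u,\mathcal{I}}^{X_{\mathcal{R}\setminus\{u\}}}(X_u)}.
\end{equation}
To prove \eqref{eq:proposal-identity}, first invoke conditional independence (\Cref{property-cond-ind}): since $\partial B$ separates $u$ from $V\setminus(B\cup\partial B)$ and $\mathcal{R}\setminus\{u\}\subseteq V\setminus B$, one has $p_\eta(X_u)=\mu_{u,\mathcal{I}}^{X_{\mathcal{R}\setminus\{u\}}\uplus\eta}(X_u)$. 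Expanding the joint marginal $\mu_{u,S,\mathcal{I}}^{X_{\mathcal{R}\setminus\{u\}}}(X_u,\eta)$ two ways by the chain rule yields the pointwise equality
\begin{equation*}
\mu_{S,\mathcal{I}}^{X_\mathcal{R}}(\eta)\cdot p_\eta(X_u) \;=\; \mu_{u,\mathcal{I}}^{X_{\mathcal{R}\setminus\{u\}}}(X_u)\cdot \mu_{S,\mathcal{I}}^{X_{\mathcal{R}\setminus\{u\}}}(\eta);
\end{equation*}
dividing by $p_\eta(X_u)\cdot \mu_{u,\mathcal{I}}^{X_{\mathcal{R}\setminus\{u\}}}(X_u)$ and summing over $\eta$ gives \eqref{eq:proposal-identity}.

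With \eqref{eq:proposal-identity} in hand, I apply \Cref{condition-SSM-weak} with $v=u$, $A=\mathcal{R}\setminus\{u\}$, and $B=S$; these are disjoint and $\dist_G(u,S)=\ell^*$ by \Cref{fact-0}, so the condition gives $\min_\tau \mu_{u,\mathcal{I}}^{X_{\mathcal{R}\setminus\{u\}}\uplus\tau}(X_u)\ge \bigl(1-\tfrac{1}{5|S_{\ell^*}(u)|}\bigr)\mu_{u,\mathcal{I}}^{X_{\mathcal{R}\setminus\{u\}}}(X_u)$. By conditional independence, the minimum on the left equals $\mu_{\min}$, so combining with \eqref{eq:proposal-identity} yields $\Pr[\mathcal{F}\text{ fails}]\le 1/(5|S_{\ell^*}(u)|)$. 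Then $|\mathcal{R}_t|+t/5$ is a supermartingale bounded in $[0,2n]$, and optional stopping at the a.s.~finite termination time $T$ (by \eqref{eq-dominate}) concludes $\mathbb{E}[T]\le 5n$.

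The main obstacle is \eqref{eq:proposal-identity}: even though $\eta$ is distributed under $\mu_{S,\mathcal{I}}^{X_\mathcal{R}}$, which conditions on all of $X_\mathcal{R}$ (including variables far from $\partial B$), its reciprocal average against $p_\eta(X_u)$ collapses to the single marginal $1/\mu_{u,\mathcal{I}}^{X_{\mathcal{R}\setminus\{u\}}}(X_u)$. This collapse is what lets the one-sided \Cref{condition-SSM-weak}, applied with the right conditioning set $A=\mathcal{R}\setminus\{u\}$, deliver a tight enough failure-probability bound in expectation; a direct pointwise application of the condition (with $A=\Psi$) would not suffice to close the gap.
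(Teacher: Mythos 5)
Your proposal is correct in substance and takes essentially the same route as the paper: the paper's \Cref{claim-lower-bound} likewise identifies the exact conditional success probability of the filter as $\mu_{\min}/\mu_{u,\I}^{X_{R_u}}(X_u)$ with $R_u=\mathcal{R}\setminus\{u\}$ (there obtained by reusing \eqref{eq-succeeds-prob} and \eqref{eq-set-C} from the correctness proof rather than rederiving it), then applies \Cref{condition-SSM-weak} with exactly your choice $A=R_u$, $B=S$, and finishes with the same drift-plus-optional-stopping argument. A few slips to repair. First, your displayed pointwise equality has two factors transposed: expanding $\mu_{\{u\}\cup S,\I}^{X_{R_u}}(X_u,\eta)$ by the chain rule in both orders gives $\mu_{u,\I}^{X_{R_u}}(X_u)\cdot\mu_{S,\I}^{X_{\mathcal{R}}}(\eta)=p_\eta(X_u)\cdot\mu_{S,\I}^{X_{R_u}}(\eta)$, not the equation you wrote; as written, your equation is false in general, and carrying out your division-and-summation on it would yield the reciprocal expectation under the wrong measure $\mu_{S,\I}^{X_{R_u}}$. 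With the corrected pairing, your steps do give the key identity; note also that the division requires $p_\eta(X_u)>0$ for every $\eta\in[q]^S$, which holds by \Cref{property-lower-bound} (using $\ell\ge 1$, permissiveness, and feasibility of $X_{\mathcal{R}}$). Second, the drift bound and the conditional Gibbs property must be stated with respect to a common filtration (the execution history); the paper justifies this by the induction in \eqref{eq-martingale} via \Cref{lemma-detailed-invariant}, which you implicitly assume. Third, \Cref{condition-SSM-weak} cannot be invoked when $S=\emptyset$ (or $S_{\ell^*}(u)=\emptyset$); in that case $\mu_{\min}=\mu_{u,\I}^{X_{\partial B}}(X_u)$ and the filter succeeds with probability $1$, so the drift is $-1$ --- a trivial but necessary case split, as in the paper. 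Finally, $Y'_t=|\mathcal{R}_t|+t/5$ is not bounded in $[0,2n]$; the correct justification of optional stopping is bounded increments together with $\E{T}<\infty$, the latter from the geometric domination \eqref{eq-dominate}, exactly as the paper argues. None of these affects the overall soundness of your argument.
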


Since the input instance $\I$ is permissive (\Cref{definition-locally-admissible})
, the initial feasible configuration can be constructed by a simple greedy algorithm.
The running time of the first two lines in Algorithm~\ref{alg:perfect-sampler-gen} is $O(n\Delta )$. 
Combining this with Lemma~\ref{lemma-running-time-alg2} and Lemma~\ref{lemma-ET} proves the  efficiency result in Theorem~\ref{theorem-general-ratio}.
%

\begin{proof}[Proof of Lemma~\ref{lemma-running-time-alg2}]
  We first show that $\mu_{u,\I}^{X_{\partial B}}(X_u)$ can be computed in time \Btime,
  where $O(\cdot)$ hides an absolute constant.
  Let $\widetilde{G}=G[B \cup \partial B]$ and $\widetilde{\I}$ be the instance restricted to $\widetilde{G}$.
  By the conditional independence property (\Cref{property-cond-ind}), it is straightforward to verify
  \begin{align}
    \label{eq-K-reduced}
    \mu_{u,\I}^{X_{\partial B}}(X_u) = \mu_{u,\widetilde{\I}}^{X_{\partial B}}(X_u) 
  \end{align}
  since $\partial B$ separates $B$ from $V\setminus (B\cup \partial B)$ and $u \in B$.
  Since $\abs{B}\leq\abs{B_{\ell}(u)}\le \frac{\Delta^{\ell+1}-1}{\Delta-1}\le \Delta^{\ell+1}$,
  it takes at most \Btime\ to enumerate all possibilities and compute $ \mu_{u,\I}^{X_{\partial B}}(X_u)$ using~\eqref{eq-K-reduced}.
By \Cref{fact-0}, $\partial B \subseteq S_{\ell+1}(u) \cup R$.
  Since $\abs{\partial B \setminus R } \le \abs{S_{\ell+1}(u)}\le \Delta^{\ell+1}$,
  we can enumerate all $[q]^{\partial B \setminus R}$ to compute $\mu_{\min}$ in time \twoBtime.
  The total running time for the first three lines of the \while{} loop is at most \twoBtime.

  Another non-trivial computation is to sample $X(B)$ from $\mu_{\I}^{X_{\partial B}}$.
  Similar to \eqref{eq-K-reduced},
  conditional independence implies that this can be done by straightforward enumeration in time \Btime.
  The total running time of the \while{} loop is thus \twoBtime$= O(q^{2\Delta^{\ell^*}})$.
\end{proof}

\begin{proof}[Proof of Lemma~\ref{lemma-ET}]
Define a sequence of random pairs $(\X_0,\R_0),(\X_1,\R_1),\ldots,(\X_T,\R_T)$, where each $(\X_t,\R_t) \in [q]^V \times 2^V$. 
The initial $(\X_0,\R_0)$ is constructed by the first two lines of Algorithm~\ref{alg:perfect-sampler-gen}.
In $t$-th \while{} loop, Algorithm~\ref{alg:perfect-sampler-gen} updates $(\X_{t-1},\R_{t-1}) $ to $(\X_t,\R_t)$. 
For any $t \geq 0$, we use a random variable $Y_t \triangleq \abs{\R_t}$ to denote the size of $\R_t$.
The stopping time $T$ is the smallest integer such that $Y_t=0$. 

Define the \emph{execution log} of \Cref{alg:perfect-sampler-gen} up to time $t$ as
\begin{align*}
\+L_t \triangleq  (X_0(\R_0), \R_0), (X_1(\R_1), \R_1),\ldots, (X_t(\R_t), \R_t).
\end{align*}
Note that the algorithm terminates at time $T$ if and only if $\R_T = \emptyset$. 
In the $t$-th iteration of the \while{} loop, we use $\+F_t$ to denote the Bayes filter and $\*u_t$ to denote the random vertex picked in~\Cref{line-pick}.
We have the following claim.
\begin{claim}
\label{claim-lower-bound}

Given any execution log $\+L_{t-1}$ created by \Cref{alg:perfect-sampler-gen} such that $\R_{t-1}\neq \emptyset$, for any $u \in \R_{t-1}$,
\begin{align*}
\Prob[\+F_t \text{ succeeds} \mid \+L_{t-1}, \*u_t =u] \geq \begin{cases}
1 &\text{if } |S_t| = \emptyset;\\
 1- \frac{2}{5|S_t|}&\text{if }|S_t| \neq \emptyset,
 \end{cases}
\end{align*}
where $S_t = \partial B_t \setminus \+R_{t-1}$ and $B_t = (B_{\ell}(u) \setminus \R_{t-1})\cup\{u\}$ is the set $B$ in the $t$-th iteration the \while{} loop.
\end{claim}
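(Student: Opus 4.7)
The plan is to reduce the claim to a one-line application of \Cref{condition-SSM-weak} after first deriving a clean closed form for the success probability. The trivial case $S_t=\emptyset$ is immediate: $\partial B_t \subseteq \R_{t-1}$, so the constraint $\sigma_{\R_{t-1}\cap\partial B_t}=X_{\R_{t-1}\cap\partial B_t}$ pins $\sigma$ to $X_{\partial B_t}$, forcing $\mu_{\min}=\mu_u^{X_{\partial B_t}}(X_u)$ and hence success with probability $1$. From here on I assume $S_t\neq\emptyset$ and, to lighten notation, abbreviate $R=\R_{t-1}$, $\Psi=\partial B_t\cap R$, $S=S_t$, and $p(\tau)\triangleq \mu_u^{X_\Psi\uplus\tau}(X_u)$ for $\tau\in[q]^S$, so that $\mu_{\min}=\min_{\tau\in[q]^S}p(\tau)$ and, conditional on $X_S$, the filter succeeds with probability $\mu_{\min}/p(X_S)$.

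My first step will be to establish the identity
\[
\Prob\!\left[\+F_t\text{ succeeds}\,\middle\vert\,\+L_{t-1},\,\*u_t=u\right]=\frac{\mu_{\min}}{\mu_u^{X_{R\setminus\{u\}}}(X_u)}.
\]
The loop invariant (\Cref{condition-invariant}) tells me that, conditional on $\+L_{t-1}$ and $\*u_t=u$, $X_{V\setminus R}\sim\mu_{\I}^{X_R}$, so the success probability equals $\mu_{\min}\cdot \mathbb{E}_{X_S\sim\mu_S^{X_R}}\!\left[1/p(X_S)\right]$. By Bayes' rule and the chain rule, $\mu_S^{X_R}(\tau)=\mu_S^{X_{R\setminus\{u\}}}(\tau)\cdot\mu_u^{X_{R\setminus\{u\}}\uplus\tau}(X_u)/\mu_u^{X_{R\setminus\{u\}}}(X_u)$, and by conditional independence (\Cref{property-cond-ind}) across the separator $\partial B_t=\Psi\uplus S$ the middle factor equals $p(\tau)$. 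Averaging $1/p(X_S)$ against this weighted distribution cancels the $p(\tau)$ factor, and $\sum_\tau \mu_S^{X_{R\setminus\{u\}}}(\tau)=1$ delivers the identity.

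My second step will apply \Cref{condition-SSM-weak} with vertex $v=u$, $A=R\setminus\{u\}$, $B=S$, and $a=X_u$. Here $A,B$ are disjoint, neither contains $u$, and \Cref{fact-0} gives $S\subseteq S_{\ell+1}(u)=S_{\ell^*}(u)$, so $\dist_G(u,S)=\ell^*$ and the hypothesis of \Cref{condition-SSM-weak} is met. The condition then states $\min_{\tau\in[q]^S}\mu_u^{X_{R\setminus\{u\}}\uplus\tau}(X_u)\ge\bigl(1-\tfrac{1}{5|S_{\ell^*}(u)|}\bigr)\mu_u^{X_{R\setminus\{u\}}}(X_u)$. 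Using conditional independence once more to simplify the left-hand side to $\min_\tau p(\tau)=\mu_{\min}$, and substituting into the closed form, I obtain $\Prob[\+F_t\text{ succeeds}\mid\+L_{t-1},\*u_t=u]\ge 1-\tfrac{1}{5|S_{\ell^*}(u)|}$. Since $S\subseteq S_{\ell^*}(u)$ gives $|S|\le|S_{\ell^*}(u)|$, this is at least $1-\tfrac{1}{5|S_t|}\ge 1-\tfrac{2}{5|S_t|}$, the desired bound.

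The main obstacle I anticipate is the closed-form identity. A pointwise bound on $\mu_{\min}/p(X_S)$ using only one-sided SSM would depend on the potentially tiny marginal $\mu_u^{X_\Psi}(X_u)$ and would not suffice; the essential trick is that the Bayes reweighting of $X_S$ under $\mu_S^{X_R}$ introduces exactly the $p(\tau)$ factor needed to cancel the $1/p(\tau)$ in the filter's per-realization success probability, so the expectation telescopes to a single ratio whose denominator is $\mu_u^{X_{R\setminus\{u\}}}(X_u)$. Once this ratio form is in hand, a single invocation of weak SSM with $A=R\setminus\{u\}$ (rather than the naive $A=\Psi$) closes the argument uniformly in the marginal size.
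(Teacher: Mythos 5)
Your proposal is correct and takes essentially the same route as the paper: your Bayes-plus-telescoping computation of the closed form $\Prob[\+F_t\text{ succeeds}\mid \+L_{t-1},\*u_t=u]=\mu_{\min}/\mu_{u,\I}^{X_{R\setminus\{u\}}}(X_u)$ is exactly what the paper extracts from \eqref{eq-succeeds-prob} and \eqref{eq-set-C} (mirroring the proof of \Cref{claim-correctness}), and your final step applies \Cref{condition-SSM-weak} with $v=u$, $A=R\setminus\{u\}$, $B=S$, $a=X_u$ after the same conditional-independence rewriting, yielding the identical chain of inequalities. The only point you gloss is that the conditional Gibbs property must be justified conditioned on the \emph{entire execution log} $\+L_{t-1}$ rather than just on the current pair $(\X,\R)$; the paper handles this by the induction establishing \eqref{eq-martingale}, and your appeal to ``the loop invariant'' implicitly assumes that step.
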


Note that given $\+L_{t-1}$, the vertex $\*u_t \in \R_{t-1}$ is sampled uniformly and independently.
For any $1\leq  t \leq T$ and any execution log $\+L_{t-1}$ created by \Cref{alg:perfect-sampler-gen}, if $S_t = \emptyset$, by \Cref{claim-lower-bound}, we have
\begin{align*}
\E{Y_t \mid \+L_{t-1}, S_t = \emptyset}	= Y_{t-1} - 1;
\end{align*}
Suppose $S_t \neq \emptyset$. If $\+F_t$ fails, then $\R_{t} = \R_{t-1} \cup \partial B_t = \R_{t-1} \cup (\partial B_t \setminus \R_{t-1})$.
In other words, $|S_t|$ new vertices will be added into $\R_{t-1}$ if $\+F_t$ fails.
We have
\begin{align*}
  \E{Y_t \mid \+L_{t-1}, S_t \neq \emptyset}	 &\leq Y_{t-1} -\Prob[\mathcal{F}_t \text{ succeeds} \mid \+L_{t-1},S_t \neq \emptyset] + \Prob[\mathcal{F}_t \text{ fails} \mid \+L_{t-1},S_t \neq \emptyset] \cdot |S_t|\\
  &\leq Y_{t-1} + \frac{2}{5|S_t|} - \frac{3}{5} \tag{by \Cref{claim-lower-bound}}\\ 
  &\leq Y_{t-1} - \frac{1}{5}.
\end{align*}
Combining two cases together implies
\begin{align*}
 \E{Y_t \mid \+L_{t-1}} = \E{Y_t\mid (X_0(\R_0), \R_0), (X_1(\R_1), \R_1),\ldots, (X_{t-1}(\R_{t-1}), \R_{t-1}) } \leq Y_{t-1} - \frac{1}{5}.	
\end{align*}
We now define a sequence $Y'_0,Y'_1,\ldots,Y'_T$ where each $Y'_t = Y_t + \frac{t}{5}$. Thus, $Y'_0,Y'_1,\ldots,Y'_T$ is a super-martingale with respect to $(X_0(\R_0), \R_0), (X_1(\R_1), \R_1),\ldots, (X_T(\R_T), \R_T)$ and $T$ is a stopping time. Note that each $|Y'_t - Y'_{t-1}| \leq n + 1$ is bounded and $\E{T}$ is finite due to~\eqref{eq-dominate}. Due to the optional stopping theorem~\cite[Chapter 5]{durrett2019probability} , we have $\E{Y'_T} \leq \E{Y'_0} = \E{Y_0}$. Hence 
\begin{align*}
\E{T} \leq 5\E{Y_0} = 5n,
\end{align*}
where the last eqaution is because $\E{Y_0} = \E{|\R_0|} = n$.	
\end{proof}

\begin{proof}[Proof of \Cref{claim-lower-bound}]
Suppose $S_t = \partial B_t \setminus \+R_{t-1} = \emptyset$.
This implies $\partial B_t \subseteq \R_{t-1}$.
By the definition of $\mu_{\min}$, we have $\mu^{X_{t-1}(\partial B_t)}_{u,\I}(X_{t-1}(u)) = \mu_{\min}$ and $\Prob[\+F_t \text{ succeeds}] = 1$.
In the following proof, we  assume $S_t \neq \emptyset$.

We need the following property to prove the claim.
Fix an execution log $L_{t}$ up to time $t \geq 0$:
\begin{align*}
L_{t} = (\rho_0, R_0), (\rho_1,R_1),\ldots, (\rho_{t}, R_{t}),	
\end{align*}
where each $R_i \subseteq V$, each $\rho_i \in [q]^{R_i}$. Assume $R_{t} \neq \emptyset$ and $L_{t}$ is a feasible execution log, i.e. $\Prob[\+L_{t} = L_{t}]	 > 0$.
We claim that given the log $L_t$, the random $\X_t \in [q]^V$ satisfies $X_t(R_t) = \rho_t$ and
\begin{align}
\label{eq-martingale}
\forall \tau \in [q]^{V \setminus R_{t}}, \quad \Prob[X_{t}(V \setminus R_{t}) = \tau \mid \+L_{t}=L_{t}] = \mu_{\I}^{\rho_{t}}(\tau).	
\end{align}
\Cref{eq-martingale} is proved by the induction on $t$. If $t = 0$, we have $R_0 = V$, \Cref{eq-martingale} holds trivially. Assume \Cref{eq-martingale} holds for all $t < k$.  Fix any feasible execution log $L_{k} = (\rho_0, R_0), (\rho_1,R_1),\ldots, (\rho_{k}, R_{k})$ such that $R_k \neq \emptyset$. 
	Since $L_k$ is feasible, we have $R_{k-1} \neq \emptyset$. Consider the $k$-th iteration of the \while{} loop. The $k$-th iteration  exists because $R_{k-1} \neq \emptyset$. By induction hypothesis, conditioning on the execution log $\+L_{k-1} = (\rho_0, R_0), (\rho_1,R_1),\ldots, (\rho_{k-1}, R_{k-1})$, the random pair $(\X_{k-1},\R_{k-1})$ satisfies the \Cref{condition-invariant} and $\X_{k-1}$ is a feasible configuration (since $L_k$ is a feasible execution log, thus $\rho_{k-1}$ is feasible).  By \Cref{lemma-detailed-invariant}, conditioning on the execution log $\+L_{k-1} = (\rho_0, R_0), (\rho_1,R_1),\ldots, (\rho_{k-1}, R_{k-1})$, the random pair $(\X_{k},\R_{k})$ satisfies the \Cref{condition-invariant}. By \Cref{condition-invariant}, assuming the further condition $\R_{k} = R_k$ and $X_k(R_k) = \rho_k$, \Cref{eq-martingale} holds for $t = k$. This proves  \Cref{eq-martingale}.

Consider a feasible execution log up to time $t - 1 \geq 0$:
\begin{align*}
L_{t-1} = (\rho_0, R_0), (\rho_1,R_1),\ldots, (\rho_{t-1}, R_{t-1})
\end{align*}
satisfying $R_{t-1} \neq \emptyset$, where each $R_i \subseteq V$ and each $\rho_i \in [q]^{R_i}$.
Given the execution log $\+L_{t-1} = L_{t-1}$, we fix a vertex $u \in R_{t-1}$ and assume $\*u_t = u$.
We analyze the $t$-th iteration of the \while{} loop.
To simplify the notation, we drop the index and denote
\begin{align*}
\X = \X_{t-1},\quad R= R_{t-1},\quad\rho = \rho_{t-1},\quad B = B_t = (B_{\ell}(u) \setminus R )\cup \{u\} ,\quad S = S_t = \partial B\setminus R.
\end{align*}
Note that the vertex $\*u_t$ is sampled from $R$ uniformly and independently.
By~\eqref{eq-martingale}, given $\+L_{t-1}=L_{t-1}$ and $\*u_t = u$, it holds that $X(R) = \rho$ and $(\X, R)$ satisfies \Cref{condition-invariant}.
By \Cref{property-lower-bound}, we know $\mu_{\min}(R,u,\rho) > 0$, thus the lower bound in~\eqref{eq-proof-assume-mu-min} holds.
According to the proof of \Cref{lemma-detailed-invariant}, combining~\eqref{eq-succeeds-prob} and~\eqref{eq-set-C}, we have
\begin{align*}
\Pr[\+F_t \text{ succeeds} \mid \+L_{t-1} = L_{t-1}, \*u_t = u] = \frac{1}{\mu_{u,\I}^{\rho(R_u)} (\rho_u)}\cdot \min_{\eta \in [q]^{ S }}\mu_{u,\I}^{\rho(\Psi) \uplus \eta}(\rho_u),	
\end{align*}
where $R_u = R \setminus \{u\}$, $S = \partial B \setminus R$ and $\Psi = \partial B \cap R$.
Note that $\partial B = S \uplus \Psi$ and $u \in B$,  the set $\partial B$ separates  $u$ from $V \setminus (B \cup \partial B)$.
Since $\Psi \subseteq R_u$ and two sets $R_u$ and $B$ are disjoint,
by the conditional independence property (\Cref{property-cond-ind}), we have
$\mu_{u,\I}^{\rho(\Psi) \uplus \eta}(\rho_u) = \mu_{u,\I}^{\rho(R_u) \uplus \eta}(\rho_u)$.
This implies
\begin{align*}
\Pr[\+F_t \text{ succeeds} \mid \+L_{t-1} = L_{t-1}, \*u_t = u] = \frac{1}{\mu_{u,\I}^{\rho(R_u)} (\rho_u)}\cdot \min_{\eta \in [q]^{ S }}\mu_{u,\I}^{\rho(R_u) \uplus \eta}(\rho_u).
\end{align*}
By \Cref{fact-0}, we have $S = \partial B \setminus R \subseteq S_{\ell + 1}(u)$.
We take $A = R_u, B = S, v= u, \sigma = \rho(R_u)$ and $a = \rho_u$ in \Cref{condition-SSM-weak}, since $\dist_G(u, S) = \ell + 1 = \ell^*$ and $|S| \leq |S_{\ell^*}(u)|$, this proves that
\begin{align*}
\Pr[\+F_t \text{ succeeds} \mid \+L_{t-1} = L_{t-1}, \*u_t = u]	\geq 1 - \frac{1}{5|S_{\ell^*}(u)|} \geq 1 - \frac{1}{5|S|} \geq 	 1 - \frac{2}{5|S|}. 
\end{align*}
\end{proof}

\begin{remark}
\label{remark-improve}
Suppose the input instance from the class $\mathfrak{I}$ satisfies~\cref{condition-SSM-ratio} with some $\ell^* \geq 2$ and take $\ell = \ell^* - 1$ in \Cref{alg:perfect-sampler-gen}.
We could tweak \Cref{alg:perfect-sampler-gen} to reduce its running time to $O\left( n\cdot q^{ \Delta^{\ell^*} }  \right)$.
Let $\Psi \triangleq \partial B \cap \R$ and $S \triangleq \partial B \setminus \R$.
Note that $S \subseteq S_{\ell^*}(u)$ by \Cref{fact-0}.
  The idea is that instead of calculating $\mu_{\min}$, we may simply compute $\mu_{u,\I}^{X(\Psi) \uplus \sigma}(X_u)$ where $\sigma = \boldsymbol{1} \in [q]^{S}$ is a one-vector,   then use~\eqref{eq:condition-SSM} to get a lower bound $\mu_{\mathrm{low}}$ of $\mu_{\min}$ as
  \begin{align*}
  \mu_{\mathrm{low}} \triangleq  	\left(1 - \frac{ 1 }{5 \left\vert S_{\ell^*}(u) \right\vert}\right)\mu_{u,\I}^{X(\Psi) \uplus \sigma}(X_u).
  \end{align*}
\noindent 
By \Cref{condition-SSM-ratio}, $\mu_{\mathrm{low}} \leq \mu_{u,\I}^{X(\partial B)}(X_u)$.
Then we use $\mu_{\mathrm{low}}$ instead of $\mu_{\min}$ in the definition of $\+F$.
  It is straightforward to check that \Cref{alg:perfect-sampler-gen} is still correct with this tweak, and for each iteration of the \while{} loop, given any $\X, \R$, it holds that
  \begin{align*}
\Prob[\Fss \mid \X, \R] &= \frac{\mu_{\mathrm{low}}}{  \mu_{u,\I}^{X(\partial B)}(X_u)} =    \left(1 - \frac{1}{5|S_{\ell^*}(v)|}\right) \frac{\mu_{u,\I}^{X(\Psi) \uplus \sigma}(X_u)}{ \mu_{u,\I}^{X(\partial B)}(X_u)}= \left(1 - \frac{1}{5|S_{\ell^*}(v)|}\right) \frac{\mu_{u,\I}^{X(\Psi) \uplus \sigma}(X_u)}{ \mu_{u,\I}^{X(\Psi) \uplus X(S)}(X_u)}\\
&\geq 	\left(1 - \frac{1}{5|S_{\ell^*}(v)|}\right)^2\qquad (\text{by } S \subseteq S_{\ell^*}(u) \text{ and  \Cref{condition-SSM-ratio}})\\
&\geq 1 - \frac{2}{5|S|}. 
  \end{align*}
This proves \Cref{claim-lower-bound}. Besides, we do not need to enumerate all configurations in $[q]^S$ to compute $\mu_{\min}$, the expected running time of \Cref{alg:perfect-sampler-gen} can be reduced to $ O\left( n\cdot q^{ \Delta^{\ell^*} }  \right)$.
\end{remark}

\section{Analysis of strong spatial mixing}
In this section, we use \Cref{theorem-general-ratio} to prove other results mentioned in \Cref{section-results}.
We analyze the strong spatial mixing properties for the classes of spin systems mentioned in \Cref{section-results}, so that we can use~\Cref{theorem-general-ratio} to prove the existences of perfect samplers.
\subsection{Spin systems on sub-exponential neighborhood growth graphs}
In this section, we prove \Cref{theorem-sub-graph} using  \Cref{theorem-general-ratio}.
We need the following proposition, which explains the relation between the multiplicative form of decay in~\eqref{eq:condition-SSM} and the 
additive form of decay in~\eqref{eq:condition-standard-SSM}. Similar results appeared in~\cite{K98, feng2018local,spinka2018finitary,alexander2004mixing}.
\begin{proposition}
\label{proposition-SSM}
Let $\delta: \mathbb{N} \to \mathbb{N}$ be a non-increasing function.
Let $\mathfrak{I}$ be a class of permissive spin system instances exhibiting strong spatial mixing with decay rate $\delta$.
For every instance $\I=(G, [q], \boldsymbol{b}, \boldsymbol{A}) \in \mathfrak{I}$, where $G=(V,E)$,
for every $v \in V$, $\Lambda \subseteq V$, and any two partial configurations $\sigma, \tau \in [q]^\Lambda$ with $\ell \geq 2$,
\begin{align}
\label{eq-implied-SSM}
\forall a \in [q],\quad
\min\left(\left\vert \frac{\mu_{v, \I}^\sigma(a)}{ \mu_{v,\I}^\tau(a) } - 1 \right\vert, 1\right) \leq 10 q\cdot|S_{\lfloor \ell / 2 \rfloor }(v)|\cdot\delta(\lfloor \ell / 2 \rfloor) \quad(\text{with the convention $0/0 = 1$}),
\end{align}
where $\ell \triangleq \min\{\dist_G(v, u) \mid u \in \Lambda,\ \sigma(u)\neq\tau(u)\}\geq 2$ and  $S_{\lfloor \ell / 2 \rfloor}(v) \triangleq \{u \in V \mid \dist_G(v, u) = \lfloor \ell / 2 \rfloor\}$ denotes the sphere of radius $\lfloor \ell / 2 \rfloor$ centered at $v$ in $G$.
\end{proposition}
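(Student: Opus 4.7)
The plan is to use the classical \emph{middle-sphere} conditioning trick. Set $k=\lfloor\ell/2\rfloor\ge 1$ (this uses $\ell\ge 2$), let $T\triangleq S_k(v)\setminus\Lambda$ be the unpinned vertices on the intermediate sphere, and let $\rho$ denote the common restriction of $\sigma,\tau$ to $\Lambda\cap B_k(v)$; the two configurations agree on $\rho$ because every vertex on which they differ lies at distance $\ge\ell>k$ from $v$. By conditional independence (\Cref{property-cond-ind}) the intermediate sphere decouples the marginal at $v$ from the outer boundary:
\[
\mu^{\sigma}_{v,\I}(a)\;=\;\sum_{\eta\in[q]^T}f(\eta)\,\mu^{\sigma}_{T,\I}(\eta),\qquad f(\eta)\;\triangleq\;\mu^{\rho\uplus\eta}_{v,\I}(a),
\]
and the analogous identity holds for $\tau$; crucially, $f$ itself does not depend on the choice of $\sigma$ or $\tau$.

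The next move is to apply SSM twice. First, any two $\eta,\eta'\in[q]^T$ disagree only on $T\subseteq S_k(v)$, at distance $k$ from $v$, so SSM gives $|f(\eta)-f(\eta')|\le 2\delta(k)$ and hence $f^*-f_*\le 2\delta(k)$ with $f^*\triangleq\max f$, $f_*\triangleq\min f$. Second, each $w\in T$ is at distance $\ge\ell-k\ge k$ from any vertex where $\sigma,\tau$ differ, so SSM at $w$ bounds the single-site TV distance by $\delta(k)$; a greedy coupling that sequentially reveals the vertices of $T$ and feeds each coupled value back into both conditionings then yields $\DTV{\mu^\sigma_{T,\I}}{\mu^\tau_{T,\I}}\le|T|\,\delta(k)\le|S_k(v)|\,\delta(k)$. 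Combining these through the standard ``shift by the midpoint $(f^*+f_*)/2$'' identity produces the refined additive bound
\[
|\mu^\sigma_{v,\I}(a)-\mu^\tau_{v,\I}(a)|\;\le\;(f^*-f_*)\cdot\DTV{\mu^\sigma_{T,\I}}{\mu^\tau_{T,\I}}\;\le\;2|S_k(v)|\,\delta(k)^2,
\]
while direct SSM at $v$ gives in parallel the alternative estimate $|\mu^\sigma_{v,\I}(a)-\mu^\tau_{v,\I}(a)|\le\delta(\ell)\le\delta(k)$.

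The final step is to convert to the ratio by a case analysis on $\mu^\tau_{v,\I}(a)$. When $10q|S_k(v)|\,\delta(k)\ge 1$ the $\min(\cdot,1)$ cap trivializes the claim. When $\mu^\tau_{v,\I}(a)\ge 1/(5q|S_k(v)|)$, dividing the direct additive bound by $\mu^\tau_{v,\I}(a)$ already yields $\le 5q|S_k(v)|\,\delta(k)$. In the remaining window both marginals lie in $[f_*,f^*]$ of width at most $2\delta(k)$, and I split further: if $f_*>0$ then combining the ratio bound $f^*/f_*\le 1+2\delta(k)/f_*$ with the tighter additive estimate $2|S_k(v)|\delta(k)^2$ closes things; if $f_*=0$, then permissiveness (\Cref{definition-locally-admissible}) together with conditional independence implies that whether $\mu^{\rho\uplus\eta}_{v,\I}(a)$ vanishes is determined by a local combinatorial condition that either fires for all $\eta$ or for none, so $f$ must be identically zero, forcing $\mu^\sigma_{v,\I}(a)=\mu^\tau_{v,\I}(a)=0$ and the convention $0/0=1$ makes the LHS vanish.

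The main obstacle will be precisely that last subcase: when $\mu^\tau_{v,\I}(a)$ is tiny but nonzero, naive division of any additive bound destroys the ratio estimate. The saving is threefold --- the $\min(\cdot,1)$ cap on the left, the extra factor of $q$ on the right (which supplies the slack needed for the threshold case-splits), and the permissiveness-driven dichotomy that rules out the singular scenario $\mu^\tau_{v,\I}(a)=0<\mu^\sigma_{v,\I}(a)$ in the non-trivial regime. Stitching these subcases together with clean constants is where most of the technical care is concentrated.
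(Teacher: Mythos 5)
Your middle-sphere decomposition and the two applications of SSM (to the width of $f$ and, via a sequential coupling, to $\DTV{\mu^{\sigma}_{T,\I}}{\mu^{\tau}_{T,\I}}$) are sound, but the final additive-to-multiplicative conversion has a genuine gap that none of your three ``savings'' closes. Writing $k=\lfloor\ell/2\rfloor$, in the regime $10q|S_{k}(v)|\delta(k)<1$ with $0<\mu^{\tau}_{v,\I}(a)<1/(5q|S_{k}(v)|)$ and $f_*>0$, your argument needs a quantitative lower bound on $\mu^{\tau}_{v,\I}(a)$ (equivalently on $f_*$): dividing the refined additive estimate $2|S_{k}(v)|\delta(k)^2$ by $\mu^{\tau}_{v,\I}(a)$ yields the target $10q|S_{k}(v)|\delta(k)$ only when $\mu^{\tau}_{v,\I}(a)\ge \delta(k)/(5q)$, and the bound $f^*/f_*\le 1+2\delta(k)/f_*$ only helps when $f_*\ge 1/(5q|S_{k}(v)|)$. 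Neither is available: permissiveness rules out ``forced'' zeros but gives no quantitative lower bound on conditional marginals (no analogue of the $\gamma$ of \Cref{condition-lower-bound} is assumed in \Cref{proposition-SSM}), and a soft-constraint system with strong interactions between $v$ and its pinned neighbours inside $\Lambda\cap B_{k}(v)$ can make every $f(\eta)$ --- hence both marginals --- positive yet far smaller than $\delta(k)/q$; there all your estimates are vacuous, while the $\min(\cdot,1)$ cap is irrelevant because the right-hand side of \eqref{eq-implied-SSM} is below $1$. (A secondary issue: when $k=1$ the sphere $T$ meets $\Gamma(v)$, so whether $f(\eta)=\mu^{\rho\uplus\eta}_{v,\I}(a)$ vanishes genuinely depends on $\eta$ --- e.g.\ in the hardcore model --- and your ``all or none'' dichotomy for the subcase $f_*=0$ fails.)

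The paper's proof is organized precisely to avoid any lower bound on the marginal at $v$. Rather than averaging over all $\eta\in[q]^T$, it selects a single middle-sphere configuration $\rho$ by a greedy construction so that every step-wise conditional probability satisfies $p_i\ge 1/q$, expresses each of $\mu^{\sigma}_{v,\I}(a)$ and $\mu^{\tau}_{v,\I}(a)$ via Bayes through this $\rho$, and uses \Cref{property-cond-ind} to see that the common local factor $\mu^{\sigma\uplus\rho}_{v,\I}(a)=\mu^{\tau\uplus\rho}_{v,\I}(a)$ cancels \emph{exactly} in the ratio; what remains is a product of at most $2|S_{k}(v)|$ factors of the form $(p_i\pm\epsilon)/(p_i\mp\epsilon)$ with $p_i\ge 1/q$ (\Cref{claim-exist-rho}), which is where the factor of $q$ actually enters. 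This exact cancellation of the (possibly tiny) local factor is the idea missing from your proposal; without it, the additive bounds you derive cannot be upgraded to the multiplicative bound \eqref{eq-implied-SSM}, so the proof does not go through as written.
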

The proof of \Cref{proposition-SSM} is deferred to the end of this section.
We first use \Cref{proposition-SSM} to prove \Cref{theorem-sub-graph}.
\begin{proof}[Proof of \Cref{theorem-sub-graph}]
Let $q$ be a finite integer.
Suppose  $\mathfrak{I}$ is a class of $q$-state spin systems that is defined on a class of graphs that have sub-exponential neighborhood growth in \Cref{definition-sub-exp-graph} with function $s:\mathbb{N}\to\mathbb{N}$.
Suppose $\mathfrak{I}$ exhibits strong spatial mixing with exponential decay with some constants $\alpha > 0, \beta > 0$. Let $\delta$ be the function $\delta(x) = \alpha \exp(-\beta x)$.

Fix a instance $\I=(G, [q], \boldsymbol{b}, \boldsymbol{A}) \in \mathfrak{I}$, where $G=(V,E)$. 
By \Cref{proposition-SSM},  we have
for any $\Lambda \subseteq V$, any $v \in V$, and any two partial configurations $\sigma, \tau \in [q]^\Lambda$
satisfying $\ell \triangleq \min\{\dist_G(v, u) \mid u \in \Lambda,\ \sigma(u)\neq\tau(u)\}\geq 2$,
\begin{align}
\label{eq-proof-app-1}
\forall a \in [q],\quad
\min\left(\left\vert \frac{\mu_{v, \I}^\sigma(a)}{ \mu_{v,\I}^\tau(a) } - 1 \right\vert, 1\right) &\leq 10 q \cdot |S_{\lfloor \ell / 2 \rfloor }(v)| \cdot \delta(\lfloor \ell / 2 \rfloor ) \notag\\
(\text{by }|S_r(v)| \leq s(r))\quad &\leq 10 q \cdot s(\lfloor \ell / 2 \rfloor) \cdot \delta(\lfloor \ell / 2 \rfloor )
\end{align}
Note that $\delta(\ell) = \alpha \exp(-\beta \ell)$.
We take $\ell_0 = \ell_0(q, \alpha, \beta, s)$ sufficiently large such that $\ell_0 \geq 2$ and
\begin{align}
\label{eq-proof-app-2}
10\alpha q  \cdot s(\lfloor\ell_0 / 2\rfloor )  \exp(-\beta \lfloor\ell_0 / 2\rfloor) \leq \frac{1}{5s( \ell_0 )} \leq \frac{1}{5 \left\vert S_{\ell_0}(v) \right\vert}.
\end{align}
Note that~\eqref{eq-proof-app-2} is equivalent to
\begin{align}
\label{eq-set-ell}
 \alpha\exp(-\beta \lfloor\ell_0 / 2\rfloor) \leq \frac{1}{50q \cdot s(\lfloor\ell_0 / 2\rfloor) \cdot s(\ell_0)}.
\end{align}
Such $\ell_0 = \ell_0(q, \alpha, \beta, s)$ must exist because $s(r)=\exp(o(r))$ for all $r \geq 0$.
Combining~\eqref{eq-proof-app-1} and~\eqref{eq-proof-app-2} implies that $\I$ satisfies \Cref{condition-SSM-ratio} with $\ell_0 \geq 2$.
By \Cref{theorem-general-ratio},  if the parameter $\ell$ in Algorithm~\ref{alg:perfect-sampler-gen} is set so that $\ell = \ell_0-1$, given $\I$, the expected running time of \Cref{alg:perfect-sampler-gen} is
$
n\cdot q^{O\left(  \Delta^{\ell_0}  \right)}.
$
Since $\ell_0 = O(1)$, $q=O(1)$ and $\Delta \leq s(1) = O(1)$,
the expected running time of \Cref{alg:perfect-sampler-gen} is  $O(n)$.
\end{proof}

We now prove \Cref{proposition-SSM}. Similar results are proved in~\cite{K98,feng2018local,spinka2018finitary,alexander2004mixing}.

\begin{proof}[Proof of \Cref{proposition-SSM}]
Fix a instance $\I=(G, [q], \boldsymbol{b}, \boldsymbol{A}) \in \mathfrak{I}$, where $G=(V,E)$.
Fix two partial configurations $\sigma, \tau \in [q]^\Lambda$
with $\ell \triangleq \min\{\dist_G(v, u) \mid u \in \Lambda,\ \sigma(u)\neq\tau(u)\}$.  We use $D \triangleq \{v \in \Lambda \mid \sigma(v) \neq \tau(v) \}$ to denote the set at which $\sigma$ and $\tau$ disagree. Fix a spin $a \in [q]$. Since $\I$ is permissive (\Cref{definition-locally-admissible}),  we have
\begin{align*}
\mu^{\sigma}_{v,\I}(a) = 0 \quad\Longleftrightarrow\quad 	b_v(a)\prod_{u \in \Gamma_G(v) \cap \Lambda}A_{uv}(a,\sigma(v)) = 0,\\
\mu^{\tau}_{v,\I}(a) = 0 \quad\Longleftrightarrow\quad 	b_v(a)\prod_{u \in \Gamma_G(v) \cap \Lambda}A_{uv}(a,\tau(v)) = 0,
\end{align*}  
where $\Gamma_G(v)$ is the neighborhood of $v$ in $G$.
Since $\ell \geq 2$, there is no edge between $v$ and $D$, we have $\Gamma_G(v) \cap D = \emptyset$.
This implies $\mu^{\sigma}_{v,\I}(a)  = 0$ if and only if $\mu^{\tau}_{v,\I}(a)  = 0$. If $\mu^{\sigma}_{v,\I}(a)  = \mu^{\tau}_{v,\I}(a)  = 0$, the proposition holds trivially. We assume
\begin{align}
\label{eq-proof-assume}
\mu^{\sigma}_{v,\I}(a) >0 \land \mu^{\tau}_{v,\I}(a) > 0.
\end{align}
 
Define the set of vertices 
$H \triangleq S_{\lfloor \ell/2 \rfloor}(v) \setminus \Lambda$,	
where $S_{\lfloor \ell /2 \rfloor}(v) = \{u \in V \mid \dist(u,v) = \lfloor \ell/2 \rfloor \}$ is the sphere of radius $\lfloor \ell/2 \rfloor$ centered at $v$ in graph $G$. 
By the definitions, we have $H \cap D = \emptyset$. 
If $H = \emptyset$, then $S_{\lfloor \ell/2 \rfloor}(v)  \subseteq \Lambda$, 
the proposition holds due to the conditional independence property. In the rest of the proof, we assume $H \neq \emptyset$.

For any two disjoint sets $S,S' \subseteq V$ and any partial configurations $\eta \in [q]^S, \eta' \in [q]^{S'}$, we use $\mu_{\I}^{S\gets \eta, S' \gets \eta'}$ to denote the distribution $\mu_{\I}^{\eta \uplus \eta'}$.

For any  $\rho \in [q]^H$ satisfying $\mu^{\Lambda \gets \sigma,v \gets a}_{H,\I}(\rho) > 0$ and $\mu^{\Lambda \gets \tau,v \gets a}_{H,\I}(\rho) > 0$
we have
\begin{align*}
\mu^{\sigma}_{v,\I}(a) = \frac{ \mu_{H,\I}^{\sigma}(\rho) \cdot \mu^{\sigma \uplus \rho}_{v,\I}(a) }{\mu^{\Lambda \gets \sigma,v \gets a}_{H,\I}(\rho)},\quad
 \mu^{ \tau}_{v,\I}(a)= \frac{ \mu_{H,\I}^{\tau}(\rho) \cdot \mu^{\tau \uplus \rho}_{v,\I}(a) }{\mu^{\Lambda \gets \tau,v \gets a}_{H,\I}(\rho)}.
\end{align*}
The first equation holds since $\mu_{H,\I}^{\sigma}(\rho) \cdot \mu^{\sigma \uplus \rho}_{v,\I}(a) =\mu^{\sigma}_{v,\I}(a) \cdot \mu^{\Lambda \gets \sigma,v \gets a}_{H,\I}(\rho)$ and $\mu^{\Lambda \gets \sigma,v \gets a}_{H,\I}(\rho) > 0$; 
the second equation holds similarly. 
Note that $\mu^{\sigma}_{v,\I}(a) >0$ and $\mu^{\tau}_{v,\I}(a) > 0$.
We have
\begin{align*}
\frac{\mu^{\sigma}_{v,\I}(a)}{\mu^{ \tau}_{v,\I}(a)} = \left(\frac{\mu^{ \sigma \uplus \rho}_{v,\I}(a)}{\mu^{\tau \uplus \rho}_{v,\I}(a)}\right)  \left(\frac{\mu_{H,\I}^{ \sigma}(\rho) }{\mu_{H,\I}^{\tau}(\rho) }\right)  \left( \frac{\mu^{\Lambda \gets \tau,v \gets a}_{H,\I}(\rho)}{\mu^{\Lambda \gets \sigma,v \gets a}_{H,\I}(\rho)} \right).
\end{align*}
Note that $(\Lambda \setminus D ) \cup H$ separates $v$ from $D$ in graph $G$, 
and the two configurations $\sigma \uplus \rho$ and $\tau \uplus \rho$ disagree only at $D$. 
By the conditional independence property, we have $\mu^{\sigma \uplus \rho}_{v,\I}(a) = \mu^{\tau \uplus \rho}_{v,\I}(a)$. Hence, we have
\begin{align}
\label{eq-proof-ratios}
\frac{\mu^{\sigma}_{v,\I}(a)}{\mu^{ \tau}_{v,\I}(a)} = \left(\frac{\mu_{H,\I}^{\sigma}(\rho) }{\mu_{H,\I}^{ \tau}(\rho) }\right)  \left( \frac{\mu^{\Lambda \gets \tau,v \gets a}_{H,\I}(\rho)}{\mu^{\Lambda \gets \sigma,v \gets a}_{H,\I}(\rho)} \right).
\end{align}
Note that~\eqref{eq-proof-ratios} holds for any $\rho \in [q]^H$ satisfying $\mu^{\Lambda \gets \sigma,v \gets a}_{H,\I}(\rho) > 0$ and $\mu^{\Lambda \gets \tau,v \gets a}_{H,\I}(\rho) > 0$. Our goal is to choose a suitable $\rho$ and bound the RHS.
Let
\begin{align*}
\epsilon \triangleq \delta(\lfloor \ell / 2 \rfloor). 	
\end{align*}
Without loss of generality,
we assume
\begin{align}
\label{eq-proof-assume2}
10 q\cdot|S_{\lfloor \ell / 2 \rfloor }(v)|\cdot\delta(\lfloor \ell / 2 \rfloor) = 10 q \epsilon\cdot|S_{\lfloor \ell / 2 \rfloor }(v)| < 1.	
\end{align}
If~\eqref{eq-proof-assume2} does not hold, then the inequality~\eqref{eq-implied-SSM} holds trivially.
We have the following claim.
\begin{claim}
\label{claim-exist-rho}
Assume~\eqref{eq-proof-assume2}.
There exists a configuration $\rho \in [q]^H$ satisfying $\mu^{\Lambda \gets \sigma,v \gets a}_{H,\I}(\rho) > 0$ and $\mu^{\Lambda \gets \tau,v \gets a}_{H,\I}(\rho) > 0$ such that
\begin{align*}
\left( 1 - \frac{2q\epsilon}{1+q\epsilon} \right)^{2m} \leq \left(\frac{\mu_{H,\I}^{\sigma}(\rho) }{\mu_{H,\I}^{ \tau}(\rho) }\right) \left( \frac{\mu^{\Lambda \gets \tau,v \gets a}_{H,\I}(\rho)}{\mu^{\Lambda \gets \sigma,v \gets a}_{H,\I}(\rho)} \right) \leq \left( 1 + \frac{2q\epsilon}{1-q\epsilon} \right)^{2m},
\end{align*}
where $ m \triangleq |S_{\lfloor \ell / 2 \rfloor }(v)|$ and $\epsilon \triangleq \delta(\lfloor \ell / 2 \rfloor)$.
\end{claim}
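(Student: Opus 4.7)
The plan is to construct $\rho\in[q]^H$ coordinate by coordinate via a greedy pigeonhole, then bound the desired product of ratios by telescoping via the chain rule. Enumerate $H=\{u_1,\ldots,u_k\}$ in any order, where $k=|H|\le m$. At step $i$, having already fixed $\rho_{u_1},\ldots,\rho_{u_{i-1}}$, I introduce the four conditional marginal distributions at $u_i$:
\begin{align*}
p_i&\triangleq\mu_{u_i,\I}^{\sigma\uplus\rho_{<i}}, & p'_i&\triangleq\mu_{u_i,\I}^{\tau\uplus\rho_{<i}},\\
r_i&\triangleq\mu_{u_i,\I}^{\Lambda\gets\sigma,\,v\gets a,\,H_{<i}\gets\rho_{<i}}, & r'_i&\triangleq\mu_{u_i,\I}^{\Lambda\gets\tau,\,v\gets a,\,H_{<i}\gets\rho_{<i}}.
\end{align*}
The chain rule then reduces the claim to controlling $\prod_i p_i(\rho_{u_i})/p'_i(\rho_{u_i})$ and $\prod_i r'_i(\rho_{u_i})/r_i(\rho_{u_i})$.

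The next step is to apply the strong spatial mixing hypothesis four times. Since every $u_i\in S_{\lfloor\ell/2\rfloor}(v)$ while the disagreement set $D\subseteq\Lambda$ lies at distance $\ge\ell$ from $v$, the triangle inequality gives $\dist_G(u_i,D)\ge\lceil\ell/2\rceil\ge\lfloor\ell/2\rfloor$, so the standard form of SSM directly yields $\DTV{p_i}{p'_i}\le\epsilon$ and $\DTV{r_i}{r'_i}\le\epsilon$. To also compare $p_i$ with $r_i$ (which have different conditioning sets), I would write $p_i$ as the mixture $\sum_{a'}\mu_{v,\I}^{\sigma\uplus\rho_{<i}}(a')\cdot\mu_{u_i,\I}^{\sigma\uplus\rho_{<i}\uplus(v\gets a')}$, then use convexity of total variation together with SSM applied to the single disagreement at $v$ (at distance $\lfloor\ell/2\rfloor$ from $u_i$) to conclude $\DTV{p_i}{r_i}\le\epsilon$, and symmetrically $\DTV{p'_i}{r'_i}\le\epsilon$. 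Since TV distance upper bounds pointwise differences, we get $|x_i(c)-y_i(c)|\le\epsilon$ for every $c\in[q]$ and every pair among these four distributions.

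Now pick $\rho_{u_i}$ by pigeonhole: choose $c\in[q]$ with $p_i(c)\ge 1/q$ and set $\rho_{u_i}\triangleq c$. The four pointwise $\epsilon$-bounds immediately give $p'_i(c),r_i(c)\ge 1/q-\epsilon$ and $r'_i(c)\ge 1/q-2\epsilon$, each strictly positive under assumption~\eqref{eq-proof-assume2} (which yields $q\epsilon<1/(10m)\le 1/10$); this proves the positivity half of the claim by the chain rule. A direct computation using $p_i(c)\ge 1/q$ and $r_i(c)\ge(1-q\epsilon)/q$ then shows
\[
\frac{p_i(c)}{p'_i(c)}\in\Bigl[\tfrac{1}{1+q\epsilon},\tfrac{1}{1-q\epsilon}\Bigr],\qquad \frac{r'_i(c)}{r_i(c)}\in\Bigl[\tfrac{1-2q\epsilon}{1-q\epsilon},\tfrac{1}{1-q\epsilon}\Bigr],
\]
and a short arithmetic check (valid because $q\epsilon$ is small) confirms that the per-vertex combined factor $(p_i(c)/p'_i(c))\cdot(r'_i(c)/r_i(c))$ lies within $\bigl[(\tfrac{1-q\epsilon}{1+q\epsilon})^{2},(\tfrac{1+q\epsilon}{1-q\epsilon})^{2}\bigr]$. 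Multiplying over the $k\le m$ vertices of $H$ and using monotonicity of $x\mapsto x^{t}$ in the exponent recovers the claimed range $\bigl[(1-\tfrac{2q\epsilon}{1+q\epsilon})^{2m},(1+\tfrac{2q\epsilon}{1-q\epsilon})^{2m}\bigr]$, after the identities $(1-q\epsilon)/(1+q\epsilon)=1-2q\epsilon/(1+q\epsilon)$ and $(1+q\epsilon)/(1-q\epsilon)=1+2q\epsilon/(1-q\epsilon)$.

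The main obstacle is the fourth SSM bound $\DTV{p_i}{r_i}\le\epsilon$: since $p_i$ and $r_i$ have different conditioning sets (one conditions on $v$, the other does not), one cannot invoke the SSM hypothesis directly. The mixture-plus-convexity trick above is what makes the single spin $c$ chosen by pigeonhole work simultaneously for all four distributions, and is the key conceptual step—without it, pigeonhole on $p_i$ alone would not guarantee that $r_i(c)$ is also bounded away from zero, and so the per-vertex bound on $r'_i(c)/r_i(c)$ could degrade.
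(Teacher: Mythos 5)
Your proposal is correct and follows essentially the same route as the paper: a greedy/pigeonhole construction of $\rho$ vertex by vertex, the chain rule to telescope, strong spatial mixing invoked at radius $\lfloor \ell/2 \rfloor$, and a convexity argument to bridge the conditionings with and without the pin $v\gets a$. The only cosmetic difference is where convexity enters: you compare the four conditional marginals $p_i,p'_i,r_i,r'_i$ pointwise at each vertex via a mixture-plus-TV-convexity step, whereas the paper keeps $v$ pinned to an arbitrary spin $c$ in the per-vertex bounds and only at the end writes $\mu_{H,\I}^{\sigma}(\rho)$ and $\mu_{H,\I}^{\tau}(\rho)$ as convex combinations over the spin at $v$; both yield the same $(\cdot)^{2m}$ bounds.
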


The inequality \eqref{eq-proof-assume2} implies that
\begin{align}
  q\epsilon m\le \frac{1}{10}.
  \label{eq-proof-assume2p}
\end{align}
Combining~\Cref{claim-exist-rho} with the above, we have
\begin{align*}
\left(\frac{\mu_{H,\I}^{\sigma}(\rho) }{\mu_{H,\I}^{ \tau}(\rho) }\right)  \left( \frac{\mu^{\Lambda \gets \tau,v \gets a}_{H,\I}(\rho)}{\mu^{\Lambda \gets \sigma,v \gets a}_{H,\I}(\rho)} \right) &\leq \exp \left( \frac{4q\epsilon m}{1-q\epsilon} \right)\\
 &\leq \exp\left( 5q\epsilon m \right) \tag{by~\eqref{eq-proof-assume2p}}\\
 &\leq 1 + 10q\epsilon m. \tag{by~\eqref{eq-proof-assume2p}}
\end{align*}
Similarly, we have
\begin{align*}
\left(\frac{\mu_{H,\I}^{\sigma}(\rho) }{\mu_{H,\I}^{ \tau}(\rho) }\right) \left( \frac{\mu^{\Lambda \gets \tau,v \gets a}_{H,\I}(\rho)}{\mu^{\Lambda \gets \sigma,v \gets a}_{H,\I}(\rho)} \right) &\geq \left( 1 - 2q\epsilon \right)^{ 2m}\\
&\geq \exp\left( -8q\epsilon m \right) \tag{by~\eqref{eq-proof-assume2p}}\\
 &\geq 1 - 10q\epsilon m.
\end{align*}
Recall $ m \triangleq |S_{\lfloor \ell / 2 \rfloor }(v)|$ and $\epsilon \triangleq \delta(\lfloor \ell / 2 \rfloor)$. This proves the proposition.
\end{proof}
\begin{proof}[Proof of Claim~\ref{claim-exist-rho}]
Suppose $|H| = h\geq1$.
Let $H = \{v_1,v_2,\ldots,v_h\}$. Define a sequence of subsets $H_0,H_1,\ldots,H_h$ as $H_i \triangleq \{v_j \mid 1\leq j\leq i\}$. Note that $H_0 = \emptyset$ and $H_h=H$. We now construct the configuration $\rho \in [q]^H$ by the following $h$ steps.
\begin{itemize}
\item initially, $\rho  = \emptyset$ is an empty configuration;
\item in $i$-th step, note that $\rho \in [q]^{H_{i-1}}$, choose $c_i \in [q]$ that  maximizes $\mu_{v_i,\I}^{\Lambda\gets \sigma,v \gets a,H_{i-1} \gets \rho}(c_i)$ (break tie arbitrarily), extend $\rho$ further at position $v_i$ and set $\rho(v_i) = c_i$, thus $\rho \in [q]^{H_{i}}$ after the $i$-th step.
\end{itemize}
By the construction, we have
\begin{align*}
\forall  1\leq i \leq  h,\quad 	\mu_{v_i,\I}^{\Lambda\gets \sigma,v \gets a,H_{i-1} \gets \rho(H_{i-1})}(\rho(v_i)) \geq \frac{1}{q} > 0.
\end{align*}
Recall $H \triangleq S_{\lfloor \ell/2 \rfloor}(v) \setminus \Lambda$. We have $\dist_G(H, D) \geq \ell - \lfloor \ell / 2 \rfloor \geq \lfloor\ell / 2\rfloor$, where $D$ is the set at which $\sigma$ and $\tau$ disagree, and $\dist_G(H, D) \triangleq \min\{ \dist_G(u_1,u_2) \mid u_1 \in H \land u_2 \in D \}$.
Recall $\epsilon \triangleq \delta(\lfloor \ell / 2 \rfloor)$ and $\delta$ is a non-increasing function.
By the strong spatial mixing property in~\Cref{definition-standard-SSM}, we have
\begin{align*}
\forall  1\leq i \leq  h,\quad 	\mu_{v_i,\I}^{\Lambda\gets \tau,v \gets a,H_{i-1} \gets \rho(H_{i-1})}(\rho(v_i)) \geq \frac{1}{q} -\epsilon > 0,
\end{align*}
where $\frac{1}{q} -\epsilon > 0$ holds due to~\eqref{eq-proof-assume2}. By the chain rule, we have $\mu_{H,\I}^{\Lambda \gets \sigma, v \gets a}(\rho) > 0$ and $\mu_{H,\I}^{\Lambda \gets \tau, v \gets a}(\rho) > 0$. 

We now prove that $\rho$ satisfies the inequalities in \Cref{claim-exist-rho}.
For any $1\leq i \leq h$, define
\begin{align}
\label{eq-def-pi}
p_i \triangleq 		\mu_{v_i,\I}^{\Lambda\gets \sigma,v \gets a,H_{i-1} \gets \rho(H_{i-1})}(\rho(v_i)).
\end{align}
Recall that $\dist_G(H,v) \geq \lfloor\ell / 2\rfloor$ and $\dist_G(H, D) \geq \lfloor\ell / 2\rfloor$. Recall $\epsilon \triangleq \delta(\lfloor \ell / 2 \rfloor)$ and $\delta$ is a non-increasing function. By the strong spatial mixing property in~\Cref{definition-standard-SSM}, we have for any $c \in [q]$ and any $1\leq i \leq  h$,
\begin{align}
0 <p_i -\epsilon \leq \mu_{v_i,\I}^{\Lambda\gets \sigma,v \gets c,H_{i-1} \gets \rho(H_{i-1})}(\rho(v_i)) \leq p_i + \epsilon,\label{eq-lu-1}\\
0< p_i -\epsilon \leq \mu_{v_i,\I}^{\Lambda\gets \tau,v \gets c,H_{i-1} \gets \rho(H_{i-1})}(\rho(v_i)) \leq p_i + \epsilon.\label{eq-lu-2}
\end{align}
Note that $p_i - \epsilon \geq \frac{1}{q} - \epsilon > 0$ due to~\eqref{eq-proof-assume2}. Combining ~\eqref{eq-lu-1}, ~\eqref{eq-lu-2} and the chain rule implies
\begin{align*}
\forall c,c'\in[q],\quad
\prod_{i=1}^h\left(\frac{p_i - \epsilon}{p_i + \epsilon}\right)	 \leq \frac{\mu^{\Lambda \gets \tau,v \gets c}_{H,\I}(\rho)}{\mu^{\Lambda \gets \sigma,v \gets c'}_{H,\I}(\rho)}  \leq \prod_{i=1}^h \left(\frac{p_i + \epsilon}{p_i- \epsilon}\right)
\end{align*}
Note that $p_i \geq \frac{1}{q}$ for all $1\leq i \leq h$ due the construction of $\rho$, and $q\epsilon < 1$ due to~\eqref{eq-proof-assume2}.  We have
\begin{align}
\label{eq-proof-lower-up-1}
\forall c,c'\in[q],\quad
\left( 1 - \frac{2q\epsilon}{1+q\epsilon} \right)^h \leq \frac{\mu^{\Lambda \gets \tau,v \gets c}_{H,\I}(\rho)}{\mu^{\Lambda \gets \sigma,v \gets c'}_{H,\I}(\rho)} \leq \left( 1 + \frac{2q\epsilon}{1-q\epsilon} \right)^h,
\end{align}
and
\begin{align}
\label{eq-proof-lower-up-2}
\forall c,c'\in[q],\quad
\left( 1 - \frac{2q\epsilon}{1+q\epsilon} \right)^h \leq \frac{\mu^{\Lambda \gets \sigma,v \gets c}_{H,\I}(\rho)}{\mu^{\Lambda \gets \tau,v \gets c'}_{H,\I}(\rho)} \leq \left( 1 + \frac{2q\epsilon}{1-q\epsilon} \right)^h.
\end{align}
Note that
\begin{align*}
\mu^{\sigma}_{H,\I}(\rho) = \sum_{c \in [q]}	\mu^{\sigma}_{v,\I}(c) \mu^{\Lambda \gets \sigma,v \gets c}_{H,\I}(\rho),\quad \mu^{\tau}_{H,\I}(\rho) = \sum_{c \in [q]}	\mu^{\tau}_{v,\I}(c) \mu^{\Lambda \gets \tau,v \gets c}_{H,\I}(\rho).
\end{align*}
$\mu^{\sigma}_{H,\I}(\rho)$ is a convex combination of $\mu^{\Lambda \gets \sigma,v \gets c}_{H,\I}(\rho)$, and $\mu^{\tau}_{H,\I}(\rho)$ is a convex combination of $\mu^{\Lambda \gets \tau,v \gets c}_{H,\I}(\rho)$. By~\eqref{eq-proof-lower-up-1} and~\eqref{eq-proof-lower-up-2}, it holds that
\begin{align*}
\left( 1 - \frac{2q\epsilon}{1+q\epsilon} \right)^{2h } \leq \left(\frac{\mu_{H,\I}^{\sigma}(\rho) }{\mu_{H,\I}^{ \tau}(\rho) }\right) \left( \frac{\mu^{\Lambda \gets \tau,v \gets a}_{H,\I}(\rho)}{\mu^{\Lambda \gets \sigma,v \gets a}_{H,\I}(\rho)} \right) \leq \left( 1 + \frac{2q\epsilon}{1-q\epsilon} \right)^{2h }.
\end{align*}
Note that $h = |H|$ and $H \subseteq S_{\lfloor \ell /2 \rfloor}(v)$, then $m = |S_{\lfloor \ell /2 \rfloor}(v)| \geq h$.
This proves the claim.
\end{proof}

\subsection{Spin systems on general graphs}
In this section, we prove \Cref{theorem-general} by showing that \Cref{condition-lower-bound} implies \Cref{condition-SSM-ratio}.
\begin{proof}[Proof of \Cref{theorem-general}]
Fix a instance $\I = (G,[q], \boldsymbol{A},\boldsymbol{b}) \in \mathfrak{I}$
satisfying \Cref{condition-lower-bound} with parameter $\ell = \ell(q) \geq  2$.
Fix subset $\Lambda \subseteq V$ and vertex $v \in V \setminus \Lambda$.
For any two partial configurations $\sigma, \tau \in [q]^\Lambda$ satisfying $\min\{\dist_G(v, u) \mid u \in \Lambda,\ \sigma(u)\neq\tau(u)\} =\ell \geq 2$, we claim
\begin{align}
\label{eq-claim-application}
\forall a \in [q], \quad \mu^{\sigma}_{v,\I}(a) = 0 \quad \Longleftrightarrow \quad \mu^{\tau}_{v,\I}(a) = 0.
\end{align}
Let $D \triangleq \{u \in \Lambda \mid \sigma(u) \neq \tau(u) \}$, $H\triangleq \Lambda \setminus D$ and $\rho \triangleq \sigma_H = \tau_H$. 
Since $\ell \geq 2$, $\Gamma_G(v) \cap \Lambda = \Gamma_G(v) \cap H$, where $\Gamma_G(v)$ is the neighborhood of $v$ in $G$.
Since $\I$ is a permissive spin system (\Cref{definition-locally-admissible}), $\mu^{\sigma}_{v,\I}(a) = 0$ if and only if $b_v(a)\prod_{u \in \Gamma_G(v) \cap H}A_{uv}(a,\rho_u) = 0$; similarly, $\mu^{\tau}_{v,\I}(a) = 0$ if and only if $b_v(a)\prod_{u \in \Gamma_G(v) \cap H}A_{uv}(a,\rho_u) = 0$. This proves~\eqref{eq-claim-application}. 

If $\mu^{\sigma}_{v,\I}(a) = \mu^{\tau}_{v,\I}(a) = 0$, then~\eqref{eq:condition-SSM} holds trivially. Otherwise, by \Cref{condition-lower-bound}, $\mu^{\sigma}_{v,\I}(a) \geq \gamma$ and $\mu^{\tau}_{v,\I}(a) \geq \gamma$, where $\gamma = \gamma(\Lambda, v)> 0$ is positive and depends only on $\Lambda$ and $v$.
By~\eqref{eq:condition-stronger-lower} and~\eqref{eq:condition-stronger-SSM-general}, we have
\begin{align*}
\left\vert \frac{\mu_{v, \I}^\sigma(a)}{ \mu_{v,\I}^\tau(a) } - 1 \right\vert \leq \frac{\gamma+ \DTV{\mu^{\sigma}_{v,\I}}{\mu^{\tau}_{v,\I}} }{\gamma} - 1 \leq \frac{1}{5|S_{\ell}(v)|}.
\end{align*}
This implies that any instance   $\I = (G,[q], \boldsymbol{A},\boldsymbol{b}) \in \mathfrak{I}$ satisfies \Cref{condition-SSM-ratio} with parameter $\ell=\ell(q) \geq 2$.
\Cref{theorem-general} is a corollary of \Cref{theorem-general-ratio}.
\end{proof}


\subsection{Uniform list coloring}
\label{section-list-coloring}
We now prove  \Cref{corollary-main-coloring}.
Let $\mathfrak{L}$ be a class of list coloring instances with at most $q$ colors for a finite $q>0$.
Let $\alpha^* \approx 1.763\ldots$ be the positive root of the equation $x^x =\mathrm{e}$.
Suppose there exist  $\alpha > \alpha^*$ and $\beta \geq \frac{\sqrt{2}}{\sqrt{2}-1}$ satisfying $(1-1/\beta) \alpha \mathrm{e}^{\frac{1}{\alpha}(1-1/\beta)} > 1$ such that for all $\I=(G=(V,E),[q],\+L) \in \mathfrak{L}$, the graph $G$ is triangle-free and
\begin{align*}
\forall v \in V, \quad |L(v)| \geq \alpha \deg_G(v) + \beta.
\end{align*}
Gamarnik, Katz, and Misra~\cite{GKM15} proved that $\mathfrak{L}$ exhibits the strong spatial mixing with exponential decay. If $\mathfrak{L}$ is defined on sub-exponential neighborhood growth graphs, then by \Cref{theorem-sub-graph}, the linear time perfect sampler exists for every instance in $\mathfrak{L}$.

There are two remaining cases in \Cref{corollary-main-coloring}.
We now assume that  the class $\mathfrak{L}$ of list coloring instances satisfies one of the following two conditions.
\begin{enumerate}[label=(\Roman*)]
\item \label{condition-color-1} 
there is an $s:\mathbb{N}\to \mathbb{N}$ with $s(\ell) = \exp(o(\ell))$ such that
for any $\I=(G = (V,E),[q],\+L) \in \mathfrak{L}$,
\begin{align*}
\forall v \in V, \ell \geq 0, \quad & |S_{\ell}(v)|\leq s(\ell),  \\
\forall v \in V, \quad &|L(v)| \geq 2\deg_G(v);	
\end{align*}
\item \label{condition-color-2} for any $\I=(G=(V,E),[q],\+L) \in \mathfrak{L}$,
\begin{align*}
\forall v \in V, \quad |L(v)| \geq \Delta^2  - \Delta + 2.
\end{align*}
\end{enumerate}


%
%
\begin{lemma}
\label{lemma-coloring-to-real}
Let $\mathfrak{L}$ be a class of list coloring instances with at most $q$ colors for a finite $q>0$. Suppose $\mathfrak{L}$ satisfies \ref{condition-color-1} or \ref{condition-color-2}. There exist finite $A > 0$ and $\theta > 0$ such that for every  $\I =(G,[q],\List) \in \mathfrak{L}$, where $G=(V,E)$, 
%
for any $v \in V$, any $\Lambda \subseteq V$, and any $\sigma, \tau \in [q]^\Lambda$
with $\ell \triangleq \min\{\dist_G(v, u) \mid u \in \Lambda,\ \sigma(u)\neq\tau(u)\} = \Omega(q\log q)$, it holds that
\begin{align*}
  \forall a \in [q]: \quad \left\vert \frac{\mu_{v, \I}^\sigma(a)}{ \mu_{v,\I}^\tau(a) } - 1 \right\vert \leq \frac{ A \mathrm{e}^{-\theta \ell } }{\left\vert S_{\ell}(v) \right\vert}
  \quad(\text{with the convention $0/0 = 1$}),
\end{align*}
where $A = A(q, s) > 0$ and $\theta= \frac{1}{2q}> 0$ if $\mathfrak{L}$ satisfies~\ref{condition-color-1} with the function $s: \mathbb{N}\rightarrow\mathbb{N}$;  
or $A = \mathrm{poly}(q)$ and $\theta = \frac{1}{2q^2} > 0$ if $\mathfrak{L}$ satisfies~\ref{condition-color-2}.
\end{lemma}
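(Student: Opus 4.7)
The lemma converts the two list-coloring hypotheses into the multiplicative ratio form required to invoke \Cref{theorem-general-ratio} via \Cref{condition-SSM-ratio}. My plan is to first establish an additive strong spatial mixing bound of the form $\DTV{\mu_{v,\I}^\sigma}{\mu_{v,\I}^\tau}\le C(q)\eta^\ell$ with a contraction factor $\eta<1$ depending on the regime, and then convert it into the target ratio form via \Cref{proposition-SSM}.

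For the additive bound, under regime~\ref{condition-color-1} the hypothesis $|L(v)|\ge 2\deg_G(v)$ is the standard regime in which SSM for list coloring is known: a Jerrum-style coupling / recursive coupling on the self-avoiding walk tree yields $\eta=1-\Omega(1/q)$ with polynomial prefactor $C(q)=\mathrm{poly}(q)$. Under regime~\ref{condition-color-2}, the stronger assumption $|L(v)|\ge \Delta^2-\Delta+2$ leaves a slack $|L(v)|-\deg_G(v)\ge \Delta(\Delta-1)+2$ at every vertex; exploiting this slack in the recursive coupling gives a much faster decay of order $\eta^\ell\lesssim \Delta^{-\ell}(1-1/q^2)^\ell$. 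In both cases the per-step contraction comes from the ratio between the number of colors ruled out by already-fixed neighbors and the list size at the current vertex along a branch of the computation tree.

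For the conversion step, I would apply \Cref{proposition-SSM} to $\delta(\cdot)=C(q)\eta^{(\cdot)}$, giving
\[
\Bigl|\frac{\mu_{v,\I}^\sigma(a)}{\mu_{v,\I}^\tau(a)}-1\Bigr|\ \le\ 10q\,|S_{\lfloor\ell/2\rfloor}(v)|\cdot C(q)\,\eta^{\lfloor\ell/2\rfloor}.
\]
Using $|S_{\lfloor\ell/2\rfloor}(v)|\le s(\lfloor\ell/2\rfloor)=\exp(o(\ell))$ in regime~\ref{condition-color-1} and $|S_{\lfloor\ell/2\rfloor}(v)|\le \Delta^{\lfloor\ell/2\rfloor}$ in regime~\ref{condition-color-2}, and similarly bounding $|S_\ell(v)|$, a direct computation shows that for $\ell=\Omega(q\log q)$ (the threshold needed to dominate $\log C(q)$ and the $10q$ factor) the right-hand side is at most $A\,\mathrm{e}^{-\theta\ell}/|S_\ell(v)|$ with $A=A(q,s)$, $\theta=1/(2q)$ in regime~\ref{condition-color-1}, and $A=\mathrm{poly}(q)$, $\theta=1/(2q^2)$ in regime~\ref{condition-color-2}.

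The main obstacle is establishing in regime~\ref{condition-color-2} an additive SSM rate strong enough to beat $|S_{\lfloor\ell/2\rfloor}(v)|\cdot|S_\ell(v)|\le \Delta^{3\ell/2}$ that appears after the conversion. A naive path-coupling gives only contraction $1-1/q$ per step, which is insufficient. The required $\Delta^{-\Omega(\ell)}$ decay relies on the full $\Delta(\Delta-1)+2$ slack and on aggregating contraction along every branch of the computation tree, rather than only along a single path. Once this rate is in hand, the conversion via \Cref{proposition-SSM} and the final choice $\ell=\Omega(q\log q)$ absorbing all polynomial prefactors are routine bookkeeping.
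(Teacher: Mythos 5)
Your plan for regime~\ref{condition-color-1} is workable (sub-exponential growth absorbs both sphere factors after the additive-to-ratio conversion, though you may only recover some $\theta=\Omega(1/q)$ rather than exactly $1/(2q)$), but in regime~\ref{condition-color-2} the proposal has a genuine gap that your own "main obstacle" paragraph does not close. Going through \Cref{proposition-SSM} costs you twice: the effective distance is halved and an extra factor $|S_{\lfloor\ell/2\rfloor}(v)|$ appears. To reach the target $A\mathrm{e}^{-\theta\ell}/|S_\ell(v)|$ on a general bounded-degree graph you therefore need
\begin{align*}
10q\,C(q)\,|S_{\lfloor\ell/2\rfloor}(v)|\,\eta^{\lfloor\ell/2\rfloor}\ \le\ \frac{A\mathrm{e}^{-\theta\ell}}{|S_\ell(v)|},
\qquad\text{i.e.}\qquad \eta\ \lesssim\ \Delta^{-3},
\end{align*}
a per-step additive contraction of order $\Delta^{-3}$. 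The rate you claim to extract from the slack $|L(v)|-\deg_G(v)\ge(\Delta-1)^2+1$, namely $\eta^\ell\lesssim\Delta^{-\ell}(1-1/q^2)^\ell$, i.e.\ $\eta\approx\Delta^{-1}$, is the natural (and essentially best known) rate under \Cref{condition-list-coloring}; plugging it in leaves you short by a factor of roughly $\Delta^{\ell}$ (the $|S_\ell(v)|$ in the denominator of the target is never beaten). Obtaining $\eta\approx\Delta^{-3}$ would require lists of size on the order of $\Delta^4$, not $\Delta^2-\Delta+2$, so the conversion route cannot succeed in this regime.

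The paper avoids the conversion altogether: it invokes the Gamarnik--Katz--Misra bound (\Cref{proposition-single-instance}) which is already in multiplicative/ratio form, $\bigl|\mu^\sigma_{v,\I}(a)/\mu^\tau_{v,\I}(a)-1\bigr|\le B\chi^\ell$ with $\chi=\max_u\frac{\deg_G(u)-1}{|L(u)|-\deg_G(u)}\le\frac{\Delta-1}{(\Delta-1)^2+1}$ under \Cref{condition-list-coloring}. Then only the single factor $|S_\ell(v)|\le\Delta(\Delta-1)^{\ell-1}$ has to be absorbed, and it is, because $(\Delta-1)\chi=\frac{(\Delta-1)^2}{(\Delta-1)^2+1}<1$, giving $\mathrm{e}^{-\Omega(\ell/q^2)}$ exactly as the lemma claims; regime~\ref{condition-color-1} is handled by the same ratio bound with $\chi\le(q-1)/q$ plus sub-exponential growth. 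To repair your proof you should likewise work with a multiplicative SSM statement directly (or prove an additive bound at full distance $\ell$ with rate genuinely below $1/\Delta$ per step, which \Cref{condition-list-coloring} does not support), rather than routing through \Cref{proposition-SSM}.
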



\Cref{theorem-general-ratio} together with Lemma~\ref{lemma-coloring-to-real} proves the remaining two cases in \Cref{corollary-main-coloring}.
We take a sufficiently large $\ell^*$ such that $\ell^* = \Omega(q \log q)$ and $A\mathrm{e}^{-\theta \ell^*} \leq \frac{1}{5}$. 
By Lemma~\ref{lemma-coloring-to-real}, instances of $\mathfrak{L}$ satisfy Condition~\ref{condition-SSM-ratio} with this $\ell^* \geq 2$.
Thus the perfect sampler exists due to \Cref{theorem-general-ratio}.
Note that $\ell^*$ depends only on $q$ and the function $s$, and for any instance  $\I \in \mathfrak{L}$, the maximum degree $\Delta \leq q$. 
Thus, the expected running time of our algorithm is $n \cdot q^{O(q^{\ell^*})} = O(n)$.
Furthermore, if $\mathfrak{L}$ satisfies~\ref{condition-color-2}, then $\ell^* = \Theta(q^2\log q)$, thus the expected running time is $n \cdot \exp(\exp(\mathrm{poly}(q)))$.


\subsubsection{The multiplicative SSM of list coloring (proof of \texorpdfstring{\Cref{lemma-coloring-to-real}}{\texttwoinferior})}
In \cite[Theorem~3]{GKM15},  Gamarnik, Katz, and Misra established the best known strong spatial mixing result for list colorings in bounded degree graphs.
This is almost what we need, except that we want to control the decay rate under conditions~\ref{condition-color-1} and ~\ref{condition-color-2}.
Going through the proof of \cite[Theorem~3]{GKM15} and keeping track of the decay rate, 
we have the proposition below.
The similar analysis technique are also used in~\cite{liu2019deterministic}.

\begin{proposition}[\cite{GKM15}]
\label{proposition-single-instance}
Let $\I=(G, [q], \List)$ be a list coloring instance, where $G=(V, E)$. Assume that $\I$ satisfies $|L(v)| \geq \deg_G(v)+ 1$ for all $v \in V$. Suppose  
\begin{align*}
\max_{u \in V} \frac{\deg_G(u) - 1 }{|L(u)| - \deg_G(u)} \leq \chi < 1.   
\end{align*}
Then for any  $\Lambda \subseteq V$, any vertex $v \in V \setminus \Lambda$, and any two partial colorings $\sigma, \tau \in [q]^\Lambda$
satisfying $\ell \triangleq \min\{\dist_G(v, u) \mid u \in \Lambda,\ \sigma(u)\neq\tau(u)\} = \Omega(\frac{\log q}{\log(1/\chi)})$, it holds that
\begin{align*}
\forall a \in [q]:\quad
\left\vert \frac{\mu^\sigma_{v,\I}(a)}{\mu^\tau_{v, \I}(a)} - 1 \right\vert 	\leq B\chi^{\ell}, \quad\text{(with convention $0/0 = 1$)}
\end{align*}
where $B = \mathrm{poly}(q/\chi)$ depends only on $q$ and $\chi$.
\end{proposition}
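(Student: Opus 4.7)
The plan is to prove this via the coupling-plus-recursion machinery of Gamarnik--Katz--Misra, carefully tracking the contraction rate so that the exponent is exactly $\chi^\ell$, and then converting the resulting additive bound on $|\mu_{v,\I}^\sigma(a) - \mu_{v,\I}^\tau(a)|$ into the multiplicative form on $\mu_{v,\I}^\sigma(a)/\mu_{v,\I}^\tau(a)$.

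First, I would set up a recursion on marginal probabilities anchored at $v$. Using Bayes' rule and the conditional independence \Cref{property-cond-ind}, the marginal $\mu_{v,\I}^\sigma(a)$ can be written as a rational function of the marginals at the neighbors of $v$ in a pinned subinstance. Iterating this unravels the computation into a tree-like recursion of depth $\ell$, whose leaves carry the boundary data $\sigma$ (or $\tau$). Because $|L(u)| \geq \deg_G(u)+1$ at every vertex, every ``allowed'' color retains strictly positive conditional marginal all the way down the recursion, so every quantity we need to manipulate is well-defined.

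Second, I would prove a one-step contraction under a suitable potential $\Phi$ on pairs of marginal vectors (the natural choice is the log-ratio / weighted $L^1$ metric used in GKM, rather than naive total variation, since only such a tailored metric gives contraction rate exactly $\chi$). The combinatorial content is that a single level of the recursion at a vertex $u$ combines at most $\deg_G(u)-1$ child discrepancies, and the hypothesis $\max_u \frac{\deg_G(u)-1}{|L(u)|-\deg_G(u)} \leq \chi$ is exactly what forces the resulting amplification under $\Phi$ to be at most $\chi$. Composing $\ell$ levels then yields an additive bound $|\mu_{v,\I}^\sigma(a) - \mu_{v,\I}^\tau(a)| \leq C \chi^\ell$ with $C = \mathrm{poly}(q)$.

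Third, I would upgrade the additive bound to the multiplicative form required. The hypothesis $|L(v)| \geq \deg_G(v)+1$ together with a standard greedy extension argument gives a lower bound $\mu_{v,\I}^\tau(a) \geq \gamma/q$ for some $\gamma = \gamma(\chi) > 0$ whenever the marginal is nonzero (and the $0/0$ case reduces to the permissiveness check, as in the proof of \Cref{theorem-general}). Dividing the additive bound by this lower bound turns $C\chi^\ell$ into $(Cq/\gamma)\chi^\ell$, and the constraint $\ell = \Omega(\log q / \log(1/\chi))$ is precisely what absorbs the extra $\mathrm{poly}(q)$ prefactor into the stated $B = \mathrm{poly}(q/\chi)$.

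The main obstacle will be the sharp tracking of the contraction constant. A crude union bound over the $\deg_G(u)-1$ children yields a rate scaling with $(\deg_G(u)-1)/(|L(u)|-\deg_G(u))$ only after we pass through the right nonlinear potential; picking this potential correctly is the technical core that makes the hypothesis on $\chi$ yield exact rate $\chi$ rather than some weaker derived quantity. A secondary subtlety is verifying that the lower bound $\mu_{v,\I}^\tau(a) \geq \gamma/q$ does not deteriorate as we descend the recursion --- this is where the margin $|L(u)| \geq \deg_G(u)+1$ (strict inequality) is essential, ensuring enough ``slack'' at every pinned subinstance to greedily extend any partial coloring.
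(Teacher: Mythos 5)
There is a genuine gap, and it sits in your third step. The paper does not actually prove \Cref{proposition-single-instance}; it imports it from \cite[Theorem~3]{GKM15}, noting only that one must re-trace that proof while bookkeeping the decay rate. The GKM analysis (and its use in \cite{liu2019deterministic}) runs the contraction argument \emph{directly on ratios (log-ratios) of marginals}, so the multiplicative bound $\left\vert \mu^\sigma_{v,\I}(a)/\mu^\tau_{v,\I}(a) - 1\right\vert \le B\chi^{\ell}$ comes out natively. Your plan instead derives an additive bound $\left\vert \mu^\sigma_{v,\I}(a) - \mu^\tau_{v,\I}(a)\right\vert \le C\chi^{\ell}$ and then divides by a claimed uniform lower bound $\mu^\tau_{v,\I}(a) \ge \gamma/q$ with $\gamma = \gamma(\chi)$ on positive marginals. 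That lower bound is false under the stated hypotheses, and the conversion cannot yield $B = \mathrm{poly}(q/\chi)$. Concretely, take $v$ with $L(v) = [q]$ and $d$ neighbors $u_1,\dots,u_d$, each $u_i$ of degree $1$ with $L(u_i) = \{a, b_i\}$ where the $b_i$ are colors outside $L(v)\setminus\{a\}$'s conflicts; then $|L(u_i)| = \deg(u_i)+1$, $\frac{\deg(u_i)-1}{|L(u_i)|-\deg(u_i)} = 0$, and $\frac{d-1}{q-d}\le\chi$ is satisfied for $d = \Theta(q)$, yet $\mu_{v,\I}(a) = \frac{1}{(q-1)2^{d}+1}$, which is exponentially small in $q$ while strictly positive. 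So positive (conditional) marginals are only bounded below by quantities exponentially small in $q$, and your additive-to-multiplicative step inflates $C$ by an exponential factor that cannot be absorbed into $B = \mathrm{poly}(q/\chi)$ nor into the requirement $\ell = \Omega(\log q/\log(1/\chi))$ (which must hold for all larger $\ell$ with the same $B$).

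Your first two steps are consistent in spirit with the GKM route (a recursion anchored at $v$, one-step contraction with rate governed by $\max_u \frac{\deg_G(u)-1}{|L(u)|-\deg_G(u)} \le \chi$, composed over $\ell$ levels), but as written they are only a sketch: the ``suitable potential $\Phi$'' is exactly the technical heart of \cite{GKM15}, and the correct choice already measures discrepancies multiplicatively, which is what makes the final statement ratio-form without any marginal lower bound. To repair your argument, replace step 3 entirely: prove the contraction for the quantity $\max_{a}\left\vert \log\bigl(\mu^\sigma_{v,\I}(a)/\mu^\tau_{v,\I}(a)\bigr)\right\vert$ (with the $0/0$ convention handled by the permissiveness argument you mention, which is fine), and then the condition $\ell = \Omega(\log q/\log(1/\chi))$ is used only to pass from the log-ratio bound to the stated form with $B = \mathrm{poly}(q/\chi)$, not to absorb any marginal lower bound.
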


\begin{proof}[Proof of Lemma~\ref{lemma-coloring-to-real}]
Fix a instance $\I = (G, [q],\List) \in \mathfrak{L}$, where $G=(V,E)$.
Suppose $\mathfrak{L}$ satisfies Condition in~\ref{condition-color-1}.
We have
\begin{align*}
\max_{u \in V} \frac{\deg_G(u) - 1 }{|L(u)| - \deg_G(u)} \leq \max_{u \in V} \frac{\deg_G(u) - 1 }{ \deg_G(u)} = \frac{\Delta-1}{\Delta} \leq \frac{q-1}{q}.
\end{align*}
The $\chi$ and $B$ in \Cref{proposition-single-instance} can be set as $\chi = \frac{q-1}{q}$ and $B = \mathrm{poly}(q/\chi) \leq B_{\max} = \mathrm{poly}(q)$.
Then
for any subset $\Lambda \subseteq V$, any vertex $v \in V \setminus \Lambda$, any two colorings $\sigma,\tau \in [q]^\Lambda$ that disagree on $D \subseteq \Lambda$ satisfying $\ell \triangleq \min\{\dist_G(u,v)\mid u \in D\} = \Omega(\frac{\log q}{\log 1/\chi}) = \Omega(q\log q)$, it holds that
\begin{align*}
  \forall a \in [q]:\quad
  \left\vert \frac{\mu^\sigma_{v,\I}(a)}{\mu_{v,\I}^\tau(a)}-1   \right\vert	
  \leq B_{\max}\chi^{\ell}
  \leq B_{\max}\cdot\frac{ \left\vert S_\ell(v)\right\vert}{\left\vert S_\ell(v)\right\vert}\cdot\chi^{\ell}.
\end{align*}
Since $G$ has sub-exponential growth, 
we have that $\left\vert S_\ell(v)\right\vert \leq s(\ell) = \exp(o(\ell))$.
Thus,
\begin{align*}
  \forall a \in [q]:\quad
  \left\vert \frac{\mu^\sigma_{v,\I}(a)}{\mu_{v,\I}^\tau(a)}-1   \right\vert 
  \leq B_{\max}\cdot\frac{s(\ell) }{\left\vert S_\ell(v)\right\vert}\cdot\left(\frac{q-1}{q} \right)^{\ell} 
  \leq \frac{s(\ell)B_{\max} }{\left\vert S_\ell(v)\right\vert}\cdot\mathrm{e}^{-\ell/q}
  \leq \frac{ A\mathrm{e}^{-\theta \ell}}{\left\vert S_\ell(v)\right\vert},
\end{align*}
for some $A=A(q, s) > 0$ and $\theta =\frac{1}{2q} > 0$.

Suppose $\mathfrak{L}$ satisfies~\ref{condition-color-2}.
Recall that $\Delta$ is the maximum degree of graph $G$.
we have
\begin{align*}
\max_{u \in V} \frac{\deg_G(u) - 1 }{|L(u)| - \deg_G(u)} \leq \frac{\Delta-1}{(\Delta-1)^2 + 1}.
\end{align*}
The $\chi$ and $B$ in \Cref{proposition-single-instance} can be set as $\chi = \frac{\Delta-1}{(\Delta-1)^2 + 1}$ and $B = \mathrm{poly}(q/\chi)$. Thus $1/\chi \leq \Delta^2 \leq q^2$. We have $B = \mathrm{poly}(q/\chi) \leq  B_{\max}=\mathrm{poly}(q)$.
For any subset $\Lambda \subseteq V$, any vertex $v \in V \setminus \Lambda$, any two colorings $\sigma,\tau \in [q]^\Lambda$ that disagree on $D \subseteq \Lambda$ satisfying $\ell \triangleq \min\{\dist_G(u,v)\mid u \in D\} = \Omega(\frac{\log q}{\log 1/\chi}) = \Omega(\log q)$, it holds that
\begin{align*}
  \forall a \in [q]:\quad
  \left\vert \frac{\mu^\sigma_{v,\I}(a)}{\mu_{v,\I}^\tau(a)}-1   \right\vert	
  \leq B_{\max}\chi^{\ell}
  \leq B_{\max}\cdot\frac{ \Delta (\Delta - 1)^{\ell-1}}{\left\vert S_\ell(v)\right\vert}\cdot\chi ^{\ell},
\end{align*}
where the last inequality due to $|S_{\ell}(v)| \leq \Delta (\Delta - 1)^{\ell-1}$.
Since  $\chi = \frac{\Delta-1}{(\Delta-1)^2 + 1}$, we have
\begin{align*}
  \forall a \in [q]:\quad
  \left\vert \frac{\mu^\sigma_{v,\I}(a)}{\mu_{v,\I}^\tau(a)}-1   \right\vert 
  &\leq \frac{B_{\max}\Delta}{\Delta-1}\cdot\frac{1}{\left\vert S_\ell(v)\right\vert} \cdot\left(\frac{(\Delta-1)^2}{(\Delta-1)^2 + 1} \right)^{\ell}\\
 (\text{by } \Delta \leq q)\quad&\leq\frac{2B_{\max}}{\left\vert S_\ell(v)\right\vert} \cdot\left(\frac{(q-1)^2}{(q-1)^2 + 1} \right)^{\ell}\\
  &\leq \frac{2B_{\max}}{\left\vert S_\ell(v)\right\vert} \cdot \mathrm{e}^{-\frac{\ell}{2q^2}} = \frac{A\mathrm{e}^{-\theta \ell}}{\left\vert S_\ell(v)\right\vert},
\end{align*}
where $A = 2B_{\max} = \mathrm{poly}(q)$ and $\theta = \frac{1}{2q^2} > 0$.
\end{proof}

\subsection{The monomer-dimer model}
We now prove \Cref{corollary-main-matching}.
We first present the monomer-dimer model instance as a spin system instance, then we use \Cref{theorem-sub-graph} to prove~\Cref{corollary-main-matching}.

Given a graph $G = (V, E)$, we use $G^{*}=(V^*,E^*) = \mathrm{Lin}(G)$ to denote the \emph{line graph} of $G$. Each vertex $v_e \in V^*$ in line graph $G^*$ represents an  edge $e \in E$ in the original graph $G$, and two vertices $v_e,v_{e'}$ in $G^*$ are adjacent if and only if $e$ and $e'$ share a vertex in $G$. We call $S \subseteq V^*$ an \emph{independent set} in $G^*$ if no two vertices in $S$ are adjacent in $G^*$. It is easy to verify that there is a one-to-one correspondence between the matchings in $G$ and the independent sets in $G^*$.

Given a monomer-dimer model instance $\I = (G, \lambda)$, we define a \emph{hardcore model} instance $\I^*=(G^*,\lambda)$ in the line graph $G^* = \mathrm{Lin}(G)$.  Each independent set $S$ in $G^*$ is assigned a weight $w_{\I^*}(S) = \lambda^{|S|}$. Let $\mu_{\I^*}$ be a distribution over all independent sets in $G^*$ such that $\mu_{\I^*}(S) \propto w_{\I^*}(S)$. Hence, $\I^*$ is a spin system instance and $\I^*$ is permissive.
Besides, if we can sample independent sets from $\mu_{\I^*}$, then we can sample matchings from $\mu_{\I}$.

Suppose the class of monomer-dimer model instances $\mathfrak{M}$ satisfies the condition in \Cref{corollary-main-matching}.  
Then, there exist a constant $C$ and a function $s:\mathbb{N}\to \mathbb{N}$ with $s(\ell) = \exp(o(\ell))$ such that for all $\I = (G,\lambda) \in \mathfrak{M}$, $\lambda \leq C = O(1)$, $|S_\ell(v)| \leq s(\ell) = \exp(o(\ell))$ for all $v \in V$ and $\ell \geq 0$, and $\Delta_G \leq s(1) = O(1)$.
Thus, $\mathfrak{M}$ exhibits strong spatial mixing with exponential decay with constants $\alpha = \alpha(C,s)>0$ and $\beta = \beta(C,s)>0$~\cite{bayati2007simple,song2016counting}.
Observe that if $e_1,e_2,\ldots,e_\ell$ is a path of edges in $G$, then $v_{e_1},v_{e_2},\ldots,v_{e_\ell}$ is a path of vertices in $G^*$, and vice versa.
Hence, the following results hold for the class of hardcore instances $\mathfrak{H} = \{\I^*=(G^*,\lambda) \mid \I \in \mathfrak{M}\}$.
\begin{itemize}
\item  The class of hardcore instances $\mathfrak{H}$  exhibits strong spatial mixing with exponential decay with constants $\alpha' = \alpha'(C,s)>0$ and $\beta = \beta(C,s)>0$.
\item for any instance $(G^*,\lambda) \in \mathfrak{H}$, the graph $G^*$  has sub-exponential growth. Suppose $G^* = (V^*,E^*)$ is the line graph of $G=(V,E)$. For all $ e = \{u,v\} \in E$, $\ell \geq 1$, it holds that $|S^*_{\ell}(v_e)| \leq \Delta_G(|S_{\ell - 1}(u)| + |S_{\ell - 1}(v)|) \leq 2s(1)s(\ell - 1) = \exp(o(\ell))$, where  $v_e \in V^*$ represents the edge $e$, $S^*_{\ell}(v_e)$ is the sphere of radius $\ell$ centered at $v_e$ in $G^*$, $S_{\ell-1}(v)$ is the sphere of radius $\ell-1$ centered at $v$ in $G$.
\end{itemize}
Note that the number of vertices in $G^*$ is at most $n\Delta_G = O(n)$, where $n$ is number of vertices in $G$.
\Cref{corollary-main-matching} is a corollary of \Cref{theorem-sub-graph}.

\section{Dynamic sampling}
\label{section-dynamic}
In this section, we use our algorithm to solve the dynamic sampling problem~\cite{FVY19,feng2019dynamicMCMC}.
 In this problem, the Gibbs distribution itself changes dynamically and the algorithm needs to maintain a random sample efficiently with respect to the current Gibbs distribution.
 
We first define the update for the spin system instance.
Let $\I=(G,[q],\boldsymbol{b},\boldsymbol{A})$ be a spin system instance, where $G = (V, E)$.
\begin{itemize}
\item updates for vertices: modifying the vector $b_v$ of vertex $v \in V$;
\item updates for edges: modifying the matrix $A_e$ of edge $e \in E$; or adding new edge $e \notin E$.	
\end{itemize}
We use  $(D_V, D_E, \+C)$ to denote the update for instance $\I$, where $D_V \subseteq V$, $D_E \subseteq \{\{u,v\}\mid u,v \in V, u\neq v\}$, and $\+C = (b_v)_{v \in D_V} \cup (A_e)_{e \in D_E}$. For each $v \in D_V$, we modify its vector to $b_v \in \+C$, and for each $e \in D_E$, we either add the new edge $e$ with matrix $A_e \in \+C$ (if $e \notin E$), or modify its matrix to $A_e \in \+C$ (if $e \in E$).

\begin{definition}[dynamic sampling problem]
\label{definition-dynamic}
Given a spin system instance $\I=(G,[q],\boldsymbol{b},\boldsymbol{A})$ where $G=(V,E)$, a random sample $\X \in [q]^V$ such that $\X \sim \mu_{\I}$, and an update  $(D_V, D_E, \+C)$ that modifies the instance $\I$ to an updated instance $\I'= (G',[q],\boldsymbol{b}',\boldsymbol{A}')$ where $G'=(V, E')$, the algorithm updates $\X$ to a new sample $\X' \in [q]^V$ such that $\X' \sim \mu_{\I'}$.
\end{definition}

\begin{theorem}
\label{theorem-dynamic}
Let $\mathfrak{I}$ be a class of permissive spin systems satisfying Condition~\ref{condition-SSM-ratio}.
There exists an algorithm such that if the updated instance $\I'= (G',[q],\boldsymbol{b}',\boldsymbol{A}') \in \mathfrak{I}$, then the algorithm  solves the dynamic sampling problem  within $\Delta(|D_V| + |D_E|) q^{O(\Delta^{\ell})}$ time in expectation, 
where $\Delta$ is the maximum degree of $G'$ and $\ell = \ell(q) \geq 2$ is determined by \Cref{condition-SSM-ratio}.
\end{theorem}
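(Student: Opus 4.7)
The plan is to reduce the dynamic problem to a localized run of \Cref{alg:perfect-sampler-gen} on the updated instance $\I'$, initialized from a small ``affected region'' so that the drift argument of \Cref{lemma-ET} gives a bound depending on $|D_V|+|D_E|$ rather than $n$. Let $U \triangleq D_V \cup \{u,v : \{u,v\}\in D_E\}$ be the set of vertices touched by the update (so $|U|\le |D_V|+2|D_E|$), and set $\R_0 \triangleq U \cup \partial U$, whose size is $O(\Delta(|D_V|+|D_E|))$. The first key observation is that the old sample $\X \sim \mu_\I$ already satisfies \Cref{condition-invariant} with respect to $\I'$ at $\R_0$. Indeed, since every vertex in $D_V$ and every endpoint of an edge in $D_E$ lies in $U$, the conditional weight $w_\I^{X_U}(\cdot)$ on $V\setminus U$ (which only involves $b_v$ for $v\in V\setminus U$ and $A_e$ for edges touching $V\setminus U$, all of which are identical in $\I$ and $\I'$) coincides with $w_{\I'}^{X_U}(\cdot)$. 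Hence $\mu_\I^{X_U} = \mu_{\I'}^{X_U}$ on $V\setminus U$. Combining this with \Cref{property-cond-ind} applied to $\I'$ with $\partial U$ as separator yields $X_{V\setminus\R_0}\mid X_{\R_0}\sim \mu_{\I'}^{X_{\R_0}}$.

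To invoke \Cref{alg:perfect-sampler-gen} we need the starting configuration to be feasible for $\I'$, whereas the inherited $\X$ may violate newly added hard constraints on $U$. Since $\I'$ is permissive, $Z_{\I'}^{X_{\partial U}}>0$ guarantees that some feasible extension of $X_{\partial U}$ on $U$ exists, and it can be built greedily in $O(\Delta|U|)$ time using only the values $X_U$ and $X_{\partial U}$. This is precisely the point of enlarging $\R_0$ from $U$ to include $\partial U$: by the conditional independence $X_U\perp X_{V\setminus\R_0}\mid X_{\partial U}$ under $\mu_\I$ (another application of \Cref{property-cond-ind}), together with the fact that $\mu_{\I'}^{X_{\R_0}}$ restricted to $V\setminus \R_0$ depends on $X_{\R_0}$ only through $X_{\partial U}$ (same separator argument, now for $\I'$), rewriting $X_U$ using only information inside $\R_0$ leaves the conditional Gibbs property intact. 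We then run \Cref{alg:perfect-sampler-gen} from $(\X^*, \R_0)$, and \Cref{theorem-correctness} guarantees that upon termination the output is distributed as $\mu_{\I'}$.

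For the running time, the supermartingale argument of \Cref{lemma-ET} applies verbatim---its drift bound uses only \Cref{condition-SSM-weak}, which $\I'$ satisfies by hypothesis---yielding $\E{T} \le 5|\R_0| = O(\Delta(|D_V|+|D_E|))$. Multiplying by the per-iteration cost $q^{O(\Delta^\ell)}$ from \Cref{lemma-running-time-alg2} and adding the $O(\Delta|U|)$ greedy-repair overhead gives the claimed $\Delta(|D_V|+|D_E|)\cdot q^{O(\Delta^\ell)}$ bound. The main technical obstacle is the feasibility repair: a naive rewrite $X_U^* = g(X_U)$ that ignored $X_{\partial U}$ would destroy the conditional Gibbs property, because the induced conditional law of $X_{V\setminus \R_0}$ given the rewritten $X_U^*$ would be a mixture over the $g$-preimage rather than the correct $\mu_{\I'}^{X_{\partial U}}$. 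Enlarging $\R_0$ to include $\partial U$ is exactly what neutralizes this issue via the separator structure, and this enlargement is the source of the extra $\Delta$ factor in the bound relative to $|D_V|+|D_E|$.
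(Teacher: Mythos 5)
Your proposal is correct and follows essentially the same route as the paper: initialize $\R$ as the updated region $\+D$ plus its boundary $\partial\+D$, greedily repair feasibility on $\+D$ using only information inside $\R$, argue via conditional independence that the conditional Gibbs invariant (\Cref{condition-invariant}) holds for $\I'$, and then rerun the resampling loop of \Cref{alg:perfect-sampler-gen}, with the supermartingale drift argument of \Cref{lemma-ET} giving $\E{T}=O(|\R_0|)=O(\Delta(|D_V|+|D_E|))$ and \Cref{lemma-running-time-alg2} giving the per-iteration cost. The only cosmetic difference is that the paper invokes the inductive invariant (\Cref{lemma-detailed-invariant}) rather than \Cref{theorem-correctness} itself, since the latter is stated for the initialization $\R=V$; your argument implicitly does the same.
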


Suppose $q,\Delta,\ell = O(1)$. By \Cref{theorem-dynamic},
the running time of our algorithm is linear in the size of the update. 
Hence, the efficient dynamic sampling algorithm exists if  strong spatial mixing holds with a rate faster than the neighborhood growth.  
The relation between the spatial mixing property and the static sampling is well studied, we extend such relation further to the dynamic setting.

The dynamic sampling algorithm is given in~\Cref{dynamic-perfect-sampler}. 
\begin{algorithm}[ht]
\SetKwInOut{Input}{Input}
\Input{a spin system instance $\I=(G=(V,E), [q], \boldsymbol{b}, \boldsymbol{A})$, a random sample $\X \sim \mu_{\I}$, and an update  $(D_V, D_E, \+C)$ that modifies $\I$ to $\I' = (G'=(V,E'), [q], \boldsymbol{b}', \boldsymbol{A}')$.}
$\+D \gets D_V \cup \left( \bigcup_{e \in D_E} e\right)$ and $\partial \+D \gets \{v \in V \setminus \+D \mid \exists u \in \+D \text{ s.t. } \{u,v\}\in E' \}$\;
based on $X_{\partial \D}$, modify the partial configuration $X_{\D}$ so that $w_{\I'}(\X) > 0$\label{line-fix}\;
$\R \gets \+D \cup \partial \+D$ \;
\While{$\R \neq \emptyset$}{
	$(\X, \R) \gets \ReSample(\I', \X, \R)$\label{line-dynamic-resample}\; 
}
\Return{$\X$\;}
\caption{Dynamic perfect Gibbs sampler}\label{dynamic-perfect-sampler}
\end{algorithm}

\begin{algorithm}[htbp]
\SetKwInOut{Input}{Input}
\SetKwIF{withprob}{}{}{with probability}{do}{}{}{}
\Input{a spin system instance $\I=(G=(V,E), [q], \boldsymbol{b}, \boldsymbol{A})$, a configuration $\X \in [q]^V$, a non-empty subset $\+R \subseteq V$, and an integer parameter $\ell \geq 0$;}
pick a $u\in\R$ uniformly at random and let $B\gets (B_{\ell}(u)\setminus \R)\cup\{u\}$\;
  let $\mu_{\min}$ be the minimum value of $\mu_{u,\I}^{\sigma}(X_u)$ over all $\sigma\in[q]^{\partial B}$ that $\sigma_{\R\cap\partial B}=X_{\R\cap\partial B}$\;
\withprob{$\frac{\mu_{\min}}{\mu_{u,\I}^{X_{\partial B}}(X_u)}$}{

update $\X$ by redrawing $X_B\sim\mu_{B,\I}^{X_{\partial B}}$\;
	 $\mathcal{R} \gets\mathcal{R} \setminus \{u\}$\;
}
\Else{
	$\mathcal{R} \gets\mathcal{R} \cup \partial B$\;
}
\Return{$(\X,\R)$}
\caption{$\ReSample(\I, \X, \R)$}\label{alg:dynamic-resample}
\end{algorithm}

In \Cref{dynamic-perfect-sampler},
the set $\+D \subseteq V$ contains all the vertices incident to the update.
Note that the input $\X$ can be an infeasible configuration for $\I'$, i.e.~$w_{\I'}(\X) = 0$, because the configuration $X_{\D}$ may violate the new constraints in $\+C$. Hence, in Line~\ref{line-fix}, we modify the configuration $X_{\+D}$ so that $w_{\I'}(\X) > 0$. Given the $X_{\partial \+D}$, this step can be achieved by a simple greedy algorithm since $\I'$ is permissive. Then, we construct the initial $\+R$ as $\+D \cup \partial \+D$. In Line~\ref{line-dynamic-resample}, we call the subroutine $\ReSample$ on the updated instance $\I'$. 

Note that $\+R = \+D \cup \partial\+D$ and $\partial \+D$ separates $\+D$ from $V \setminus \R$ in both $G$ and $G'$.
In Line~\ref{line-fix}, we only modify the partial configuration $X_{\+D}$. 
Such modification only reviews the information of $\X$ in $\+D \cup \partial \+D$. 
Thus, after the modification, the $X_{V \setminus \+R}$ follows the distribution $\mu_{\I}^{X_{\+R}} = \mu_{V \setminus \+R,\I}^{X_{\partial \+D}}$ due to the conditional independence property. Since two instances $\I$ and $\I'$ differ only at the subset $\+D$, due to the conditional independence property, two distributions $\mu_{\I}^{X_{\+R}}$ and $\mu_{\I'}^{X_{\+R}}$ are identical. Thus, the initial $\X, \+R$ satisfies \Cref{condition-invariant} with respect to $\I'$, and $\X$ is a feasible configuration for $\I'$. In each iteration of the \while{} loop, we call the subroutine $\ReSample$ on $\I'$. 
	By the identical proof in \Cref{section-correctness}, the output $\X \sim \mu_{\I'}$.

Let $\Delta$ denote the maximum degree of graph $G'$.
Note that $|\+D| = O(|D_V| + |D_E|)$. The time complexity of the first three lines of \Cref{dynamic-perfect-sampler} is $O(\Delta |\+D|)$.  Note that the size of the initial $\+R$ is $O(\Delta |\+D|)$. The efficiency result in \Cref{theorem-dynamic} follows from the identical proof in \Cref{section-running-time}. 

\section*{Acknowledgement}
The research was supported by the National Key R\&D Program of China 2018YFB1003202 and NSFC under Grant Nos. 61722207 and 61672275. We thank Jingcheng Liu for helpful comments. Heng Guo and Yitong Yin want to thank the hospitality of the Simons Institute for the Theory of Computing at UC Berkeley, where part of the work was done.

\bibliographystyle{plain}
\bibliography{refs.bib}


\end{document}